\newcommand{\pp}{\mathbb{P}}
\newcommand{\ee}{\mathbb{E}}
\newcommand{\tsp}{\mathcal{T}}
\newcommand{\pf}{A}
\newcommand{\ch}{C}
\newcommand{\var}{\mathbb{V}}
\newcommand{\cov}{Cov}
\newcommand{\corr}{\rho}
\newcommand{\YHK}{y}
\newcommand{\PDA}{u}
\newcommand{\epf}{\hfill $\square$}
\newcommand{\pyule}{\mathbb{P}_{\YHK}}
\newcommand{\puni}{\mathbb{P}_{\PDA}}
\newcommand{\envelope}{(\raisebox{-.5pt}{\scalebox{1.45}{\Letter}}\kern-1.7pt)}
\begin{document}

\title{On joint subtree distributions under two evolutionary models
}

\author{Taoyang Wu \and Kwok Pui Choi}

\institute{
T. Wu 
\at School of Computing Sciences, University of East Anglia, Norwich, United Kingdom \\\email{taoyang.wu@uea.ac.uk, taoyang.wu@gmail.com}
\and
K.P. Choi
\at Department of Statistics and Applied Probability, and Department of Mathematics, National University of Singapore, Singapore 117546, \\\email{stackp@nus.edu.sg}
}



\date{\today}

\maketitle

\begin{abstract}
In population and evolutionary biology, hypotheses about micro-evolutionary and macro-evolutionary processes are commonly tested by comparing the shape indices of empirical evolutionary trees with those predicted by neutral  models. A key ingredient in this approach is the ability to compute and quantify distributions of various tree shape indices under random models of interest. As a step to meet this challenge,  in this paper we investigate the joint distribution of cherries and pitchforks (that is, subtrees with two and three leaves) under two widely used null models: the Yule-Harding-Kingman (YHK) model and the proportional to distinguishable arrangements (PDA) model. Based on two novel recursive formulae, we propose  a dynamic approach to numerically compute the exact joint distribution (and hence the marginal distributions) for trees of any size.
We also obtained insights into the statistical properties of trees generated under these two models,  including a constant correlation  between the cherry and the pitchfork distributions under the YHK model, the log-concavity and unimodality of cherry distributions under both models. In particular, we show the existence of a unique change point for cherry distribution between the two models,  that is,  there exists a critical value $\tau_n$ for each $n\geq 4$  such that the probability that a random tree with $n$ leaves  generated under the YHK model contains $k$ cherries is lower than that under the PDA model if $1<k< \tau_n$, and higher if $\tau_n<k\le n/2$. 

\bigskip

\keywords{ phylogenetic tree $\cdot$ subtree distribution $\cdot$ Yule-Harding-Kingman model $\cdot$ PDA model $\cdot$ tree indices
$\cdot$ joint distribution
}
\end{abstract}


\newpage
\section{Introduction}

In population and evolutionary biology, hypotheses about micro-evolutionary processes (e.g. population genetics) and macro-evolutionary processes (e.g. speciation and extinction) are commonly tested by comparing frequencies of the shapes of empirical evolutionary trees with those predicted by null models~\citep[see, e.g.][]{mooers97a, nordborg01a,blum2006random,purvis2011shape,hagen2015age}. One challenge in this approach is the ability to compute the distributions of various tree shape indices under the models of interest, which is needed in statistical testing for calculating the $p$-value of the empirical shape statistics or constructing a confidential interval. Even for some relatively simple null models,  this can still be a challenging task. Many current approaches are based on  approximating techniques, such as  Monte Carlo sampling~\citep[see, e.g.][]{blum2006random} or Gaussian approximation~\citep[see, e.g.][]{McKenzie2000}, which could be computationally intensive or restricting the tests to the trees above a certain size.  Therefore it is desirable to explore efficient ways of computing these distributions exactly. 


Two widely used null models for generating random trees in population and evolutionary biology are the Yule-Harding-Kingman (YHK) model~\citep{harding71a,yule25a,Kingman1982}  and the proportional to different  arrangements (PDA) model~\citep{Aldous2001}. 
Under the PDA model all rooted binary trees of the same size are chosen with the same probability \citep{Aldous2001} whilst under the YHK model each tree is chosen with a probability proportional to the number of total orderings that can be assigned to its internal nodes so that the relative partial ordering derived from the tree topology is preserved. 

In this paper, we are interested in the exact computation of the joint distribution for the number of subtrees under the YHK and PDA model.  Here a subtree, also known as a fringe subtree in ~\citet{aldous1991asymptotic}, consists of a node and all its descendants. More specifically, we study the distributions of the number of cherries, subtrees with two leaves, and that of pitchforks, subtrees with three leaves. Note that this is equivalent to study the joint distributions of $2$-pronged  and $3$-pronged nodes as defined in~\citep{rosenberg06a}, as well as the joint distributions of clades of size two and three as defined in~\citep{zhu11a}.

We now describe the contents of the rest of this paper. 
In the next section we gather some necessary notation and background.
In particular, we present a random tree generating process for realising both the YHK and PDA models as described in~\cite{McKenzie2000}. In contrast to the splitting model that were used in several previous studies concerning the asymptotical distributions of subtrees~\citep[see, e.g.][]{chang2010limit}, the process used here is  based on iteratively attaching leaves. We therefore also collect some observations on the change of the numbers of cherries and pitchforks in a tree when an additional leaf is attached.

In Sections~\ref{sec:YHK} and~\ref{sec:PDA} we study subtree distributions under the YHK and the PDA models, respectively. 
Our main results include two novel recursive formulae on the joint distributions of cherries and pitchforks; see Theorem~\ref{thm:yule:pf} for the one under the YHK model and Theorem~\ref{thm:pda:pf} for the one under the PDA model.  These recursions enable us to develop a dynamic approach to numerically compute the joint distributions, and hence also their marginal distributions, for trees of any size. As an example, in Fig.~\ref{fig:jd} we illustrate the distributions for trees with 200 leaves, which suggests the joint distributions under both models can be well-approximated by bivariate normal distributions. 

\begin{figure}
\centering
\begin{subfigure}{.5\textwidth}
  \centering
  \includegraphics[width=1.1\linewidth]{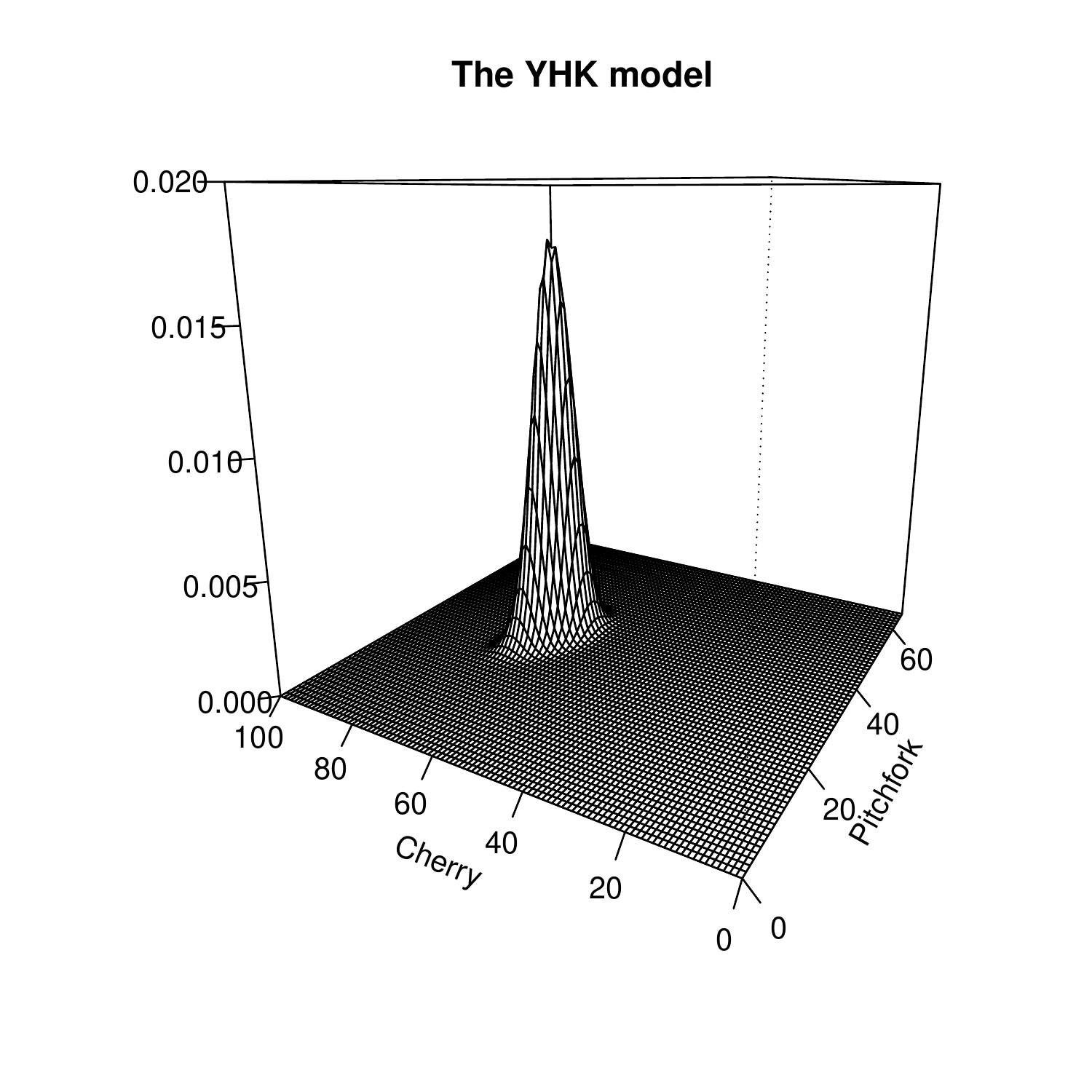}
\end{subfigure}%
\begin{subfigure}{.5\textwidth}
  \centering
  \includegraphics[width=1.1\linewidth]{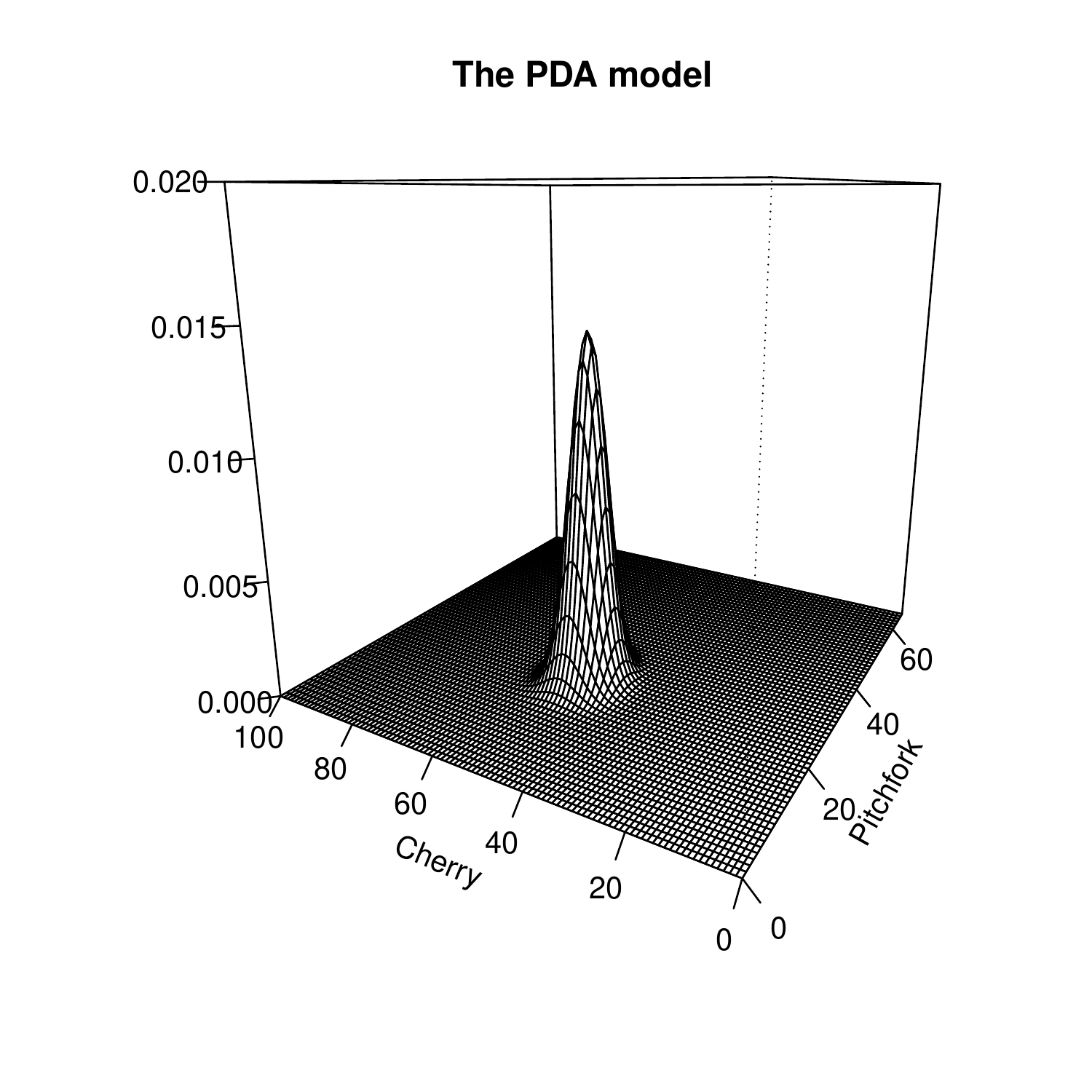}
\end{subfigure}
\caption{Probability density function for the joint distribution of cherries and pitchforks on phylogenetic trees with 200 leaves.}
\label{fig:jd}
\end{figure}

Rewritten in functional forms,  the recursions also provide a way to compute  the covariance and correlation of the joint distributions under these two models. Somewhat surprisingly, we find  that under the YHK model the correlation between the cherry and the pitchfork distributions is a constant $-\sqrt{14/69}$, which is independent of the  number of leaves (see, e.g., Corollary~\ref{cor:yule:cor}).  In addition, the recursions also lead to  an  alternative, and arguably more unified approach to computing the moments of these two distributions, and we demonstrate this by reaffirming several results obtained in previous studies. 

Using the recursions on cherry distribution derived from the joint distribution, we obtain in Theorem~\ref{ch:pda} the exact formula for cherry distribution under the PDA model, and derive some interesting properties concerning cherry distribution in general, including that this distribution is log-concave and hence unimodal under both models (see Theorems~\ref{thm:lc:yule} and~\ref{thm:lc:pda}).

In Section~\ref{sec:comp} we present a comparative study of cherry and pitchfork distributions under the YHK and PDA models.
We first compare the mean and the variance of these two distributions under these two models. Then we show in Theorem~\ref{thm:change}  that  there exists a unique change point  when comparing cherry distribution, that is,  there exists a critical value $\tau_n$ for each $n\geq 4$  such that the probability that a random tree with $n$ leaves  generated under the YHK model contains $k$ cherries is lower than that under the PDA model if $1<k< \tau_n$, and higher if $\tau_n<k\le n/2$.  
Finally, we conclude in Section~\ref{sec:discussion} with discussions and some open problems.



\newpage
\section{Preliminaries}
\label{sec:pre}

For later use, we present in this section some basic notation and results  concerning phylogenetic trees. Throughout this article, $X$ denotes a finite set with $|X|=n \geq 2$.

\bigskip
\noindent
{\bf Phylogenetic trees} 
A {\em phylogenetic tree} $T=(V(T),E(T))$ on $X$ is a rooted tree with leaf set $L(T)=X$ such that the root has one child whilst all other vertices have either zero or two children (see Fig.~\ref{fig:tree} for an example).
Note that in this paper phylogenetic trees are rooted, with their edges directed away from the root. In addition, 
for technical simplicity we assume without loss of generality that the root has one child~(also referred to as planted phylogenetic trees by~\citet{baroni2005bounding}).
Let $E^*(T)$ be the set of pendant edges in $T$, i.e., those edges incident with a leaf. Then we have $|E(T)|=2n-1$ and $|E^*(T)|=n$.

\begin{figure}[h]
\begin{center}
{\includegraphics[scale=0.8]{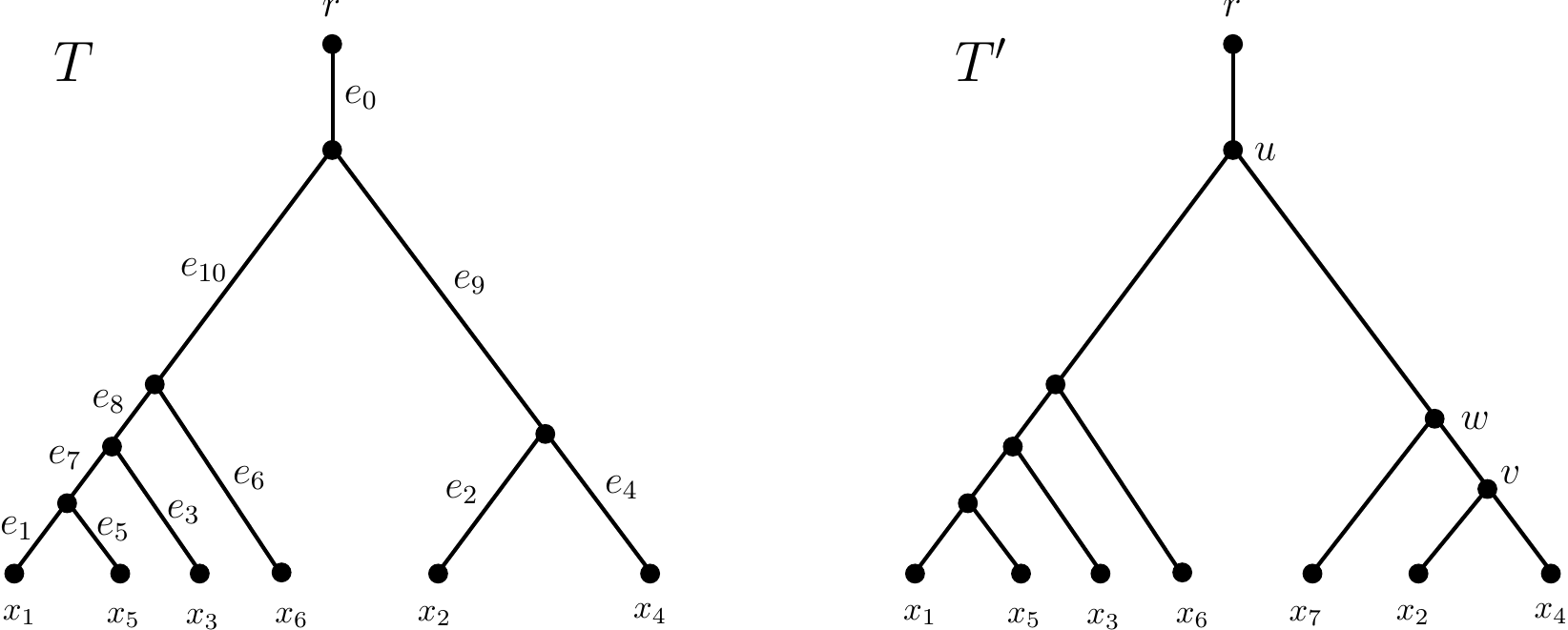}}
\end{center}
\caption{Examples of phylogenetic trees.  $T$ is a phylogenetic tree on $X=\{x_1,\dots, x_6\}$, and $T'=T[e_9;x_7]$ is a phylogenetic tree on $\{x_1,\dots,x_7\}$ that is obtained from $T$ by attaching the leaf labelled $x_7$ to edge $e_9$. Here the directions of all edges are directed away from the root $r$, and hence omitted for simplicity.
}
\label{fig:tree}
\end{figure}

Let $e$ be an edge in a phylogenetic tree $T$. The tree consisting of $e$ and all edges below $e$ is called a {\em subtree}  of $T$, and is denoted by $T(e)$. In particular,  a {\em cherry} is a subtree with two leaves, and a {\em pitchfork} is a subtree with three leaves.
 The number of cherries and pitchforks contained in $T$ are  denoted by $\ch(T)$ and $\pf(T)$, respectively.
Note first that we always have $1\leq \ch(T)\leq n/2$ and $0\leq \pf(T)\leq n/3$. Moreover, in our definition a cherry contains three edges and a pitchfork contains five edges. As an example, for the tree $T$ depicted in Fig.~\ref{fig:tree} we have $C(T)=2$ and $A(T)=1$. In addition, $T(e_8)$ is a pitchfork with edge set $\{e_1,e_3,e_5,e_7,e_8\}$, and $T(e_7)$ is a cherry with edge set  $\{e_1,e_5,e_7\}$. Finally, $\ch(T)$ and $\pf(T)$ are respectively equal to the number of $2$-pronged nodes and $3$-pronged nodes contained in $T$ (see~\cite{rosenberg06a} for the definitions of $r$-pronged nodes).

Given an edge $e$ in a phylogenetic tree $T$ and a taxon $x_0\not \in L(T)$, let $T[e;x_0]$ be the phylogenetic tree obtained from $T$ by attaching a new leaf labelled with $x_0$ to the edge $e$. Formally, let $e=\{u,v\}$ and let $w$ be a vertex not contained in $V(T)$, then $T[e;x_0]$ has vertex set $V(T)\cup \{x_0,w\}$ and edge set $\big(E(T)\setminus \{e\} \big) \cup \{(u,w),(v,w),(w,x_0)\}$ (see Fig.~\ref{fig:tree} for an illustration of this construction).
When the labelling of the new leaf is clear from the context, $T[e;x_0]$ is abbreviated to $T[e]$.

\bigskip
\noindent
{\bf The YHK and the PDA model}
In this subsection, we present a formal definition of the two null models investigated in this paper:  the {\it proportional to distinguishable arrangements} (PDA) model and the {\it Yule--Harding--Kingman} (YHK) model. In contrast to the splitting process used by~\cite{Aldous2001} to accommodate the two models, the random process used  here is based on iteratively attaching leaves.


%

Under the Yule--Harding model~\citep{harding71a, yule25a}, 
a rooted phylogenetic tree on $X$ is generated  as follows.
 Beginning with the tree with two leaves, we ``grow'' it by repeatedly uniformly sampling a pendant edge $e$ in the current tree $T_{cur}$ 
  and replace $T_{cur}$ by $T_{cur}[e]$. 
This process continues until a binary tree with $n$ leaves is obtained. Finally, we label each of its leaves with a label sampled randomly uniformly (without replacement) from $\{x_1,\dots,x_n\}$.
 When branch lengths are ignored, the Yule--Harding model is shown by~\citet{aldous96a} to be equivalent to the trees generated by the coalescent process in population genetics introduced by \citet{Kingman1982}, and so we call it the YHK model. 
The probability of generating a tree $T$ under this model is denoted by $\pyule(T)$.

Let $\tsp_n$ be the set of phylogenetic trees with leaf set $\{x_1,\dots,x_n\}$. It is well known that the number of trees contained in $
\tsp_n$ is $\varphi(n):=(2n-3)!! = 1\times 3 \times \dotsb \times (2n-3)$~\citep[see e.g.][]{semple03a}. Here we adopt  the convention that $\varphi(1)=1$. 
Under the PDA model, each tree has the same probability, that is, 
$1/{\varphi(n)}$, to be generated.
Alternatively, a tree can be generated under the PDA model using a Markov process similar to the one used in the YHK model; the only difference is that the edge $e$ is uniformly sampled from $E(T)$, instead of $E^*(T)$.  
We use $\ee _{\YHK}$, $\var _{\YHK}$, $\cov _{\YHK}$ and $\corr _{\YHK}$ to denote respectively  the  expectation, variance, covariance  and correlation 
taken with respect to the probability measure $\pp _{\YHK}$ under the YHK model. Similarly, $\ee _{\PDA}$, $\var _{\PDA}$, $\cov _{\PDA}$ and $\corr _{\PDA}$ are defined with respect to the probability $\pp _{\PDA}$ under the PDA model. 

For $n\geq 2$, let $\pf_n$ (resp. $\ch_n$) be the random variable 
$\pf(T)$ (resp. $\ch(T)$) for a random tree $T$ in $\tsp_n$.  In  this paper, we are interested in the joint distributions and the marginal properties of $\pf_n$ and $\ch_n$ under  the YHK and the PDA models. 

\bigskip
\noindent
{\bf Subtree Pattern} 
\label{sec:subtree}
For later use, we present in this subsection several technical results concerning the change of the numbers of cherries and pitchforks when a new leaf is attached to a phylogenetic tree.

We begin with the following notation. Given a phylogenetic tree $T$, let $E_{1}(T)$ be the set of pendant edges that are contained in a pitchfork but not a cherry;  $E_{2}(T)$  the set of edges in $T$
that are contained in a cherry but not in a pitchfork;
 $E_{3}(T)$ the set of pendant edges that are contained in neither a cherry nor a pitchfork; and $E_{4}(T)=E(T)\setminus (E_{1}(T)\cup E_{2}(T) \cup E_{4}(T))$. For instance, for the tree $T$ depicted  in Fig.~\ref{fig:tree}, we have $E_1(T)=\{e_3\}$, $E_2(T)=\{e_2,e_4,e_9\}$, $E_3(T)=\{e_6\}$ and $E_4(T)=\{e_0,e_1,e_5,e_7,e_8,e_{10}\}$. In addition, $E(T)$ can be decomposed into the disjoint union of these four sets of edges.  
  The following lemma, whose proof is straightforward and hence omitted here, shows this observation holds for all phylogenetic trees,  where $\sqcup$ denotes disjoint union.

\begin{lemma}
\label{lem:edge-set}
Suppose that $T$ is a phylogenetic tree with $n$ leaves. Then we have
\begin{equation}
\label{eq:edge:dec}
E(T)=E_{1}(T)\,\sqcup\,E_{2}(T)\,\sqcup\,E_{3}(T)\,\sqcup\,E_{4}(T).
\end{equation}
In addition, we have
 $|E_{1}(T)|=\pf(T)$,
  $|E_{2}(T)|=3(\ch(T)-\pf(T))$,
 $|E_{3}(T)|=n-\pf(T)-2\ch(T)$,
 and
 $|E_{4}(T)|=n-1+3\pf(T)-\ch(T)$.
\end{lemma}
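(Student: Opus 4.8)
The plan is to prove Lemma~\ref{lem:edge-set} by a direct structural analysis of how each edge of $T$ sits relative to the cherries and pitchforks of $T$, and then to count the edges in each class by a double-counting argument. First I would establish the decomposition~\eqref{eq:edge:dec}. The key observation is that every pendant edge falls into exactly one of $E_1(T)$, $E_2(T)$ (restricted to pendant edges), or $E_3(T)$, according to whether the maximal subtree-pattern it participates in is a pitchfork, a cherry (and not a pitchfork), or neither; this is a genuine trichotomy because a pendant edge contained in a cherry that is itself contained in a pitchfork is, by definition, placed in $E_1(T)$, while a pendant edge whose parent has a non-leaf sibling lies in no cherry and no pitchfork, hence in $E_3(T)$. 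Every non-pendant edge either lies in $E_2(T)$ (this happens precisely for the two lower edges of a cherry that is not part of a pitchfork, which are internal—wait, those are pendant; the only internal edge of $E_2(T)$ is the edge directly above such a cherry) or lies in $E_4(T)$ by definition, since $E_4(T)$ is the complement. So I would phrase it as: $E_1\sqcup E_2\sqcup E_3$ is exactly the set of edges lying in some cherry or pitchfork, the three parts are disjoint by the ``not'' clauses in their definitions, and $E_4$ is the complement by construction; disjointness with $E_4$ is then automatic. (I would also silently correct the typo $E_4(T)=E(T)\setminus(E_1(T)\cup E_2(T)\cup E_4(T))$ to $\cdots\setminus(E_1(T)\cup E_2(T)\cup E_3(T))$.)

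Next I would compute the four cardinalities. For $|E_1(T)|$: each pitchfork of $T$ has exactly one pendant edge that belongs to no cherry (the ``long prong''), and distinct pitchforks are edge-disjoint on their long prongs because a pendant edge determines its parent and grandparent uniquely; conversely every edge in $E_1(T)$ arises this way, so $|E_1(T)|=\pf(T)$. For $|E_2(T)|$: a cherry not contained in a pitchfork contributes exactly three edges (its two pendant edges plus the edge above its internal node), and no such edge is shared between two distinct such cherries, while by definition these are exactly the edges of $E_2(T)$; the number of cherries not contained in a pitchfork is $\ch(T)-\pf(T)$ because each pitchfork contains exactly one cherry and the cherry-to-pitchfork map is injective, so $|E_2(T)|=3(\ch(T)-\pf(T))$. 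For $|E_3(T)|$: these are the pendant edges lying in no cherry; there are $n$ pendant edges in all, of which $\pf(T)$ lie in $E_1(T)$ and $2\ch(T)$ are the pendant edges lying in some cherry (two per cherry), giving $|E_3(T)|=n-\pf(T)-2\ch(T)$. Finally $|E_4(T)|$ follows by subtraction using $|E(T)|=2n-1$: $|E_4(T)| = (2n-1) - \pf(T) - 3(\ch(T)-\pf(T)) - (n-\pf(T)-2\ch(T)) = n-1+3\pf(T)-\ch(T)$.

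The main obstacle, such as it is, is purely bookkeeping: making sure the case split for pendant edges is exhaustive and mutually exclusive, and in particular that an edge can never be simultaneously a ``long prong'' of one pitchfork and a short prong of another, nor a cherry-edge of two different cherries. Both follow from the elementary fact that in a binary phylogenetic tree every non-root vertex has a unique parent, so the ancestor chain of any pendant edge is determined; hence the subtree $T(e)$ for the relevant ancestor edges is determined, and the classification is unambiguous. Since the authors themselves describe the proof as straightforward and omit it, I would present only the decomposition statement and the four counting identities with one-line justifications each, rather than a fully formal induction. As a sanity check one can verify $|E_1|+|E_2|+|E_3|+|E_4| = 2n-1 = |E(T)|$ directly, which also serves to confirm the corrected complement formula is the intended one.
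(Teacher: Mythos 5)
Your proof is correct and is precisely the straightforward counting argument the paper leaves out: $E_1(T),E_2(T),E_3(T)$ are pairwise disjoint by the ``but not''/``neither'' clauses, $E_4(T)$ is the complement (after correcting the paper's typo $E_4$ to $E_3$ in its definition), each pitchfork contributes exactly its one pendant edge outside a cherry to $E_1(T)$, each of the $\ch(T)-\pf(T)$ cherries not lying in a pitchfork contributes its three edges to $E_2(T)$, the count $|E_3(T)|=n-\pf(T)-2\ch(T)$ follows from there being $n$ pendant edges of which $2\ch(T)$ lie in cherries and $\pf(T)$ in pitchforks but not cherries, and $|E_4(T)|$ follows by subtraction from $|E(T)|=2n-1$. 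One aside in your write-up is literally false --- $E_1(T)\sqcup E_2(T)\sqcup E_3(T)$ is not ``exactly the set of edges lying in some cherry or pitchfork,'' since the top edge of a pitchfork lies in a pitchfork yet belongs to $E_4(T)$ while every edge of $E_3(T)$ lies in neither --- but your actual argument (pairwise disjointness plus $E_4(T)$ defined as the complement) never uses that characterization, so nothing breaks.
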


The last lemma provides a decomposition for the set of edges in a phylogenetic tree, which is useful to the study of the PDA model. For the YHK model, we need an analogous decomposition for $E^*(T)$, the set of the pendant edges in $T$. To this end, note first that we have $E_1(T)\subseteq E^*(T)$ and $E_3(T)\subseteq E^*(T)$. In addition, let $E^*_i(T):=E_i(T)\cap E^*(T)$ be the set of pendant edges  in $E_i(T)$ for $i=2,4$. Then we have the following lemma, whose proof is straightforward and hence omitted. 

\begin{lemma}
\label{lem:pen:edge-set}
Suppose that $T$ is a phylogenetic tree with $n$ leaves. Then we have
\begin{equation}
\label{eq:pendant:edge:dec}
E^*(T)=E_{1}(T)\,\sqcup\,E^*_{2}(T)\,\sqcup\,E_{3}(T)\,\sqcup\,E^*_{4}(T).
\end{equation}
In addition, we have
  $|E^*_{2}(T)|=2(\ch(T)-\pf(T))$
 and
 $|E^*_{4}(T)|=2\pf(T)$.
\end{lemma}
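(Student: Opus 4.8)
The plan is to read off Lemma~\ref{lem:pen:edge-set} from Lemma~\ref{lem:edge-set} by restricting the partition \eqref{eq:edge:dec} to the pendant edges, and then to account, for each cherry of $T$, for which of its edges are pendant and into which class $E_i(T)$ they land. Since $E_1(T)\subseteq E^*(T)$ and $E_3(T)\subseteq E^*(T)$ were already observed, intersecting both sides of \eqref{eq:edge:dec} with $E^*(T)$ yields \eqref{eq:pendant:edge:dec} at once, with $E^*_2(T)=E_2(T)\cap E^*(T)$ and $E^*_4(T)=E_4(T)\cap E^*(T)$. So the substance is the two cardinality formulae.

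For these I would first record two structural facts about $T$. (i) Distinct cherries of $T$ are edge-disjoint, and likewise distinct pitchforks are edge-disjoint: for $e\neq f$ the subtrees $T(e)$ and $T(f)$ are nested or edge-disjoint, and a proper nesting forces a strict inequality in the number of leaves, which rules nesting out when both subtrees are cherries (or both pitchforks). (ii) Each pitchfork of $T$ contains exactly one cherry, and each cherry is contained in at most one pitchfork: if $C=T(g)$ with $g=(u,v)$, then any pitchfork containing $C$ equals $T(h)$, where $h$ is the unique edge entering $u$ and $u$ must have a second child that is a leaf. Consequently the map sending a pitchfork to the cherry it contains is injective, so exactly $\pf(T)$ of the $\ch(T)$ cherries of $T$ are contained in a pitchfork.

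Next I would classify the edges of a cherry $C$: its stem edge is internal, and its two leaf-edges are pendant. If $C$ is not contained in any pitchfork, none of its three edges lies in a pitchfork, so all three lie in $E_2(T)$ and the two leaf-edges lie in $E^*_2(T)$; if $C$ is contained in a pitchfork, its two leaf-edges lie in both a cherry and a pitchfork, hence in none of $E_1(T),E_2(T),E_3(T)$, so they lie in $E^*_4(T)$. Conversely, a short inspection of the defining conditions shows that every pendant edge of $E_2(T)$ is a leaf-edge of a cherry not contained in any pitchfork, and every pendant edge of $E_4(T)$ is a leaf-edge of a cherry contained in a pitchfork. Combining this with (i) and (ii), $E^*_2(T)$ is the disjoint union of the leaf-edge pairs of the $\ch(T)-\pf(T)$ cherries lying in no pitchfork, so $|E^*_2(T)|=2(\ch(T)-\pf(T))$; and $E^*_4(T)$ is the disjoint union of the leaf-edge pairs of the $\pf(T)$ cherries sitting inside a pitchfork, so $|E^*_4(T)|=2\pf(T)$. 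As a consistency check, $|E^*_4(T)|$ also equals $n-|E_1(T)|-|E^*_2(T)|-|E_3(T)|$ by \eqref{eq:pendant:edge:dec}, which reproduces $2\pf(T)$ via the counts in Lemma~\ref{lem:edge-set}.

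The only delicate point is fact (ii): checking that the correspondence between pitchforks and their internal cherries is genuinely one-to-one, which is exactly what makes the correction term $-\pf(T)$ in $|E^*_2(T)|$ come out right. Everything else is bookkeeping over the partition \eqref{eq:edge:dec}, in keeping with the paper's remark that the proof is straightforward.
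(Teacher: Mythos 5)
Your proof is correct, and since the paper omits this proof as ``straightforward,'' your direct bookkeeping argument---restricting the partition of Lemma~\ref{lem:edge-set} to $E^*(T)$ and counting the two leaf-edges of each cherry according to whether that cherry lies in a pitchfork, using the injective pitchfork-to-cherry correspondence---is exactly the kind of verification the paper intends. No gaps: the key point you flag (each cherry lies in at most one pitchfork, and a pendant edge's enclosing subtrees form a nested chain, so its cherry and any enclosing pitchfork are nested) is handled correctly, and your consistency check against the counts in Lemma~\ref{lem:edge-set} confirms the result.
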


We end this section with the following result relating the values $\ch(T[e])-\ch(T)$ and $\pf(T[e])-\pf(T)$ to the choice of $e$. 

\begin{proposition}
\label{prop:type}
Suppose that $e$ is an edge in a phylogenetic tree $T$ and $T'=T[e]$. Then 
we have
\small{
\begin{equation*}
\pf(T') = \begin{cases}
             \pf(T)  & \text{if } e\in E_3(T)\cup E_4(T), \\
             \pf(T)-1  & \text{if } e\in E_1(T),\\
             \pf(T)+1 & \text{if } e\in E_2(T);
       \end{cases} 
      \mbox{and} \quad
\ch(T') = \begin{cases}
             \ch(T) & \text{if } e \in E_2(T)\cup E_4(T), \\
             {} & \\
             \ch(T)+1 & \text{if } e \in E_1(T)\cup E_3(T).
       \end{cases}
\end{equation*}
}
\end{proposition}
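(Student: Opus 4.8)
The plan is to verify the eight cases directly from the definitions of cherries, pitchforks, and the leaf‑attachment operation $T[e;x_0]$, working locally around the edge $e$ and exploiting the fact that attaching a new leaf to $e=\{u,v\}$ only alters the part of the tree within distance two of $e$. First I would fix notation: write $e=\{u,v\}$ with $u$ the parent of $v$ in $T$, let $w$ be the new internal vertex introduced on $e$, so that in $T'=T[e;x_0]$ we have new edges $(u,w)$, $(w,v)$, $(w,x_0)$. The key structural observation is that subtrees of $T'$ that do not use any of these three new edges coincide exactly with subtrees of $T$ not using $e$, so only cherries and pitchforks ``near'' $e$ can change their count. Thus the whole proof reduces to a finite local case analysis, organized by which of $E_1(T),E_2(T),E_3(T),E_4(T)$ the edge $e$ belongs to.

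Next I would go through the four membership cases for $e$, describing the local picture of $T$ around $e$ and reading off the local picture of $T'$. If $e\in E_2(T)$, then $e$ lies in a cherry $\{e,e',f\}$ (with $e,e'$ the two pendant edges and $f$ the edge above their common parent); attaching $x_0$ to $e$ turns that cherry into a pitchfork in $T'$ (the node above $w$ now subtends three leaves $x_0$ and the two old cherry leaves), so $\pf$ increases by $1$; meanwhile $\ch$ is unchanged because the old cherry is destroyed but the new pair $\{(w,v),(w,x_0)\}$ forms a cherry — net zero. If $e\in E_1(T)$, then $e$ is a pendant edge in a pitchfork but not a cherry, i.e.\ its sibling edge $e'$ leads to a cherry; attaching $x_0$ to $e$ makes $v$ and $x_0$ a new cherry, and the old pitchfork is destroyed (its defining node now subtends four leaves), so $\pf$ decreases by $1$ and $\ch$ increases by $1$. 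If $e\in E_3(T)$, then $e$ is pendant and lies in neither a cherry nor a pitchfork, so its sibling subtree has at least two leaves and is not a single leaf; attaching $x_0$ creates a new cherry $\{(w,v),(w,x_0)\}$, so $\ch$ increases by $1$, and since the sibling subtree had $\ge 2$ leaves, the node above $w$ subtends $\ge 4$ leaves and no new pitchfork appears while none is destroyed, so $\pf$ is unchanged. Finally if $e\in E_4(T)$ — the generic case, with $e$ either internal, or pendant but far from the ``action'' — one checks that $e$ together with $(w,v),(w,x_0)$ could a priori create a cherry only if $v$ is a leaf, but pendant edges in $E_4$ are precisely those not in any cherry or pitchfork \emph{and} with the sibling being a large subtree, handled as in the $E_3$ case except... here I should be careful: by Lemma~\ref{lem:edge-set} the pendant edges in $E_4(T)$ actually sit inside a cherry-or-pitchfork already, so attaching $x_0$ cannot create a brand new cherry without destroying an existing one; and if $e$ is internal, $v$ is not a leaf, so $\{(w,v),(w,x_0)\}$ is not a cherry and nothing changes. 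In all $E_4$ sub‑cases both counts are unchanged.

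The main obstacle, and the place where care is genuinely needed, is keeping the bookkeeping honest in the cases where one subtree is destroyed while another is simultaneously created — especially distinguishing the $E_1$ case (old pitchfork dies, new cherry born, and crucially the old pitchfork's ``inner'' cherry survives, so $\ch$ goes up by exactly $1$ not $0$ or $2$) from the $E_2$ case (old cherry dies, new cherry born, net zero for $\ch$, but a pitchfork is created). It is easy to double‑count or miss the fact that the cherry/pitchfork \emph{above} $w$ may gain or lose a leaf. I would therefore make the local pictures fully explicit (ideally mirroring the figure conventions already in the paper) and, as an internal consistency check, verify that the asserted changes are compatible with the cardinality formulas of Lemmas~\ref{lem:edge-set} and~\ref{lem:pen:edge-set} applied to both $T$ and $T'$: for instance, in the $E_2$ case $\ch$ stays put and $\pf$ increases by one, which must match $|E_1(T')|=\pf(T')=\pf(T)+1$ and the corresponding updates to $|E_2|,|E_3|,|E_4|$. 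This cross‑check, together with the exhaustiveness of the partition $E(T)=E_1\sqcup E_2\sqcup E_3\sqcup E_4$ from Lemma~\ref{lem:edge-set}, completes the argument.
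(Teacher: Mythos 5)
Your overall strategy --- a local case analysis over the partition $E(T)=E_1(T)\sqcup E_2(T)\sqcup E_3(T)\sqcup E_4(T)$, examining how the subtrees within distance two of $e$ change when the new vertex $w$ is inserted --- is exactly the paper's (the paper writes out only the $E_1$ case in detail and appeals to similar arguments for the rest), and your $E_1$ and $E_3$ cases are correct. However, your $E_2$ case is wrong as written: $E_2(T)$ is \emph{not} just the two pendant edges of a cherry. By definition (and as the count $|E_2(T)|=3(\ch(T)-\pf(T))$ in Lemma~\ref{lem:edge-set} shows), it also contains the stem edge $f$ of every cherry not lying in a pitchfork. For $e=f$ your stated mechanism (``the old cherry is destroyed but the new pair $\{(w,v),(w,x_0)\}$ forms a cherry'') is false: the cherry below $v$ survives, $v$ is not a leaf so no new cherry appears, and the new pitchfork is the subtree of the edge \emph{above} $w$, which now has three leaves; one also needs the observation that no pitchfork can be destroyed, precisely because the cherry was assumed not to lie in one. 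The net change $(\pf+1,\,\ch\ \mbox{unchanged})$ is the same, but it requires this separate argument.

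Similarly, your $E_4$ case asserts ``in all $E_4$ sub-cases both counts are unchanged'' while only arguing about cherries. The pendant edges of $E_4(T)$ are exactly the pendant edges of cherries that sit inside pitchforks, and the internal edges of such pitchforks (the inner cherry's stem and the pitchfork's stem) also belong to $E_4(T)$. Attaching to any of these destroys the enclosing pitchfork (its root now subtends four leaves) while simultaneously creating a new one (for a pendant edge, the old cherry's stem edge now subtends three leaves; for the stems, the new three-leaf subtree hangs from the edge above or below $w$), and in the pendant subcase a cherry is destroyed and another created as well. The counts are preserved only through this destroy-and-create cancellation, which is precisely the bookkeeping you flag as delicate but never trace; the cross-check against the cardinalities in Lemmas~\ref{lem:edge-set} and~\ref{lem:pen:edge-set} is a sanity check, not a substitute for it. These are repairable omissions rather than a wrong approach, but as written the proof does not cover all edges of $E_2(T)$ and $E_4(T)$.
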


\begin{proof}
Let $\{F_1,\dots,F_k\}$ be the set of pitchforks contained in $T$, and let $\{H_1,\dots,H_l\}$ ($l \geq k$) be the set of cherries contained in $T$. Here we may assume that indices are chosen in the way so that $H_1$ is contained in $F_1$. 
 
 Suppose first that $e=(u,v) \in E_1(T)$. Swapping the labelling of $F_i$ if necessary, we may assume that $e$ is the pendant edge contained in the pitchfork $F_1$ but not in the cherry $H_1$. Let $u_0$ be the parent of $u$, and let $u_1$ be the  child of $u$ that is distinct from $v$. In addition, let $w$ be the newly added interior vertex in $T'$. Now consider $e_0=(u_0,u)$ and $e'=(u,w)$ in $T'$. Then $T'(e_0)$ is not a pitchfork as $u_0$ has four leaves as its descendants. On the other hand, $T'(e')$ is a cherry of $T'$ that is not contained in $T$. Therefore, we have $\pf(T')=k-1$ and $\ch(T')=l+1$, as required. 
 
 By a similar argument, we can establish the proposition for the other three cases, i.e., $e\in E_i(T)$ for $2\leq i \leq 4$. Since by Lemma~\ref{lem:edge-set} these four cases cover all possible choices of $e$, the proposition follows.  
\epf
\end{proof}

One useful consequence of the last proposition is the following corollary, whose proof is straightforward and hence omitted.
\begin{corollary}
\label{cor:range}
Suppose that $e$ is an edge in a phylogenetic tree $T$ with $\pf(T)=a$ and $\ch(T)=b$. Then for the phylogenetic tree $T'=T[e]$, we have 
$$
(\pf(T'),\ch(T'))\in \{(a-1,b+1),(a+1,b),(a,b+1),(a,b)\}
$$
according to the index $i$ ($1\leq i \leq 4$) with $e\in E_i(T)$.
\end{corollary}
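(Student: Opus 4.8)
The plan is to derive this directly from Proposition~\ref{prop:type} by a case analysis on which of the four edge classes contains $e$. First I would invoke Lemma~\ref{lem:edge-set}, which guarantees that $E(T)$ is the disjoint union $E_1(T)\sqcup E_2(T)\sqcup E_3(T)\sqcup E_4(T)$, so that exactly one index $i\in\{1,2,3,4\}$ satisfies $e\in E_i(T)$; this ensures the four cases below are exhaustive and mutually exclusive, which is precisely what the phrase ``according to the index $i$'' in the statement requires.

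Next I would read off the pair $(\pf(T'),\ch(T'))$ from Proposition~\ref{prop:type} in each case. If $e\in E_1(T)$, the proposition gives $\pf(T')=\pf(T)-1=a-1$ and $\ch(T')=\ch(T)+1=b+1$, yielding $(a-1,b+1)$. If $e\in E_2(T)$, it gives $\pf(T')=a+1$ and $\ch(T')=b$, yielding $(a+1,b)$. If $e\in E_3(T)$, it gives $\pf(T')=a$ and $\ch(T')=b+1$, yielding $(a,b+1)$. Finally, if $e\in E_4(T)$, both counts are unchanged, yielding $(a,b)$. Collecting the four outcomes gives the claimed membership, and moreover identifies which element of the four-element set arises for each $i$.

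There is no genuine obstacle here: the corollary is a bookkeeping consequence of the preceding proposition, and the only point requiring care is matching the labelling of $E_i(T)$ with the correct branch of the two piecewise formulas in Proposition~\ref{prop:type} --- in particular noting that $E_3(T)$ and $E_4(T)$ behave identically for $\pf$ but differently for $\ch$, while $E_1(T)$ and $E_3(T)$ behave identically for $\ch$ but differently for $\pf$. Once that correspondence is set up, the result is immediate, which is why the proof is omitted in the text.
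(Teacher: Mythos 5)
Your proof is correct and is exactly the intended argument: the paper omits the proof as straightforward, and your case-by-case reading of Proposition~\ref{prop:type}, with Lemma~\ref{lem:edge-set} supplying the disjoint decomposition of $E(T)$, matches each index $i$ to the right pair $(\pf(T'),\ch(T'))$. Nothing is missing.
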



\section{Subtree Distributions under the YHK Model}
\label{sec:YHK}

In this section, we study the distributions of the random variables $A_n$ and $C_n$ under the YHK model. 
Our starting point is the following recursion on their joint distribution.

\begin{theorem}
\label{thm:yule:pf}
We have
\begin{equation*}
\begin{split}
\pyule(\pf_{n+1}=a, \ch_{n+1}=b) &=
\frac{2a}{n}\pyule(\pf_{n}=a, \ch_{n}=b)+\frac{(a+1)}{n}\pyule(\pf_{n} =a+1, \ch_n=b-1)\\
& \qquad +\frac{2(b-a+1)}{n}\pyule(\pf_{n}=a-1, \ch_{n} =b) +\frac{(n-a-2b+2)}{n}\pyule(\pf_{n}=a, \ch_{n}=b-1)
\end{split}
\end{equation*}
for $n>3$ and $1<b<n$. Moreover,  
$\pyule(\pf_{3}=a, \ch_{3}=b)$ equals to $1$ if $(a,b)=(1,1)$, and $0$ otherwise.
\end{theorem}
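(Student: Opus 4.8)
The plan is to use the Markovian ``attach a pendant leaf'' description of the YHK model and to reduce, via Proposition~\ref{prop:type} and Lemma~\ref{lem:pen:edge-set}, the one-step transition probabilities to functions of the pair $(\pf,\ch)$ alone. By definition of the YHK model, a random tree with $n+1$ leaves is obtained by first sampling $T_n\in\tsp_n$ according to $\pyule$ and then sampling a pendant edge $e$ uniformly from the $n$ pendant edges of $T_n$, replacing $T_n$ by $T_n[e]$ (the final uniform relabelling is irrelevant because $\pf$ and $\ch$ depend only on the tree shape). Conditioning on $T_n$ and on the chosen edge therefore gives
\[
\pyule(\pf_{n+1}=a,\ch_{n+1}=b)=\frac1n\sum_{T\in\tsp_n}\pyule(T)\,\big|\{e\in E^*(T):\pf(T[e])=a,\ \ch(T[e])=b\}\big|.
\]

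The next step is to evaluate that edge count. By Proposition~\ref{prop:type}, attaching a leaf to an edge of $E^*(T)=E_1(T)\sqcup E^*_2(T)\sqcup E_3(T)\sqcup E^*_4(T)$ (the decomposition of Lemma~\ref{lem:pen:edge-set}) sends $(\pf(T),\ch(T))$ to $(\pf(T)-1,\ch(T)+1)$, $(\pf(T)+1,\ch(T))$, $(\pf(T),\ch(T)+1)$, or $(\pf(T),\ch(T))$ according as $e$ lies in $E_1(T)$, $E^*_2(T)$, $E_3(T)$, or $E^*_4(T)$; and by Lemmas~\ref{lem:edge-set} and~\ref{lem:pen:edge-set} these four blocks have sizes $\pf(T)$, $2(\ch(T)-\pf(T))$, $n-\pf(T)-2\ch(T)$, and $2\pf(T)$ respectively — each a function of $(\pf(T),\ch(T))$ only. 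Hence a tree $T_n$ contributes to the event $\{\pf_{n+1}=a,\ch_{n+1}=b\}$ exactly when $(\pf(T_n),\ch(T_n))$ equals one of the four predecessor values $(a+1,b-1)$, $(a-1,b)$, $(a,b-1)$, $(a,b)$, and then with the corresponding weight $\tfrac1n$ times the relevant block size, namely $a+1$, $2(b-(a-1))$, $n-a-2(b-1)$, and $2a$. Substituting these into the displayed sum and grouping the terms by predecessor value yields precisely the claimed recursion.

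For the base case, $\tsp_3$ contains only one tree shape — a cherry together with one additional leaf — which has exactly one pitchfork and one cherry, so $\pyule(\pf_3=a,\ch_3=b)$ equals $1$ for $(a,b)=(1,1)$ and $0$ otherwise. There is no genuine obstacle in this argument; the two points requiring care are (i) that the YHK growth chain, read at step $n$, has marginal law $\pyule$ on $\tsp_n$, which is immediate from the definition of the model, and (ii) the bookkeeping that makes the recursion ``close'', i.e.\ checking that the block sizes of Lemma~\ref{lem:pen:edge-set} evaluated at the correct predecessor pair reproduce exactly the coefficients $\tfrac{2a}{n}$, $\tfrac{a+1}{n}$, $\tfrac{2(b-a+1)}{n}$, $\tfrac{n-a-2b+2}{n}$ in the statement. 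Within the stated range $1<b<n$, any predecessor pair whose block-size formula is negative (such as $a-1<0$) corresponds to a tree class of probability zero, so no spurious terms appear.
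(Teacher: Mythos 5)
Your proposal is correct and follows essentially the same route as the paper's proof: condition on the previous step of the YHK growth chain, use Proposition~\ref{prop:type} together with Lemmas~\ref{lem:edge-set} and~\ref{lem:pen:edge-set} to identify the four predecessor pairs $(a+1,b-1)$, $(a-1,b)$, $(a,b-1)$, $(a,b)$ and their pendant-edge block sizes $a+1$, $2(b-a+1)$, $n-a-2b+2$, $2a$, and handle the base case $\tsp_3$ directly. Writing the probability as a sum over trees weighted by edge counts rather than via conditional probabilities on the events $\{\pf_n=p,\ch_n=q\}$ is only a cosmetic difference, and your boundary remark about vanishing predecessor classes is sound.
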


\begin{proof}
Fix $n> 3$, and let $T_2,\dots,T_n,T_{n+1}$ be a sequence of random trees  generated by the YHK process, that is, $T_2$ contains two leaves and $T_{i+1}=T_i[e_i]$ for a uniformly chosen pendant edge $e_i$ in $T_i$ for $2\leq i \leq n$. In particular, we have $|E^*(T_i)|=i$ for
$2\leq i \leq n+1$. Then 
we have
\begin{align}
\label{eq:total:yule}
\pyule&(\pf_{\scriptstyle n+1}=a, \ch_{n+1}=b) =\pp(\pf(T_{n+1})=a, \ch(T_{n+1})=b) \notag \\
&=\sum_{p,q} \pp(\pf(T_{n+1})=a, \ch(T_{n+1})=b\,|\,\pf(T_n)=p,\ch(T_n)=q) \pp(\pf(T_n)=p,\ch(T_n)=q) \notag \\
&=\sum_{p,q} \pp (\pf(T_{n+1})=a, \ch(T_{n+1})=b\,|\,\pf(T_n)=p,\ch(T_n)=q) \pp _{\YHK}(\pf_n=p,\ch_n=q),
\end{align}
where the first and second equalities follow from the law of total probability,  and the definition of random variables $A_n$ and $C_n$.

Let $e_n$ be the pendant edge in $T_n$ chosen in the above YHK process for generating $T_{n+1}$, that is, $T_{n+1}=T_n[e_n]$. Since Corollary~\ref{cor:range} implies that 
\begin{equation}
\label{eq:yule:5}
 \pp(\pf(T_{n+1})=a, \ch(T_{n+1})=b~|~\pf(T_n)=p,\ch(T_n)=q)=0 
\end{equation}
for $(p,q) \not \in\{(a,b),(a+1,b-1),(a-1,b),(a,b-1)\}$, 
it suffices to consider the following four cases  in the summation in  (\ref{eq:total:yule}): case (i): $p=a, q=b$; case (ii): $p=a+1, q=b-1$; case (iii): $p=a-1, q=b$;  and case (iv): $p=a, q=b-1$.

Firstly,  Proposition~\ref{prop:type} implies that  case (i) occurs if and only if $e_n\in E_4(T_n)\cap E^*(T_n)=E_4^*(T_n)$.  
Together with Lemma~\ref{lem:pen:edge-set}, we have
\begin{align}
\label{eq:yule:1}
 \pp(\pf(T_{n+1})=a, \ch(T_{n+1})=b~|~\pf(T_n)=a,\ch(T_n)=b) &=
 \frac{|E^*_4(T_n)|}{|E^*(T_n)|} 
=\frac{2\pf(T_n)}{n}=\frac{2a}{n}.
\end{align}


Similarly,  Proposition~\ref{prop:type} implies that case (ii) 
occurs if and only if $e_n\in E_1(T_n)\cap E^*(T_n)=E_1(T_n)$. 
Hence by Lemma~\ref{lem:edge-set} we have
\begin{align}
\label{eq:yule:2}
\pp(\pf(T_{n+1})=a, \ch(T_{n+1})=b~|~\pf(T_n)=a+1,\ch(T_n)=b-1)=
\frac{|E_1(T_n)|}{|E^*(T_n)|} 
=\frac{a+1}{n}.
\end{align}

Next,  Proposition~\ref{prop:type} implies case (iii) occurs if and only if $e_n\in E_2(T_n)\cap E^*(T_n)=E_2(T_n)$. Hence using Lemma~\ref{lem:edge-set} we have
\begin{align}
\label{eq:yule:3}
\pp(\pf(T_{n+1})=a, \ch(T_{n+1})=b~|~\pf(T_n)=a+1,\ch(T_n)=b) =
\frac{|E_2(T_n)|}{|E^*(T_n)|} 
=\frac{2(b-a-1)}{n}.
\end{align}

Finally, by  Proposition~\ref{prop:type} case (iv) 
 occurs if and only if $e_n$ is contained in $E_3(T_n)\cap E^*(T_n)=E^*_3(T_n)$. Hence by Lemma~\ref{lem:pen:edge-set} it follows that
\begin{equation}
\label{eq:yule:4}
\pp(\pf(T_{n+1})=a, \ch(T_{n+1}=b)~|~\pf(T_n)=a,\ch(T_n)=b-1)=
\frac{|E^*_3(T_n)|}{|E^*(T_n)|}=\frac{n-a-2b+2}{n}. 
\end{equation}


Now substituting Eq.~(\ref{eq:yule:1})--(\ref{eq:yule:4}) into Eq.~(\ref{eq:total:yule})
 completes the proof of the theorem.
 \epf
\end{proof}



The recursion in the last theorem can be used for a dynamic approach to numerically compute the joint distribution of  $A_n$ and $C_n$ . More precisely, let $M_m$ ($m\geq 3$) be the $(m+1)\times (m+1)$ matrix whose $(i,j)$-entry is $\pyule(\pf_{m}=i-1, \ch_{m}=j-1)$. Then $M_3$ contains a unique non-zero entry, which is at position $(2,2)$ and has a value of $1$ . Next, starting with $m=4$ and assuming that $M_{m-1}$ is already constructed, each entry in $M_{m}$ can be computed using time $O(1)$, and  hence $M_m$ can be constructed in time $O(m^2)$ with $M_{m-1}$  given. In other words, $M_n$, which specifies the joint distribution of cherry and pitchfork under the YHK model, can be computed in $O(n^3)$. Note that an alternative way of computing   the joint distribution of  $A_n$ and $C_n$  under the YHK model is proposed in~\cite{dw13}, which is based on integrating and differentiating generating functions.

For later use, we rewrite the recursion in Theorem~\ref{thm:yule:pf} in the following functional form.
\begin{theorem}
\label{thm:yule:fun}
Let $\varphi: \mathbb{R} \times \mathbb{R} \to \mathbb{R}$ be an arbitrary function. Then, under the YHK model, we have
\begin{align*}
\ee _{\YHK} \varphi(\pf_{n+1}, \ch_{n+1}) &= \frac{2}{n} \ee _{\YHK}[\pf_{n} \: \varphi (\pf_{n}, \ch_{n})] + \frac{1}{n} \ee _{\YHK}[\pf_{n} \: \varphi (\pf_{n}-1, \ch_{n}+1)]  \\
& \quad + \frac{2}{n} \ee _{\YHK}[ (\ch_{n}-\pf_n) \: \varphi (\pf_{n}+1, \ch_{n})]  + \frac{1}{n}  \ee _{\YHK}[(n-\pf_{n}-2\ch_{n}) \: \varphi (\pf_{n}, \ch_{n}+1)]
\end{align*}
for $n>2$.
\end{theorem}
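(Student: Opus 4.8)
The plan is to derive Theorem~\ref{thm:yule:fun} directly from Theorem~\ref{thm:yule:pf} by expanding $\ee_{\YHK}\varphi(\pf_{n+1},\ch_{n+1})$ as a double sum over $(a,b)$ and substituting the four-term recursion. First I would write
\begin{equation*}
\ee_{\YHK}\varphi(\pf_{n+1},\ch_{n+1}) = \sum_{a,b}\varphi(a,b)\,\pyule(\pf_{n+1}=a,\ch_{n+1}=b),
\end{equation*}
replace the probability on the right by the four summands of Theorem~\ref{thm:yule:pf}, and split the sum into four pieces $S_1,\dots,S_4$ accordingly. Each $S_j$ is a double sum of $\varphi(a,b)$ against a coefficient times a shifted joint probability $\pyule(\pf_n=a',\ch_n=b')$, and the task is to re-index each one so that the summation variable matches $(a',b')$, after which the sum collapses into an expectation under the YHK model at level $n$.

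The key bookkeeping step is the change of variables in each term. In $S_1$ the coefficient is $\tfrac{2a}{n}$ and the probability is $\pyule(\pf_n=a,\ch_n=b)$, so no shift is needed and $S_1 = \tfrac2n\ee_{\YHK}[\pf_n\,\varphi(\pf_n,\ch_n)]$. In $S_2$ the coefficient is $\tfrac{a+1}{n}$ against $\pyule(\pf_n=a+1,\ch_n=b-1)$; setting $a'=a+1,\ b'=b-1$ turns this into $\tfrac1n\sum_{a',b'}(a')\,\varphi(a'-1,b'+1)\,\pyule(\pf_n=a',\ch_n=b') = \tfrac1n\ee_{\YHK}[\pf_n\,\varphi(\pf_n-1,\ch_n+1)]$. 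Similarly, in $S_3$ the substitution $a'=a-1,\ b'=b$ gives coefficient $2(b-a+1) = 2(b'-a')$ and shift $\varphi(a'+1,b')$, yielding $\tfrac2n\ee_{\YHK}[(\ch_n-\pf_n)\,\varphi(\pf_n+1,\ch_n)]$; and in $S_4$ the substitution $b'=b-1$ gives coefficient $n-a-2b+2 = n-a'-2b'$ and shift $\varphi(a',b'+1)$, yielding $\tfrac1n\ee_{\YHK}[(n-\pf_n-2\ch_n)\,\varphi(\pf_n,\ch_n+1)]$. Summing $S_1+S_2+S_3+S_4$ reproduces the claimed identity.

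One point requiring a little care is the range of summation and the boundary conditions in Theorem~\ref{thm:yule:pf}, which is stated for $1<b<n$: I would note that the recursion extends to all integer $(a,b)$ once we adopt the convention that $\pyule(\pf_n=a,\ch_n=b)=0$ whenever $(a,b)$ lies outside the admissible region (in particular for $b\le 1$ or $b\ge n$, or when $a<0$, or when $a$ exceeds its combinatorial bound). Under this convention the shifted probabilities appearing after re-indexing are automatically zero outside their supports, so each $S_j$ is genuinely a finite sum and the manipulations above are valid with no edge effects; also the coefficients $\pf_n$, $\ch_n-\pf_n$, and $n-\pf_n-2\ch_n$ vanish or the corresponding probabilities vanish exactly at the boundaries, matching Lemmas~\ref{lem:edge-set} and~\ref{lem:pen:edge-set}. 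The main (very mild) obstacle is thus purely clerical: getting the index shifts and the resulting arguments of $\varphi$ exactly right in $S_2$ and $S_3$, where both coordinates move; there is no real analytic difficulty, since the statement is simply the $\varphi$-weighted reformulation of an already-proved recursion.
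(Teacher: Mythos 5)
Your proposal is correct and is essentially the paper's own argument: the paper multiplies the recursion of Theorem~\ref{thm:yule:pf} by $\varphi(a,b)$, rewrites each term as an expectation against the indicator $I_{(a,b)}(\cdot,\cdot)$, and sums over all $(a,b)$, which is exactly your expand--substitute--reindex computation in different notation. Your explicit remark that the recursion extends to all integer $(a,b)$ under the zero-probability convention (so the shifted sums have no edge effects) is a point the paper leaves implicit.
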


\begin{proof}
 Consider the indicator function $I_{(a, b)}$ on $\mathbb{R} \times \mathbb{R}$ defined as
$$I_{(a, b)}(x, y)=
\begin{cases}
1 & \mbox{ if  $x=a$ and $y=b$,}\\
0 & \mbox{ otherwise.}
\end{cases}
$$
We multiply the equation in Theorem~\ref{thm:yule:pf} by $\varphi(a,b)$ and rewrite them as follows
{
\begin{align*}
 &\ee _{\YHK}[\varphi(\pf_{n+1}, \ch_{n+1}) I_{(a, b)}(\pf_{n+1}, \ch_{n+1})]
=  \frac{1}{n} {\big \{} 2 \ee _{\YHK}[\pf_n \varphi(\pf_{n}, \ch_{n}) I_{(a, b)}(\pf_{n}, \ch_{n})]  \\
& \quad+  \ee _{\YHK}[\pf_n \varphi(\pf_{n}-1, \ch_{n}+1) I_{(a, b)}(\pf_{n}-1, \ch_{n}+1)]
  + 2 \ee _{\YHK}[(\ch_n-\pf_n) \varphi(\pf_{n}+1, \ch_{n}) I_{(a, b)}(\pf_{n}+1, \ch_{n})] \\
 & \quad + \ee _{\YHK}[(n-\pf_n-2\ch_n) \varphi(\pf_{n}, \ch_{n}+1) I_{(a, b)}(\pf_{n}, \ch_{n}+1)] {\big \} } .
\end{align*}
}
Summing over all $a$ and $b$ completes the proof.
\epf
\end{proof}

In the remainder of this section we study cherry and pitchfork distributions using Theorem~\ref{thm:yule:fun}. We begin with  a functional recursion on cherry distribution $\ch_n$. This enables us to show that the cherry distribution is log-concave under the YHK model, and obtain an alternative approach to computing the central moments of cherry distribution. \begin{proposition}
\label{prop:yule:ch:fun}
Let $\psi: \mathbb{R}  \to \mathbb{R} $ be an arbitrary function. Then 
we have
\begin{align}
\label{eq:ch:yule:functional}
\ee _{\YHK} &\psi(\ch_{n+1}) = \frac{1}{n} \ee _{\YHK}[ 2\ch_n\:\psi (\ch_{n}) + (n-2\ch_{n}) \: \psi (\ch_{n}+1)]
\end{align}
for $n>2$. In particular, we have $\pp _{\YHK}(\ch_2=1)=1$, $\pp _{\YHK}(\ch_2=k)=0$ for $k\not =1$, and
\begin{equation}
\label{eq:cherry:yule}
\pp _{\YHK}(\ch_{n+1}=k) = \frac{2k}{n}\pp _{\YHK}(\ch_{n}=k)+\frac{n-2k+2}{n}\pp _{\YHK}(\ch_{n}=k-1)
\end{equation}
for $n>2$ and $1<k<n$.
\end{proposition}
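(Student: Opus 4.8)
The plan is to derive Proposition~\ref{prop:yule:ch:fun} directly from the functional identity in Theorem~\ref{thm:yule:fun} by specializing the test function. Given an arbitrary $\psi:\mathbb{R}\to\mathbb{R}$, I would set $\varphi(x,y):=\psi(y)$ in Theorem~\ref{thm:yule:fun}, so that $\varphi$ ignores its first argument. Then each of the four terms on the right-hand side simplifies: $\varphi(\pf_n,\ch_n)=\psi(\ch_n)$, $\varphi(\pf_n-1,\ch_n+1)=\psi(\ch_n+1)$, $\varphi(\pf_n+1,\ch_n)=\psi(\ch_n)$, and $\varphi(\pf_n,\ch_n+1)=\psi(\ch_n+1)$. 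Substituting these in gives
\begin{align*}
\ee_{\YHK}\psi(\ch_{n+1}) &= \frac{2}{n}\ee_{\YHK}[\pf_n\,\psi(\ch_n)] + \frac{1}{n}\ee_{\YHK}[\pf_n\,\psi(\ch_n+1)] \\
&\quad + \frac{2}{n}\ee_{\YHK}[(\ch_n-\pf_n)\,\psi(\ch_n)] + \frac{1}{n}\ee_{\YHK}[(n-\pf_n-2\ch_n)\,\psi(\ch_n+1)].
\end{align*}
The key observation is that the $\pf_n$ contributions cancel: in the $\psi(\ch_n)$ terms, $\frac{2}{n}\pf_n + \frac{2}{n}(\ch_n-\pf_n) = \frac{2}{n}\ch_n$, and in the $\psi(\ch_n+1)$ terms, $\frac{1}{n}\pf_n + \frac{1}{n}(n-\pf_n-2\ch_n) = \frac{1}{n}(n-2\ch_n)$. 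This yields exactly \eqref{eq:ch:yule:functional}.

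For the stated consequences, the base case $\pp_{\YHK}(\ch_2=1)=1$ is immediate from the definition of the YHK process (the tree on two leaves is a single cherry). To obtain the scalar recursion \eqref{eq:cherry:yule}, I would apply \eqref{eq:ch:yule:functional} with $\psi=I_k$, the indicator of the value $k$, exactly as in the proof of Theorem~\ref{thm:yule:fun}: then $\ee_{\YHK}\psi(\ch_{n+1})=\pp_{\YHK}(\ch_{n+1}=k)$, while $\ee_{\YHK}[2\ch_n\,I_k(\ch_n)]=2k\,\pp_{\YHK}(\ch_n=k)$ and $\ee_{\YHK}[(n-2\ch_n)I_k(\ch_n+1)]=(n-2(k-1))\,\pp_{\YHK}(\ch_n=k-1)=(n-2k+2)\,\pp_{\YHK}(\ch_n=k-1)$. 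Dividing by $n$ gives \eqref{eq:cherry:yule}. One should note that the restriction $1<k<n$ matches the range in which Theorem~\ref{thm:yule:pf} (and hence Theorem~\ref{thm:yule:fun}) is valid, together with the trivial boundary behaviour of $\ch_n$; I would either remark that boundary cases follow by the same computation with the convention that out-of-range probabilities vanish, or simply inherit the hypotheses of the cited theorem.

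There is essentially no hard step here: the whole proposition is a specialization of an already-established functional recursion, and the only thing to get right is the bookkeeping that makes the $\pf_n$-dependence disappear. The mild subtlety worth flagging is consistency of the index ranges — Theorem~\ref{thm:yule:fun} is stated for $n>2$ whereas Theorem~\ref{thm:yule:pf} requires $n>3$ and $1<b<n$, so in writing the proof I would take care that the $n=3$ instance and the boundary values of $k$ are handled by direct inspection (or absorbed into the convention that $\pp_{\YHK}(\ch_n=k)=0$ outside $1\le k\le n/2$), rather than silently invoking a recursion outside its stated domain.
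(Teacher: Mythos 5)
Your proof is correct and is essentially the paper's own argument: specialize Theorem~\ref{thm:yule:fun} to $\varphi(x,y)=\psi(y)$ so the $\pf_n$-terms cancel, then take $\psi=I_k$ to get the pointwise recursion \eqref{eq:cherry:yule}. The extra care you show about index ranges and the explicit cancellation bookkeeping are fine but not a different route.
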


\begin{proof}
For  $\psi: \mathbb{R}  \to \mathbb{R} $ given  in the statement of the proposition, we define $\varphi^*(x,y)=\psi(y)$, a function on $\mathbb{R}  \times \mathbb{R} $. Applying Theorem~\ref{thm:yule:fun} to the function $\varphi^*$ leads to Eq.~(\ref{eq:ch:yule:functional}).  Eq.~(\ref{eq:cherry:yule}) follows from Eq.~(\ref{eq:ch:yule:functional}) by taking $\psi(x)=I_k(x)$, where $I_k(x)$ equals $1$ if $x=k$, and $0$ otherwise.
\epf
\end{proof}

Using the last proposition, the mean and the variance of cherry distribution 
can be obtained by substituting $\psi(x)=x$ and $\psi(x)=x^2$, respectively, in the recursive equation Eq.~(\ref{eq:ch:yule:functional}) in Proposition~\ref{prop:yule:ch:fun}. 

\begin{corollary}
\label{cor:yule:ch}
~\citep{heard92a,McKenzie2000}
We have $\ee _{\YHK}(\ch_n)=n/3$ for $n> 2$ and $\var _{\YHK}(\ch_n)=2n/45$ for $n\geq 5$. 
\end{corollary}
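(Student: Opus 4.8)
The plan is to use the functional recursion in Proposition~\ref{prop:yule:ch:fun} with the test functions $\psi(x)=x$ and $\psi(x)=x^2$, and then solve the resulting one-dimensional recursions in $n$. Write $m_n := \ee_{\YHK}(\ch_n)$ and $s_n := \ee_{\YHK}(\ch_n^2)$.

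\emph{Step 1 (the mean).} Taking $\psi(x)=x$ in Eq.~(\ref{eq:ch:yule:functional}) gives
\begin{equation*}
m_{n+1} = \frac{1}{n}\,\ee_{\YHK}\!\big[2\ch_n^2 + (n-2\ch_n)(\ch_n+1)\big]
= \frac{1}{n}\,\ee_{\YHK}\!\big[(n-2)\ch_n + n\big]
= \frac{n-2}{n}\,m_n + 1,
\end{equation*}
where the cross term $2\ch_n^2$ cancels against $-2\ch_n\cdot\ch_n$. With the initial value $m_3=1$ (since $\ch_3=1$ deterministically) one checks directly that $m_n = n/3$ solves this recursion: indeed $\frac{n-2}{n}\cdot\frac{n}{3}+1 = \frac{n-2}{3}+1 = \frac{n+1}{3}$, and $m_3 = 1 = 3/3$. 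This proves $\ee_{\YHK}(\ch_n)=n/3$ for all $n>2$.

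\emph{Step 2 (the second moment and variance).} Taking $\psi(x)=x^2$ in Eq.~(\ref{eq:ch:yule:functional}) gives
\begin{equation*}
s_{n+1} = \frac{1}{n}\,\ee_{\YHK}\!\big[2\ch_n^3 + (n-2\ch_n)(\ch_n+1)^2\big].
\end{equation*}
Expanding $(n-2\ch_n)(\ch_n+1)^2 = (n-2\ch_n)(\ch_n^2+2\ch_n+1)$, the $\ch_n^3$ terms again cancel ($2\ch_n^3 - 2\ch_n^3 = 0$), leaving a linear combination of $\ee_{\YHK}(\ch_n^2)$ and $\ee_{\YHK}(\ch_n)$ only; collecting coefficients yields a recursion of the form $s_{n+1} = \frac{n-4}{n}s_n + (\text{affine in } m_n)$. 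Substituting $m_n=n/3$ from Step~1 turns this into an explicit first-order linear recursion for $s_n$ with a polynomial forcing term, which can be solved by the standard method (or one simply verifies the closed form). Then $\var_{\YHK}(\ch_n) = s_n - (n/3)^2$, and after simplification this should collapse to $2n/45$; the restriction $n\ge 5$ reflects the fact that the variance formula needs the recursion to have been iterated far enough past the degenerate small cases (for $n=3,4$ the distribution of $\ch_n$ is still concentrated on too few values for $2n/45$ to hold, e.g. $\var_{\YHK}(\ch_3)=0\ne 2/15$). Alternatively, one can sidestep solving the $s_n$ recursion directly by working with the centred variable or with the factorial moment $\ee_{\YHK}[\ch_n(\ch_n-1)]$, whichever produces the cleanest telescoping; I would pick whichever substitution makes the $\ch_n^3$-cancellation most transparent.

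\emph{Main obstacle.} There is no conceptual difficulty here — everything reduces to the cubic-term cancellation (which is the structural reason the moment recursion closes) followed by solving an inhomogeneous linear recurrence. The only place to be careful is the bookkeeping of the affine forcing term in the $s_{n+1}$ recursion and pinning down the exact range of $n$ for which the clean formula $2n/45$ is valid versus where boundary effects from small trees still matter; I would handle that by computing $\var_{\YHK}(\ch_n)$ explicitly for $n=3,4,5$ from first principles to anchor the induction and confirm the threshold.
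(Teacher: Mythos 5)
Your proposal is correct and follows essentially the same route as the paper, which likewise obtains the corollary by substituting $\psi(x)=x$ and $\psi(x)=x^2$ into the recursion of Proposition~\ref{prop:yule:ch:fun} and solving the resulting first-order linear recurrences. Your recursions check out (the second-moment recursion is $s_{n+1}=\tfrac{n-4}{n}s_n+\tfrac{2n-2}{n}m_n+1$, whose vanishing coefficient at $n=4$ is exactly why $2n/45$ holds from $n=5$ onward), so the remaining work is only the routine bookkeeping you describe.
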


Recall that a sequence of numbers, $\{y_1,\dots,y_m\}$, is said to be  {\em positive} if each 
number in the sequence is greater than zero. 
It is called {\em log-concave} if  
 $y_{k-1}y_{k+1}\leq y_k^2$ holds for $2\leq k \leq m-1$.  Clearly, a positive sequence $\{y_k\}_{1\leq k \leq m}$ is log-concave if and only if the sequence $\{y_{k}/y_{k+1}\}_{1\leq k \leq m-1}$ is increasing. Therefore, a log-concave sequence is necessarily {\em unimodal}, that is, there exists an index $1\leq k \leq m$ such that 
\begin{equation}
\label{def:unimodal}
y_1\leq y_2 \leq \dots \leq y_k \geq y_{k+1} \geq \cdots \geq y_m
\end{equation}
holds. 
 Finally, a non-negative integer valued
random variable $Y$ with probability mass function $\{p_k: k\geq 0\}$  is log-concave if $\{p_k\}_{k\geq 0}$ is a log-concave sequence.

To show that the probability density function of $\ch_n$ is log-concave, we need the following lemma.

\begin{lemma}
\label{lem:lc:four}
Let $z_1,z_2,z_3,z_4$ be four positive numbers with  $z_2^2\geq z_1z_3$ and $z_3^2\geq z_2z_4$. Then we have
\begin{align*}
z_2z_3\geq z_1z_4~~~~~\mbox{and}~~~~~z_1z_3+z_2z_4\geq 2z_1z_4.
\end{align*}
\end{lemma}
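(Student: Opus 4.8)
The plan is to prove the two stated inequalities directly from the two hypotheses $z_2^2\ge z_1z_3$ and $z_3^2\ge z_2z_4$, since everything here is elementary positivity manipulation.

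First I would establish $z_2z_3\ge z_1z_4$. The natural move is to multiply the two hypotheses together: $z_2^2z_3^2\ge z_1z_2z_3z_4$. Since all four numbers are positive, $z_2z_3>0$, so I can divide both sides by $z_2z_3$ to obtain $z_2z_3\ge z_1z_4$ immediately. That takes care of the first claim.

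Next I would prove $z_1z_3+z_2z_4\ge 2z_1z_4$. Here I would treat the two summands on the left separately against the benchmark. Note $z_1z_3\ge z_1z_4$ would require $z_3\ge z_4$, which need not hold, so a termwise comparison fails; instead I would use the AM–GM inequality on the left side: $z_1z_3+z_2z_4\ge 2\sqrt{z_1z_3\cdot z_2z_4}=2\sqrt{(z_1z_4)(z_2z_3)}$. By the first part, $z_2z_3\ge z_1z_4>0$, hence $\sqrt{(z_1z_4)(z_2z_3)}\ge\sqrt{(z_1z_4)^2}=z_1z_4$, and combining gives $z_1z_3+z_2z_4\ge 2z_1z_4$, as required.

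I do not anticipate a serious obstacle; the only mild subtlety is recognizing that the second inequality should not be attacked termwise but via AM–GM feeding on the first inequality, so the ordering of the two parts matters — the first must be proved before the second. All divisions are legitimate because positivity of the $z_i$ is assumed throughout.

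\begin{proof}
Multiplying the two hypotheses $z_2^2\ge z_1z_3$ and $z_3^2\ge z_2z_4$ gives $z_2^2z_3^2\ge z_1z_2z_3z_4$. Dividing both sides by the positive quantity $z_2z_3$ yields $z_2z_3\ge z_1z_4$, which is the first inequality. For the second, by the arithmetic–geometric mean inequality applied to the positive numbers $z_1z_3$ and $z_2z_4$ we have
\begin{equation*}
z_1z_3+z_2z_4\ge 2\sqrt{(z_1z_3)(z_2z_4)}=2\sqrt{(z_1z_4)(z_2z_3)}.
\end{equation*}
Since $z_2z_3\ge z_1z_4>0$ by the first part, it follows that $\sqrt{(z_1z_4)(z_2z_3)}\ge\sqrt{(z_1z_4)^2}=z_1z_4$, and therefore $z_1z_3+z_2z_4\ge 2z_1z_4$, as claimed.
\epf
\end{proof}
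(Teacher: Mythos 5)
Your proof is correct, and the interesting difference lies in the second inequality. For the first inequality your argument (multiply the two hypotheses and cancel the positive factor $z_2z_3$) is essentially the same as the paper's, which writes the hypotheses as the chain of ratios $z_2/z_1\ge z_3/z_2\ge z_4/z_3$ and compares the outer terms. For the second inequality the paper argues by cases: when $(z_1-z_2)(z_3-z_4)\ge 0$ it expands this product to get the rearrangement-style bound $z_1z_3+z_2z_4\ge z_1z_4+z_2z_3$ and then invokes the first inequality, and in the remaining subcase ($z_1<z_2$, $z_3>z_4$) it bounds each summand by $z_1z_4$ termwise. Your AM--GM route avoids the case split entirely: $z_1z_3+z_2z_4\ge 2\sqrt{(z_1z_4)(z_2z_3)}\ge 2z_1z_4$, feeding on the first part exactly as you describe. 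This is shorter and arguably more transparent; note that the intermediate inequality $z_1z_3+z_2z_4\ge z_1z_4+z_2z_3$ used in the paper's Case 1 is genuinely false in the other regime (e.g.\ $z_1=z_4=1$, $z_2=z_3=2$), which is precisely why the paper needs cases and why a termwise attack fails, as you correctly observed. What the paper's argument buys in exchange is that it stays within sign manipulations of products, with no square roots or appeal to AM--GM; both are fully rigorous.
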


\begin{proof}
Since $z_i$ are positive for $1\leq i \leq 4$, from $z_2^2\geq z_1z_3$ and $z_3^2\geq z_2z_4$ if follows that
$$
\frac{z_2}{z_1}\geq \frac{z_3}{z_2} \geq \frac{z_4}{z_3}.
$$
Hence we have 
\begin{align}
\label{eq:first}
z_2z_3\geq z_1z_4,
\end{align}
which completes the proof of the first inequality in the lemma.\\

To prove the second inequality in the lemma, we consider the following two cases.

\noindent
{\bf Case 1:} $z_1\geq z_2$. Together with  $z_2^2\geq z_1z_3$, this implies $z_2\geq z_3$, and hence $z_3\geq z_4$ in view of $z_3^2\geq z_2z_4$. Therefore, we have
$$
(z_1-z_2)(z_3-z_4)\geq 0.
$$
This leads to $z_1z_3+z_2z_4\geq z_1z_4+z_2z_3\geq  2z_1z_4$,
where the last inequality follows from Eq.~(\ref{eq:first}).

\noindent
{\bf Case 2:}  $z_1< z_2$. If $z_3\leq z_4$, then we have $
(z_1-z_2)(z_3-z_4)\geq 0$, and hence $z_1z_3+z_2z_4\geq z_1z_4+z_2z_3\geq  2z_1z_4$, as required. Therefore, we may assume that $z_3>z_4$.  This implies $z_1z_3\geq z_1z_4$ and $z_2z_4\geq z_1z_4$, and hence $z_1z_3+z_2z_4\geq  2z_1z_4$, as required.
\epf
\end{proof}

Using the last lemma, we present the following theorem concerning the log-concavity of  the cherry distribution under the YHK  model.
\begin{theorem}
\label{thm:lc:yule}
Under the YHK model, we have
\begin{equation}
\label{eq:yhk:lc}
\pp _{\YHK}(\ch_n=k)^2\geq \pp _{\YHK}(\ch_n=k+1)\pp _{\YHK}(\ch_n=k-1)
\end{equation}
for $n>2$ and $1<k<n$.
\end{theorem}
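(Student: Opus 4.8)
The plan is to prove the log-concavity inequality \eqref{eq:yhk:lc} by induction on $n$, using the recursion \eqref{eq:cherry:yule} for $\pp_{\YHK}(\ch_{n+1}=k)$ together with Lemma~\ref{lem:lc:four}. Write $p_k := \pp_{\YHK}(\ch_n=k)$ and $q_k := \pp_{\YHK}(\ch_{n+1}=k)$, so that $q_k = \frac{2k}{n}p_k + \frac{n-2k+2}{n}p_{k-1}$. The base case is small: for $n=3$ the distribution of $\ch_3$ is concentrated at $k=1$ (and at $n=4$ one checks directly), so log-concavity holds trivially; for the inductive step we assume $p_{k-1}p_{k+1}\le p_k^2$ for all relevant $k$ and must deduce $q_{k-1}q_{k+1}\le q_k^2$.

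First I would expand $q_{k-1}q_{k+1}$ and $q_k^2$ using the recursion, so that both become quadratic forms in the four consecutive values $p_{k-2},p_{k-1},p_k,p_{k+1}$ with coefficients depending on $n$ and $k$. The goal is to show the difference $q_k^2 - q_{k-1}q_{k+1}$ is a nonnegative combination of the ``atomic'' log-concavity witnesses $p_{k-2}p_k - p_{k-1}^2 \le 0$, $p_{k-1}p_{k+1}-p_k^2\le 0$, and of the cross-term inequalities supplied by Lemma~\ref{lem:lc:four} applied to the quadruple $(z_1,z_2,z_3,z_4) = (p_{k-2},p_{k-1},p_k,p_{k+1})$ --- namely $p_{k-1}p_k \ge p_{k-2}p_{k+1}$ and $p_{k-2}p_k + p_{k-1}p_{k+1}\ge 2p_{k-2}p_{k+1}$. (Here I would need the inductive hypothesis to give exactly the two hypotheses $z_2^2\ge z_1z_3$, $z_3^2\ge z_2z_4$ of that lemma, after matching indices; there is a minor bookkeeping point that at the boundary some $p_j$ may be zero, which only helps.) So the core of the argument is a purely algebraic identity: expressing $n^2(q_k^2 - q_{k-1}q_{k+1})$ as $\alpha(p_k^2 - p_{k-1}p_{k+1}) + \beta(p_{k-1}^2 - p_{k-2}p_k) + \gamma(p_{k-1}p_k - p_{k-2}p_{k+1}) + \delta(p_{k-2}p_k + p_{k-1}p_{k+1} - 2p_{k-2}p_{k+1}) + (\text{manifestly nonneg. terms})$ with $\alpha,\beta,\gamma,\delta \ge 0$ for all admissible $n>3$, $1<k<n$.

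The main obstacle I expect is precisely verifying that such nonnegative coefficients $\alpha,\beta,\gamma,\delta$ exist for the full range of $k$ and $n$: after expansion one gets coefficients like $\frac{2k(n-2k)}{n^2}$, $\frac{(n-2k+2)(2k-2)}{n^2}$, etc., and their cross-products, and one must check the linear combination can be arranged with the correct signs. This is a finite but somewhat delicate computation; the likely clean path is to factor out common positive quantities (powers of $n$, and factors such as $k$, $n-2k+2$) early, reducing to an inequality among small-degree polynomials in $k$ and $n$ that can be handled by completing a square or by splitting into the cases $z_1\ge z_2$ versus $z_1<z_2$ in the spirit of the proof of Lemma~\ref{lem:lc:four}. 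Once the coefficients are pinned down, each summand is nonnegative --- the first two by the inductive hypothesis, the middle two by Lemma~\ref{lem:lc:four}, the remainder by inspection --- and the induction closes, giving \eqref{eq:yhk:lc} for all $n>2$.
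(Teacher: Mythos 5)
Your proposal follows essentially the same route as the paper's proof: induction on $n$ via the recursion \eqref{eq:cherry:yule}, expanding $q_k^2-q_{k-1}q_{k+1}$ as a quadratic form in the four consecutive probabilities $p_{k-2},p_{k-1},p_k,p_{k+1}$ and bounding it below using the inductive hypothesis together with Lemma~\ref{lem:lc:four} applied to exactly that quadruple. The algebraic rearrangement you flag as the ``main obstacle'' is precisely the displayed computation in the paper's proof, which does yield the required nonnegative combination, so your plan closes as intended.
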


\begin{proof}
For simplicity, we put $a_{n,k}:=\pp _{\YHK}(\ch_n=k)$. We prove this theorem by induction; the basic case $n=3$ is straight-forward. Now assuming that $n\ge 3$ and Eq.~(\ref{eq:yhk:lc}) holds for all $1<k<n$,  it suffices to show that 
\begin{align}
\label{eq:log-concave:pf:yhk}
 a_{n+1,k}^2\geq  a_{n+1,k-1} a_{n+1,k+1}
\end{align}
 for all $1<k\leq n$. Using the recursion described in Eq.~(\ref{eq:cherry:yule}), we have
 $$
  a_{n+1,k}^2= 4k^2a_{n,k}^2+(n+2-2k)^2a_{n,k-1}^2+4k(n+2-2k)a_{n,k}a_{n,k-1} $$
  and  $a_{n+1,k-1} a_{n+1,k+1}$ is equal to
\begin{align*}
  (2k+2)(2k-2)&a_{n,k-1}a_{n,k+1}+(2k+2)(n-2k+4)a_{n,k-2}a_{n,k+1} \\
  &+(n-2k)(2k-2)a_{n,k}a_{n,k-1}+(n-2k)(n-2k+4)a_{n,k}a_{n,k-2}.
 \end{align*}
 Therefore, by the inductive assumption and Lemma~\ref{lem:lc:four}
 we have
 \begin{align*}
  & a_{n+1,k}^2-a_{n+1,k-1} a_{n+1,k+1} \\
  &=  2[k(n-2k)+(n+2k)](a_{n,k}a_{n,k-1}-a_{n,k-2}a_{n,k+1}) +4k^2(a^2_{n,k}-a_{n,k-1}a_{n,k+1})  \\
   &\quad  +4a_{n,k-1}(a_{n,k-1}-a_{n,k-2}) +4(n+2-2k)^2(a_{n,k+1}^2-a_{n,k}a_{n,k-2})+4a_{n,k-2}(a_{n,k}-a_{n,k+1})\\
   &\geq 4a_{n,k+1}(a_{n,k-1}-a_{n,k-2})+4a_{n,k-2}(a_{n,k}-a_{n,k+1}) \\
   &=4(a_{n,k-1}a_{n,k+1}+a_{n,k}a_{n,k-2}-2a_{n,k-2}a_{n,k+1}) \geq 0,
 \end{align*}
 from which Eq.~(\ref{eq:log-concave:pf:yhk}) follows, as required.
 \epf
 \end{proof}

\bigskip
In the next result we compute the mean and the variance of pitchfork distribution $\pf_n$ under the YHK model, and calculate the covariance and correlation of $\pf_n$ and $\ch_n$. Note that the mean and the variance of $\pf_n$ was also obtained by~\citet[Theorem 4.4]{rosenberg06a}. Since the proof is similar to that of Corollary~\ref{cor:yule:ch}, we only outline the main step here. 

\begin{proposition}
\label{prop:yule:distr}
For $n\geq 7$ we have
\begin{equation}
\ee _{\YHK}(\pf_n)=\frac{n}{6},
\quad
\cov_{\YHK}(\pf_n,\ch_n)=-\frac{n}{45},
\quad
\mbox{and}
\quad
\var _{\YHK}(\pf_n)=\frac{23n}{420}.
\end{equation}

\end{proposition}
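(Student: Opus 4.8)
The plan is to imitate the proof strategy already used for Corollary~\ref{cor:yule:ch}, namely to extract moment recursions from the functional identity in Theorem~\ref{thm:yule:fun} by plugging in polynomial test functions $\varphi(x,y)$. Since we now need joint moments of $\pf_n$ and $\ch_n$, the relevant test functions are $\varphi(x,y)=x$, $\varphi(x,y)=x^2$, and $\varphi(x,y)=xy$, together with the already-known facts $\ee_{\YHK}(\ch_n)=n/3$ and $\var_{\YHK}(\ch_n)=2n/45$ from Corollary~\ref{cor:yule:ch}. Writing $p_n:=\ee_{\YHK}(\pf_n)$, first I would substitute $\varphi(x,y)=x$ into Theorem~\ref{thm:yule:fun}; the four terms on the right become $\tfrac{2}{n}\ee[\pf_n^2]$ (wait, more carefully: $\tfrac{2}{n}\ee[\pf_n\cdot\pf_n]$), $\tfrac{1}{n}\ee[\pf_n(\pf_n-1)]$, $\tfrac{2}{n}\ee[(\ch_n-\pf_n)(\pf_n+1)]$, and $\tfrac{1}{n}\ee[(n-\pf_n-2\ch_n)\pf_n]$, and after expanding, the quadratic terms in $\pf_n$ miraculously cancel (they must, since the left side is linear in the relevant sense), leaving a clean linear recursion of the form $p_{n+1}=p_n+(\text{something involving }\ee(\ch_n)=n/3)$. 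Solving that recursion with the base value $p_3=\pf(T_3)=1$ should give $p_n=n/6$ for $n$ large enough.

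Next I would compute $\cov_{\YHK}(\pf_n,\ch_n)$ by substituting $\varphi(x,y)=xy$ into Theorem~\ref{thm:yule:fun}. Writing $m_n:=\ee_{\YHK}(\pf_n\ch_n)$, the right-hand side expands into a combination of $\ee[\pf_n^2\ch_n]$, $\ee[\pf_n\ch_n]$, $\ee[\pf_n^2]$, $\ee[\pf_n\ch_n^2]$, $\ee[\ch_n^2]$, etc.; again the genuinely third-order terms (those that cannot be reduced) should cancel, leaving $m_{n+1}$ in terms of $m_n$, $\ee(\pf_n)=n/6$ (already in hand), $\ee(\ch_n)=n/3$, and $\ee(\ch_n^2)=\var+\text{mean}^2=2n/45+n^2/9$. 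This produces a linear recursion for $m_n$; combining its solution with $\ee(\pf_n)\ee(\ch_n)=n^2/18$ yields $\cov_{\YHK}(\pf_n,\ch_n)=-n/45$ for $n$ sufficiently large. Finally, for $\var_{\YHK}(\pf_n)$ I would substitute $\varphi(x,y)=x^2$; letting $s_n:=\ee_{\YHK}(\pf_n^2)$, the expansion involves $\ee[\pf_n^3]$, $\ee[\pf_n^2]$, $\ee[\pf_n^2\ch_n]$, $\ee[\pf_n\ch_n]$, $\ee[\pf_n^2]$, etc.; the $\ee[\pf_n^3]$ terms cancel, the $\ee[\pf_n^2\ch_n]$ and $\ee[\pf_n\ch_n]$ pieces get replaced using $m_n$ (now known), and we get a linear recursion for $s_n$. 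Solving it and subtracting $(n/6)^2$ gives $\var_{\YHK}(\pf_n)=23n/420$.

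The reason these recursions are all linear (rather than involving unknown higher moments) is the key structural point: whenever we feed a degree-$d$ monomial into Theorem~\ref{thm:yule:fun}, the degree-$(d+1)$ contributions on the right, each carrying a prefactor $1/n$ and arising from the four leaf-attachment cases, have coefficients summing to zero because the four attachment probabilities sum to one — this is exactly the same cancellation that makes the mean of $\ch_n$ come out linear. So I do not actually need a separate argument for $\ee[\pf_n^3]$ or $\ee[\pf_n^2\ch_n]$; those apparently-dangerous terms self-destruct, modulo lower-order leftovers. Because of this, the main obstacle is purely bookkeeping: carefully expanding the four terms, tracking the lower-order residue after the top-order cancellation, assembling it against the already-known expressions for $\ee(\ch_n)$, $\var(\ch_n)$, $\ee(\pf_n)$, $m_n$, and solving each resulting first-order linear recursion with the correct initial condition. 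One genuine subtlety worth flagging is the range of validity: the stated identities only hold for $n\ge 7$ (respectively $n\ge 5$ for $\ch_n$), because for very small $n$ the boundary cases in Theorem~\ref{thm:yule:pf} (the constraint $1<b<n$, and similar edge effects) mean the moment recursions pick up correction terms; once $n$ is large enough that these boundary probabilities vanish, the recursions become exact and the closed forms take over. I would therefore check the small cases $n=3,4,5,6$ directly (they are finite computations) to pin down where the formulas become valid, and only claim them from $n\ge 7$ onward as stated.
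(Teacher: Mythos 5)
Your proposal follows essentially the same route as the paper's proof: substitute $\varphi(x,y)=x$, $xy$, and $x^2$ into Theorem~\ref{thm:yule:fun}, invoke Corollary~\ref{cor:yule:ch}, and solve the resulting first-order linear recursions, with the top-degree cancellation you describe being exactly what occurs. One minor wording slip: in the $\varphi(x,y)=x^2$ step the third-order term $\ee_{\YHK}[\pf_n^2\ch_n]$ cancels outright (as your own degree argument predicts) rather than being ``replaced using $m_n$''; only $\ee_{\YHK}[\pf_n\ch_n]$ survives to be substituted.
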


\begin{proof}
Applying  Theorem~\ref{thm:yule:fun} to $\varphi(x,y)=x$ and using Corollary~\ref{cor:yule:ch}, it follows that
\begin{align*}
\ee _{\YHK}(\pf_{n+1})&=\frac{1}{n}\ee _{\YHK}[2\pf_n^2+\pf_n(\pf_n-1)+2(\ch_n-\pf_n)(\pf_n+1)+(n-\pf_n-2\ch_n)\pf_n] \\
&=\frac{2}{3}+\frac{n-3}{n}\ee _{\YHK}(A_n)
\end{align*}
holds for $n>2$. Together with $\ee _{\YHK}(A_3)=1$, we have $\ee _{\YHK}(A_n)=n/6$ for $n\geq 4$, as required.\\

Next, applying Theorem~\ref{thm:yule:fun} to the function $\varphi(x,y)=xy$ 
shows that
\begin{align*}
\ee _{\YHK}(\pf_{n+1}\ch_{n+1}) 
=\frac{n-5}{n} \ee _{\YHK} (\pf_n\ch_n)+\frac{n-1}{n} \ee _{\YHK}(\pf_n)+\frac{2}{n}\ee _{\YHK}(\ch_n^2)
\end{align*}
holds for $n>2$. By Corollary~\ref{cor:yule:ch} and $ \ee _{\YHK}(\pf_n)=n/6$ it follows that
\begin{align*}
\cov_{\YHK}(\pf_{n+1},\ch_{n+1})
&=\ee _{\YHK}(\pf_{n+1}\ch_{n+1})-\frac{(n+1)^2}{18} \\
&=\frac{n-5}{n} \ee _{\YHK} (\pf_n\ch_n)+\frac{n-1}{6}+\frac{2(5n+2)}{45} -\frac{(n+1)^2}{18}\\
&=\frac{n-5}{n}\cov_{\YHK}(\pf_n,\ch_n)-\frac{2}{15}
\end{align*}
holds for $n\geq 5$. Solving the last recursion equation, 
we obtain  $\cov_{\YHK}(\pf_n,\ch_n) = -n/45$ for $n\geq 6$, as required.

Now the formula on $\var _{\YHK}(\pf_n)$ can be established by an argument similar to that for  $\cov_{\YHK}(\pf_n,\ch_n) $ by  applying Theorem~\ref{thm:yule:fun} to the function $\varphi(x,y)=x^2$.
\epf
\end{proof}

Interestingly, the last proposition imply that the correlation coefficient between the cherry and pitchfork distribution  under the YHK model is a negative constant for $n\ge 7$. Note that negative correlation is to be expected as the more cherries are found in a tree, the more likely that there are fewer pitchforks in that tree.

\begin{corollary}
\label{cor:yule:cor}
Under the YHK model, the correlation coefficient $\corr _{\YHK}(\pf_n,\ch_n)$ between $\pf_n$ and $\ch_n$ is $-\sqrt{14/69}$, which is independent of $n$  for $n \ge 7$.
\end{corollary}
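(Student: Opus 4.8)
The plan is to compute $\corr_{\YHK}(\pf_n,\ch_n)$ directly from the moments already assembled in Proposition~\ref{prop:yule:distr} and Corollary~\ref{cor:yule:ch}. Recall that the correlation coefficient is, by definition,
\begin{equation*}
\corr_{\YHK}(\pf_n,\ch_n)=\frac{\cov_{\YHK}(\pf_n,\ch_n)}{\sqrt{\var_{\YHK}(\pf_n)\,\var_{\YHK}(\ch_n)}}.
\end{equation*}
So the entire proof is a substitution: plug in $\cov_{\YHK}(\pf_n,\ch_n)=-n/45$, $\var_{\YHK}(\pf_n)=23n/420$, and $\var_{\YHK}(\ch_n)=2n/45$, all of which are valid simultaneously once $n\ge 7$ (the largest threshold among the three statements).

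Carrying out the arithmetic, the denominator is $\sqrt{(23n/420)(2n/45)}=n\sqrt{46/(420\cdot 45)}=n\sqrt{46/18900}$, and dividing the numerator $-n/45$ by this gives a ratio in which the factor of $n$ cancels, leaving a pure constant. Simplifying $\tfrac{1/45}{\sqrt{46/18900}}=\sqrt{\tfrac{18900}{45^2\cdot 46}}=\sqrt{\tfrac{18900}{2025\cdot 46}}=\sqrt{\tfrac{18900}{93150}}$, and reducing the fraction $18900/93150$ to lowest terms yields $14/69$. Hence $\corr_{\YHK}(\pf_n,\ch_n)=-\sqrt{14/69}$, and since every appearance of $n$ has cancelled, this value does not depend on $n$.

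There is essentially no obstacle here: the only things to be careful about are (i) quoting the correct validity range, namely $n\ge 7$, so that all three moment formulas hold at once, and (ii) getting the fraction $18900/93150$ correctly reduced (dividing by $675$ gives $28/138$, then by $2$ gives $14/69$; note $69=3\cdot 23$ and $14=2\cdot 7$ share no factor, so this is in lowest terms). The sign is negative because the covariance is negative while both variances are positive, consistent with the heuristic that more cherries force fewer pitchforks.
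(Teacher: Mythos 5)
Your proposal is correct and is exactly the paper's argument: the paper also obtains the corollary by substituting $\cov_{\YHK}(\pf_n,\ch_n)=-n/45$, $\var_{\YHK}(\pf_n)=23n/420$ and $\var_{\YHK}(\ch_n)=2n/45$ from Proposition~\ref{prop:yule:distr} and Corollary~\ref{cor:yule:ch} into the definition of the correlation coefficient, with the factor $n$ cancelling. Your arithmetic reducing $18900/93150$ to $14/69$ and the validity range $n\ge 7$ are both right.
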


\begin{proof}
The proposition follows directly from  Corollary~\ref{cor:yule:ch} and Proposition~\ref{prop:yule:distr}.
\epf
\end{proof}

\section{Subtree Distributions under the PDA model}
\label{sec:PDA}
In this section, we shall investigate the cherry and pitchfork distributions   under the PDA model. Similar to the study on the YHK model in Section~\ref{sec:YHK}, our starting point is the following recursion relating the joint distribution of cherries and pitchforks. 


\begin{theorem}
\label{thm:pda:pf}
We have
{
\begin{align*}
\puni&(\pf_{n+1}=a, \ch_{n+1}=b)
=\frac{n+3a-b-1}{2n-1}\puni(\pf_{n}=a,\ch_n=b)+\frac{a+1}{2n-1}\puni(\pf_{n}=a+1, \ch_n =b-1) \\
&\quad +\frac{3(b-a+1)}{2n-1}\puni(\pf_{n}=a-1, \ch_n=b)+\frac{n-a-2b+2}{2n-1}\puni(\pf_{n}=a, \ch_n=b-1)
\end{align*}
}
for $n>3$ and $1<b<n$. 
Moreover, $\puni(\pf_3=a,\ch_3=b)$ equals to $1$ if $(a,b)=(1,1)$ and $0$ otherwise. 
\end{theorem}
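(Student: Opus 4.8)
The plan is to follow the same strategy as in the proof of Theorem~\ref{thm:yule:pf}, but using the Markov realisation of the PDA model described in Section~\ref{sec:pre}, in which at each step a new leaf is attached to an edge sampled uniformly from the \emph{whole} edge set of the current tree rather than from its pendant edges. Fix $n>3$, and let $T_2, T_3, \dots, T_{n+1}$ be the sequence of random trees with $T_{i+1}=T_i[e_i]$, where $e_i$ is uniform over $E(T_i)$; note $|E(T_i)| = 2i-1$, so in particular $|E(T_n)| = 2n-1$. Since this process generates each tree in $\tsp_m$ with probability $1/\varphi(m)$, we have $\puni(\pf_{n+1}=a,\ch_{n+1}=b) = \pp(\pf(T_{n+1})=a, \ch(T_{n+1})=b)$, and by the law of total probability this equals $\sum_{p,q} \pp(\pf(T_{n+1})=a, \ch(T_{n+1})=b \mid \pf(T_n)=p, \ch(T_n)=q)\,\puni(\pf_n=p, \ch_n=q)$; leaf labels play no role since cherries and pitchforks are shape statistics.

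By Corollary~\ref{cor:range}, the conditional probability in the summand vanishes unless $(p,q)$ is one of $(a,b)$, $(a+1,b-1)$, $(a-1,b)$, $(a,b-1)$, so only four terms remain. For each of these, Proposition~\ref{prop:type} pins down the set $E_i(T_n)$ that must contain the chosen edge $e_n$: the transition $(a,b)\to(a,b)$ requires $e_n\in E_4(T_n)$; $(a+1,b-1)\to(a,b)$ requires $e_n\in E_1(T_n)$; $(a-1,b)\to(a,b)$ requires $e_n\in E_2(T_n)$; and $(a,b-1)\to(a,b)$ requires $e_n\in E_3(T_n)$. Since $e_n$ is uniform over all $2n-1$ edges of $T_n$, the four conditional probabilities equal $|E_i(T_n)|/(2n-1)$; substituting the relevant value of $(\pf(T_n),\ch(T_n))$ into the cardinality formulae of Lemma~\ref{lem:edge-set} gives, respectively, $(n-1+3a-b)/(2n-1)$, $(a+1)/(2n-1)$, $3(b-a+1)/(2n-1)$, and $(n-a-2b+2)/(2n-1)$. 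Plugging these into the total-probability sum yields exactly the stated recursion. For the base case, the unique planted phylogenetic tree on three leaves is a pitchfork with one cherry, whence $\puni(\pf_3=a,\ch_3=b)$ is $1$ if $(a,b)=(1,1)$ and $0$ otherwise.

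I do not expect a genuine obstacle here: the argument is in fact slightly simpler than its YHK analogue, because sampling from \emph{all} edges lets us invoke Lemma~\ref{lem:edge-set} uniformly across all four cases, so the pendant-edge refinement of Lemma~\ref{lem:pen:edge-set} is not needed. The one point requiring a little care is bookkeeping: one must substitute $(\pf(T_n),\ch(T_n)) = (p,q)$ into the edge-count formulae \emph{inside each conditional term} before summing, and check that the resulting coefficients are nonnegative on the range $n>3$, $1<b<n$ (equivalently, that each $E_i(T_n)$ may be empty but is never assigned negative size), which is immediate from Lemma~\ref{lem:edge-set}.
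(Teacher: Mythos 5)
Your proposal is correct and follows essentially the same route as the paper: the law of total probability applied to the edge-attachment Markov realisation of the PDA model, reduction to four cases via Corollary~\ref{cor:range} and Proposition~\ref{prop:type}, and conditional probabilities $|E_i(T_n)|/(2n-1)$ computed from Lemma~\ref{lem:edge-set}, with all four coefficients matching the statement. Your observation that Lemma~\ref{lem:pen:edge-set} is not needed here (unlike in the YHK case) is exactly the simplification the paper exploits in its sketch.
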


\begin{proof}
We give a sketch of the proof as it is similar to the proof of Theorem \ref{thm:yule:pf}. 

The only modifications needed are  the conditional   
probabilities in the four cases there. For case (i),
 by Proposition~\ref{prop:type} this case occurs if and only if $e_n\in E_4(T_n)$, and hence the conditional probability is  
$|E_4(T_n)|/|E(T_n)|= (n+3a-b-1)/(2n-1)$
by Lemma~\ref{lem:edge-set}.  Using similar arguments, for case (ii), the conditional probability is  $|E_1(T_n)|/|E(T_n)|=
(a+1)/(2n-1)$. For case (iii), the conditional probability is $|E_2(T_n)|/|E(T_n)|= 3(b-a+1)(2n-1).$ Finally, for  case (iv), the conditional probability is 
$|E_3(T_n)/|E(T_n)|= (n-a-2b+2)/(2n-1).$ The rest of the proof proceeds as in the proof of Theorem \ref{thm:yule:pf}.
\epf
\end{proof}

Using an approach similar to the remark after Theorem~\ref{thm:yule:pf}, the last theorem leads to a dynamic programming approach to compute the joint distribution of cherry and pitchfork. In addition, we present the following result which 
will enable us to study the moments of $\pf_n$ and $\ch_n$, whose proof
is similar to that of Theorem~\ref{thm:yule:fun} and hence omitted.
\begin{theorem}
\label{thm:pda:fun}
Let $\varphi: \mathbb{R}  \times \mathbb{R}  \to \mathbb{R} $ be an arbitrary function. For $n>3$ we have  
\begin{align*}
& \ee _{\PDA} \varphi(\pf_{n+1}, \ch_{n+1}) \\  = & \frac{1}{2n-1} \ee _{\PDA}[(n+3\pf_{n}-\ch_n-1) \: \varphi (\pf_{n}, \ch_{n})] + \frac{1}{2n-1} \ee _{\PDA}[\pf_{n} \: \varphi (\pf_{n}-1, \ch_{n}+1)]  \\
& \quad  + \frac{3}{2n-1} \ee _{\PDA}[ (\ch_{n}-\pf_n) \: \varphi (\pf_{n}+1, \ch_{n})]  + \frac{1}{2n-1}  \ee _{\PDA}[(n-\pf_{n}-2\ch_{n}) \: \varphi (\pf_{n}, \ch_{n}+1)].
\end{align*}
\end{theorem}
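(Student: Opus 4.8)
The plan is to follow the proof of Theorem~\ref{thm:yule:fun} essentially verbatim, with the YHK recursion replaced by the PDA recursion of Theorem~\ref{thm:pda:pf}. First I would fix $n>3$ and reintroduce the indicator function $I_{(a,b)}$ on $\mathbb{R}\times\mathbb{R}$, so that for any $\varphi$ one has $\varphi(a,b)\,\puni(\pf_n=a,\ch_n=b)=\ee _{\PDA}[\varphi(\pf_n,\ch_n)\,I_{(a,b)}(\pf_n,\ch_n)]$ and, on the left-hand side, $\puni(\pf_{n+1}=a,\ch_{n+1}=b)\,\varphi(a,b)=\ee _{\PDA}[\varphi(\pf_{n+1},\ch_{n+1})\,I_{(a,b)}(\pf_{n+1},\ch_{n+1})]$.

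Next I would multiply the scalar recursion in Theorem~\ref{thm:pda:pf} through by $\varphi(a,b)$ and rewrite each of the four resulting terms as an expectation of $\varphi$ at a shifted argument weighted by a linear function of $(\pf_n,\ch_n)$. Concretely: the first term gives $\frac{1}{2n-1}\ee _{\PDA}[(n+3\pf_n-\ch_n-1)\,\varphi(\pf_n,\ch_n)\,I_{(a,b)}(\pf_n,\ch_n)]$ with no reindexing; the second, writing $(p,q)=(a+1,b-1)$ so that $a+1=p$ and $\varphi(a,b)=\varphi(p-1,q+1)$, gives $\frac{1}{2n-1}\ee _{\PDA}[\pf_n\,\varphi(\pf_n-1,\ch_n+1)\,I_{(a,b)}(\pf_n-1,\ch_n+1)]$; the third, writing $(p,q)=(a-1,b)$ so that $b-a+1=q-p$, gives $\frac{3}{2n-1}\ee _{\PDA}[(\ch_n-\pf_n)\,\varphi(\pf_n+1,\ch_n)\,I_{(a,b)}(\pf_n+1,\ch_n)]$; and the fourth, writing $(p,q)=(a,b-1)$ so that $n-a-2b+2=n-p-2q$, gives $\frac{1}{2n-1}\ee _{\PDA}[(n-\pf_n-2\ch_n)\,\varphi(\pf_n,\ch_n+1)\,I_{(a,b)}(\pf_n,\ch_n+1)]$. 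Summing the resulting identity over all $(a,b)\in\mathbb{Z}^2$ then collapses each family of indicators, since $\sum_{a,b}I_{(a,b)}(x,y)=1$ for every $(x,y)$ and the weights inside the expectations no longer depend on $(a,b)$, which is exactly the claimed formula.

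The only point requiring a word of care is that Theorem~\ref{thm:pda:pf} states the scalar recursion only for $1<b<n$, so before summing I would adopt the usual convention that $\puni(\pf_n=\cdot,\ch_n=\cdot)$ vanishes outside the admissible range and check the recursion persists at the boundary: the range $b\ge n$ never arises because $\ch_{n+1}\le (n+1)/2$, and at $b=1$ every term involving $\ch_n=0$ drops out since $\ch_n\ge 1$, with the surviving identity verified directly on caterpillar trees (the only trees with a single cherry). Thus the main obstacle is purely bookkeeping: confirming that the three reindexing substitutions produce precisely the weights $\pf_n$, $3(\ch_n-\pf_n)$ and $n-\pf_n-2\ch_n$, after which the summation over $(a,b)$ concludes the proof exactly as for Theorem~\ref{thm:yule:fun}.
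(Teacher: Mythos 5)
Your proposal is correct and is essentially the paper's own argument: the paper omits the proof precisely because it is the indicator-function computation from Theorem~\ref{thm:yule:fun} applied verbatim to the recursion of Theorem~\ref{thm:pda:pf}, and your three reindexing substitutions yield exactly the weights $\pf_n$, $3(\ch_n-\pf_n)$ and $n-\pf_n-2\ch_n$ as claimed. Your extra remark on the boundary cases ($b=1$ and the vacuous range $b\ge n$) is a point the paper glosses over and is handled correctly.
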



\bigskip
In the remainder of this section we shall apply Theorem~\ref{thm:pda:fun} 
to study cherry and pitchfork distributions under the PDA model. To begin with, we present the following functional recursion between cherry distribution, which will enable us to obtain the exact formula for cherry distribution and show that cherry distribution is log-concave under this model.

\begin{proposition}
\label{prop:pda:ch:fun}
Let $\psi: \mathbb{R} \to \mathbb{R}$ be an arbitrary function. Then for $n>2$ we have
\begin{align}
\label{eq:ch:pda:fun}
\ee _{\PDA} &\psi(\ch_{n+1}) = \frac{1}{2n-1} \ee _{\PDA}[(n+2\ch_n-1) \: \psi (\ch_{n})] + \frac{1}{2n-1} \ee _{\PDA}[(n-2\ch_{n}) \: \psi (\ch_{n}+1)]
\end{align}
and
\begin{equation}
\label{eq:cherry:pda}
\pp _{\PDA}(\ch_{n+1}=k) = \frac{n+2k-1}{2n-1}\pp _{\PDA}(\ch_{n}=k)+\frac{n-2k+2}{2n-1}\pp _{\PDA}(\ch_{n}=k-1), \quad { 1\leq k<n}.
\end{equation}
\end{proposition}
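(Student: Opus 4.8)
The plan is to mirror exactly the derivation of Proposition~\ref{prop:yule:ch:fun} from Theorem~\ref{thm:yule:fun}, now starting from the PDA functional recursion in Theorem~\ref{thm:pda:fun}. First I would take an arbitrary $\psi:\mathbb{R}\to\mathbb{R}$ and define the two-variable function $\varphi^*(x,y)=\psi(y)$, which ignores its first argument. Substituting $\varphi^*$ into the identity of Theorem~\ref{thm:pda:fun}, each term $\varphi^*(\pf_n,\ch_n)$, $\varphi^*(\pf_n-1,\ch_n+1)$, $\varphi^*(\pf_n+1,\ch_n)$, $\varphi^*(\pf_n,\ch_n+1)$ collapses to $\psi(\ch_n)$, $\psi(\ch_n+1)$, $\psi(\ch_n)$, $\psi(\ch_n+1)$ respectively. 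Collecting the coefficients of $\psi(\ch_n)$ gives $\frac{1}{2n-1}\ee_{\PDA}[(n+3\pf_n-\ch_n-1)\psi(\ch_n)] + \frac{3}{2n-1}\ee_{\PDA}[(\ch_n-\pf_n)\psi(\ch_n)]$, and the $\pf_n$ terms cancel to leave $\frac{1}{2n-1}\ee_{\PDA}[(n+2\ch_n-1)\psi(\ch_n)]$; collecting the coefficients of $\psi(\ch_n+1)$ gives $\frac{1}{2n-1}\ee_{\PDA}[\pf_n\psi(\ch_n+1)] + \frac{1}{2n-1}\ee_{\PDA}[(n-\pf_n-2\ch_n)\psi(\ch_n+1)]$, where again the $\pf_n$ terms cancel, leaving $\frac{1}{2n-1}\ee_{\PDA}[(n-2\ch_n)\psi(\ch_n+1)]$. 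This is precisely Eq.~(\ref{eq:ch:pda:fun}).

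Next I would derive Eq.~(\ref{eq:cherry:pda}) by specializing $\psi$ to the indicator $\psi(x)=I_k(x)$, which is $1$ when $x=k$ and $0$ otherwise, exactly as in the proof of Proposition~\ref{prop:yule:ch:fun}. The left side becomes $\ee_{\PDA} I_k(\ch_{n+1}) = \pp_{\PDA}(\ch_{n+1}=k)$. On the right, $\ee_{\PDA}[(n+2\ch_n-1)I_k(\ch_n)] = (n+2k-1)\pp_{\PDA}(\ch_n=k)$ since the factor $n+2\ch_n-1$ equals $n+2k-1$ on the event $\{\ch_n=k\}$, and $\ee_{\PDA}[(n-2\ch_n)I_k(\ch_n+1)] = (n-2(k-1))\pp_{\PDA}(\ch_n=k-1) = (n-2k+2)\pp_{\PDA}(\ch_n=k-1)$ since $I_k(\ch_n+1)=1$ forces $\ch_n=k-1$. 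Dividing through by $2n-1$ yields the stated recursion for $1\le k<n$.

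There is no substantive obstacle here: the only things to watch are the bookkeeping of which coefficient multiplies which shifted argument of $\psi$, and the cancellation of all $\pf_n$-dependent terms, which is exactly what makes the reduction to a one-variable recursion possible (the same cancellation that occurs in the YHK case). One should also note the boundary behaviour — taking $\psi=I_k$ with $k=1$ recovers the correct formula because the $k-1=0$ term vanishes in the appropriate regime, matching the convention that $\pp_{\PDA}(\ch_n=0)=0$ — but since the statement restricts to $1\le k<n$ and invokes Theorem~\ref{thm:pda:fun} only for $n>2$, no special handling is required beyond what is already implicit.

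\begin{proof}
Given $\psi:\mathbb{R}\to\mathbb{R}$ as in the statement, define $\varphi^*(x,y)=\psi(y)$ on $\mathbb{R}\times\mathbb{R}$ and apply Theorem~\ref{thm:pda:fun} to $\varphi^*$. Since $\varphi^*(\pf_n,\ch_n)=\varphi^*(\pf_n+1,\ch_n)=\psi(\ch_n)$ and $\varphi^*(\pf_n-1,\ch_n+1)=\varphi^*(\pf_n,\ch_n+1)=\psi(\ch_n+1)$, collecting the coefficients of $\psi(\ch_n)$ and of $\psi(\ch_n+1)$ and using the cancellation of the $\pf_n$ terms gives
\begin{align*}
\ee_{\PDA}\psi(\ch_{n+1}) &= \frac{1}{2n-1}\ee_{\PDA}\big[(n+3\pf_n-\ch_n-1)\psi(\ch_n) + 3(\ch_n-\pf_n)\psi(\ch_n)\big] \\
&\qquad + \frac{1}{2n-1}\ee_{\PDA}\big[\pf_n\psi(\ch_n+1) + (n-\pf_n-2\ch_n)\psi(\ch_n+1)\big] \\
&= \frac{1}{2n-1}\ee_{\PDA}[(n+2\ch_n-1)\psi(\ch_n)] + \frac{1}{2n-1}\ee_{\PDA}[(n-2\ch_n)\psi(\ch_n+1)],
\end{align*}
which is Eq.~(\ref{eq:ch:pda:fun}). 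Taking $\psi(x)=I_k(x)$, equal to $1$ if $x=k$ and $0$ otherwise, the left-hand side becomes $\pp_{\PDA}(\ch_{n+1}=k)$; on the right-hand side, $(n+2\ch_n-1)I_k(\ch_n)=(n+2k-1)I_k(\ch_n)$ and $(n-2\ch_n)I_k(\ch_n+1)=(n-2k+2)I_k(\ch_n+1)$, so taking expectations yields Eq.~(\ref{eq:cherry:pda}) for $1\le k<n$.
\epf
\end{proof}
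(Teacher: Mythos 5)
Your proposal is correct and follows exactly the paper's own argument: specialise Theorem~\ref{thm:pda:fun} to $\varphi^*(x,y)=\psi(y)$, use the cancellation of the $\pf_n$-terms to get Eq.~(\ref{eq:ch:pda:fun}), and then take $\psi=I_k$ to obtain Eq.~(\ref{eq:cherry:pda}). The only difference is that you spell out the coefficient bookkeeping that the paper leaves implicit, which is fine.
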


\begin{proof}
Let $\psi: \mathbb{R} \to \mathbb{R}$ be an arbitrary function as in the statement of the proposition. Then $\varphi^*(x,y)=\psi(y)$ is an function on $\mathbb{R} \times \mathbb{R}$. Now applying Theorem~\ref{thm:pda:fun} to the function $\varphi^*$ leads to Eq.~(\ref{eq:ch:pda:fun}).  Finally, Eq.~(\ref{eq:cherry:pda}) follows from Eq.~(\ref{eq:ch:pda:fun}) by taking $\psi(x)=I_k(x)$, where $I_k(x)$ equals $1$ if $x=k$, and $0$ otherwise.
\epf
\end{proof}

Note that the recursion presented in the last proposition enables us to study the moments of cherry distribution under the PDA model. As an example, we present below an alternative computation for the mean and the variance of $\ch_n$. Since the techniques used to solve difference equations under this model is rather different from that used under the YHK model (i.e., Corollary~\ref{cor:yule:ch}), a complete proof is included here.
Note that in the proof we will use the following well-known Faulhaber's formulae (also known as Bernoulli's formulae) concerning the sum of powers of integers ~\citep[see e.g.][]{con96}.
\begin{align*}
\sum_{i=1}^n i^2 &= \frac{n(n+1)(2n+1)}{6}, 
\hspace{.8in}
&\sum_{i=1}^n i^3 &= \frac{n^2(n+1)^2}{4}, \\
\sum_{i=1}^n i^4 &=  \frac{n(n+1)(2n+1)(3n^2 + 3n-1)}{30}, 
&\sum_{i=1}^n i^5 &= \frac{n^2 (n+1)^2 (2n^2 + 2n-1)}{12}.
\end{align*}

\begin{corollary}
\label{cor:pda:ch}
~\citep[Proposition 5]{chang2010limit}
For $n\geq 2$  we have 
$$
\ee _{\PDA}(\ch_n)=\frac{n(n-1)}{2(2n-3)} \sim \frac{n}{4}
~~~\mbox{and}~~~
\var _{\PDA}(\ch_n)=\frac{n(n-1)(n-2)(n-3)}{2(2n-3)^2(2n-5)}
 \sim \frac{n}{16}.
$$
\end{corollary}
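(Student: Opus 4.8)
The plan is to extract a closed-form recursion for the first moment $\ee_{\PDA}(\ch_n)$ from Proposition~\ref{prop:pda:ch:fun} by substituting $\psi(x)=x$, and then for the second moment by substituting $\psi(x)=x^2$; the variance then follows from $\var_{\PDA}(\ch_n)=\ee_{\PDA}(\ch_n^2)-\ee_{\PDA}(\ch_n)^2$. First I would put $m_n:=\ee_{\PDA}(\ch_n)$ and $s_n:=\ee_{\PDA}(\ch_n^2)$. Taking $\psi(x)=x$ in Eq.~(\ref{eq:ch:pda:fun}) and expanding $(n+2\ch_n-1)\ch_n+(n-2\ch_n)(\ch_n+1)$ collapses the $\ch_n^2$ terms and yields a first-order linear recursion of the form $(2n-1)m_{n+1}=(2n-1)m_n + (\text{linear in }n)\cdot m_n + (\text{const})\cdot n + \cdots$, i.e. $m_{n+1}=\alpha_n m_n+\beta_n$ with rational coefficients $\alpha_n,\beta_n$. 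With the initial value $m_2=1$ (from $\ch_2=1$ a.s.), I would either solve this telescoping recursion directly or simply verify by induction that $m_n=\frac{n(n-1)}{2(2n-3)}$ satisfies it; the asymptotic $m_n\sim n/4$ is then immediate.

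For the variance, taking $\psi(x)=x^2$ in Eq.~(\ref{eq:ch:pda:fun}) gives $(2n-1)\ee_{\PDA}(\ch_{n+1}^2)=\ee_{\PDA}\big[(n+2\ch_n-1)\ch_n^2+(n-2\ch_n)(\ch_n+1)^2\big]$. Expanding the bracket, the $\ch_n^3$ contributions cancel, leaving an expression that is a linear combination of $\ch_n^2$, $\ch_n$, and a constant; hence $s_{n+1}=\gamma_n s_n + (\text{linear in }n)\,m_n + (\text{const})$, another first-order linear recursion, now with an inhomogeneous term involving the already-known $m_n$. After substituting the closed form for $m_n$, the inhomogeneous part becomes an explicit rational function of $n$. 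I would then solve for $s_n$ — this is where Faulhaber's formulae enter, since the standard integrating-factor/telescoping solution of such a recursion produces partial sums $\sum i^2,\sum i^3,\sum i^4$ — and finally compute $s_n-m_n^2$ and check it simplifies to $\frac{n(n-1)(n-2)(n-3)}{2(2n-3)^2(2n-5)}$, with asymptotic $n/16$.

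The main obstacle is the bookkeeping in the second-moment step: after forming $s_n-m_n^2$ one obtains a rational function with denominator a product of several linear factors in $n$, and one must verify it collapses to the clean factored form $\frac{n(n-1)(n-2)(n-3)}{2(2n-3)^2(2n-5)}$. Rather than grinding through the summation, the cleaner route is probably to guess the closed forms for $m_n$ and $s_n$ (the former is already given in the statement; the latter can be reverse-engineered from the claimed variance via $s_n=\var_{\PDA}(\ch_n)+m_n^2$) and then verify both recursions by direct substitution and induction, checking the base case $n=2$ separately. This turns the problem into polynomial identity verification, which is routine, and sidesteps the need to explicitly telescope with Faulhaber's formulae — though I would mention Faulhaber's identities as the tool one uses if solving the recursion from scratch, matching the paper's stated intent. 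Care is needed with the range of validity: $\ch_2=1$ deterministically, so the recursions start from $n=2$, and one should confirm the formulae hold for all $n\ge 2$ as claimed rather than only for large $n$.
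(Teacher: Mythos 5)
Your proposal is correct and follows essentially the same route as the paper: substitute $\psi(x)=x$ and $\psi(x)=x^2$ into Eq.~(\ref{eq:ch:pda:fun}), observe the cancellation of the higher-order terms, solve the resulting first-order linear recursions (the paper multiplies by the integrating factors $2n-1$ and $(2n-1)(2n-3)$ and telescopes using the Faulhaber sums, exactly as you anticipate), and then form $\var_{\PDA}(\ch_n)=\ee_{\PDA}(\ch_n^2)-\ee_{\PDA}(\ch_n)^2$. Your suggested shortcut of guessing the closed forms and verifying them by induction is only a cosmetic variation on the same argument, and your attention to the small-$n$ base cases is appropriate since the functional recursion is stated for $n>2$.
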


\begin{proof}
We may assume that $n\ge 3$ in the remainder of the proof as the case $n=2$ clearly holds.  Substituting $\psi(x)=x$ in the recursive equation Eq.~(\ref{eq:ch:pda:fun}) in Proposition~\ref{prop:pda:ch:fun} leads to that
\begin{align*}
\ee _{\PDA}(\ch_{n+1})&=\frac{1}{2n-1}\ee _{\PDA}\big[ (n+2\ch_n-1)\ch_n+(n-2\ch_n)(\ch_n+1)\big] =\frac{n}{2n-1}+\frac{2n-3}{2n-1}\ee _{\PDA}(\ch_n)
\end{align*}
holds for $n>2$. Together with the initial condition $\ee _{\PDA}(\ch_2)=1$, multiplying the both sides of the last difference equation on $\ee _{\PDA}(\ch_n)$ by $2n-1$ and solving it leads to
$$(2n-3)\ee _{\PDA}(\ch_n)=1+\cdots+(n-1)=\frac{n(n-1)}{2}$$ for $n>2$, as required.

\bigskip
For simplicity, let $f(n)=(2n-3)(2n-5)$ and $g(n)=(n-1)(n^2-2n-1)$.
Then
$$
\sum_{k=1}^n g(k)=\sum_{k=1}^n (k^3-3k^2+k+1)=\frac{n(n-1)(n^2-n-4)}{4}.
$$
Next, applying Proposition~\ref{prop:pda:ch:fun} to the function $\psi(x)=x^2$ implies that
\begin{align*}
\ee _{\PDA}(\ch_{n+1}^2) &=\frac{1}{2n-1} \ee _{\PDA} \, \big [\, (n+2\ch_n-1)\ch_n^2+(n-2\ch_n)(\ch_n+1)^2 \big] \\
&=\frac{n}{2n-1}+\frac{2n-2}{2n-1}\ee _{\PDA}(\ch_n)+\frac{2n-5}{2n-1} \ee _{\PDA}(\ch^2_n) \\
&=\frac{g(n+1)}{(2n-1)(2n-3)}+\frac{2n-5}{2n-1} \ee _{\PDA}(\ch^2_n)
\end{align*}
holds for $n>2$. Now multiplying $f(n+1)$ on both sides of the above recursion leads to
$$
f(n)\ee _{\PDA}(\ch_{n}^2)-f(n-1)\ee _{\PDA}(\ch_{n-1}^2)=g(n)
$$
for $n\geq 3$. Since $\ee _{\PDA}(C_2^2)=1=-g(2)$ and $g(1)=0$, we have
$$
(2n-3)(2n-5)\ee _{\PDA}(\ch_n^2)=f(n)\ee _{\PDA}(\ch_n^2)=\sum_{k=1}^n g(k)
=\frac{n(n-1)(n^2-n-4)}{4}
$$
for $n\geq 3$, from which we have 
\begin{align}
\label{eq:ch:u:sec}
\ee _{\PDA}(\ch^2_n)=\frac{n(n-1)(n^2-n-4)}{4(2n-3)(2n-5)}
\end{align}
and hence $\var _{\YHK}(\ch_n)$ follows.
\epf
\end{proof}

Another consequence of the recursion in Proposition~\ref{prop:pda:ch:fun} 
is the following exact formula on cherry distribution for the PDA model, whose proof is a straightforward application of induction and hence omitted here.  

\begin{theorem}
\label{ch:pda}
For $n\geq 2$ and $1\leq k\leq n/2$ we have
\begin{equation}
\label{eq:chDistr:pda}
\pp _{\PDA}(\ch_n=k)=
\frac{n!(n-1)!(n-2)!2^{n-2k}}{(n-2k)!(2n-2)!k!(k-1)!}~.
\end{equation}
\end{theorem}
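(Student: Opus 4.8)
The plan is to prove Eq.~(\ref{eq:chDistr:pda}) by induction on $n$, using the recursion Eq.~(\ref{eq:cherry:pda}) from Proposition~\ref{prop:pda:ch:fun}. First I would verify the base case: for $n=2$ the only phylogenetic tree is a cherry, so $\pp_{\PDA}(\ch_2=1)=1$, and the right-hand side of Eq.~(\ref{eq:chDistr:pda}) with $n=2$, $k=1$ gives $\frac{2!\,1!\,0!\,2^{0}}{0!\,2!\,1!\,0!}=1$, as required. (One may also check $n=3$ directly as a sanity step, since the recursion is stated for $n>2$.)

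Next, assuming Eq.~(\ref{eq:chDistr:pda}) holds for some $n\geq 2$ and all $1\leq k\leq n/2$, I would substitute the closed form into the right-hand side of the recursion
\begin{equation*}
\pp_{\PDA}(\ch_{n+1}=k)=\frac{n+2k-1}{2n-1}\pp_{\PDA}(\ch_n=k)+\frac{n-2k+2}{2n-1}\pp_{\PDA}(\ch_n=k-1),
\end{equation*}
and show the result equals the closed form with $n$ replaced by $n+1$. Writing $P(n,k)$ for the conjectured expression $\frac{n!(n-1)!(n-2)!\,2^{n-2k}}{(n-2k)!(2n-2)!\,k!(k-1)!}$, the key algebraic identity to establish is
\begin{equation*}
\frac{n+2k-1}{2n-1}\,P(n,k)+\frac{n-2k+2}{2n-1}\,P(n,k-1)=P(n+1,k).
\end{equation*}
I would factor out $P(n+1,k)$ from the left-hand side: the ratio $P(n,k)/P(n+1,k)$ simplifies to $\frac{(2n)(2n-1)(n-2k+2)(n-2k+1)}{4n(n-1)(n-2)}=\frac{(2n-1)(n-2k+2)(n-2k+1)}{2(n-1)(n-2)}$ (after cancelling factorials and the power of $2$), and $P(n,k-1)/P(n+1,k)$ simplifies to $\frac{(2n-1)k(k-1)}{2(n-1)(n-2)}\cdot\frac{1}{\text{(appropriate factor)}}$; carrying out these cancellations carefully reduces the identity to a polynomial equation in $n$ and $k$ of low degree, which can be verified by direct expansion.

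The main obstacle, such as it is, will be bookkeeping the factorial ratios accurately — in particular tracking how $(n-2k)!$, $(2n-2)!$, and the powers of $2$ transform when $n\mapsto n+1$ and when $k\mapsto k-1$, and making sure the boundary terms behave (e.g.\ when $k=1$ the term $\pp_{\PDA}(\ch_n=k-1)=\pp_{\PDA}(\ch_n=0)$ vanishes, which is consistent since the coefficient $n-2k+2=n$ multiplies it but $P(n,0)$ is not meant to be used — one checks the $k=1$ case separately or notes the convention makes it work). There is no conceptual difficulty: once the closed form is guessed (as it is here), the induction is a routine, if slightly tedious, verification, which is exactly why the paper states the proof is ``a straightforward application of induction.'' I would present the reduction to the polynomial identity explicitly and then simply assert that expanding both sides confirms it.
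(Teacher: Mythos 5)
Your proposal is correct and takes exactly the route the paper intends (the paper omits the proof as a ``straightforward application of induction'' on the recursion of Proposition~\ref{prop:pda:ch:fun}): the base cases check out and the key identity $\frac{n+2k-1}{2n-1}P(n,k)+\frac{n-2k+2}{2n-1}P(n,k-1)=P(n+1,k)$ is true, reducing after cancellation to $(n+2k-1)(n+1-2k)+4k(k-1)=n^2-1$. Note only that your quoted intermediate ratios contain an arithmetic slip --- in fact $P(n,k)/P(n+1,k)=\frac{(2n-1)(n+1-2k)}{(n+1)(n-1)}$ and $P(n,k-1)/P(n+1,k)=\frac{4(2n-1)k(k-1)}{(n+1)(n-1)(n-2k+2)}$ --- but this does not affect the validity of the plan, and the $k=1$ boundary works as you say because the $P(n,0)$ term carries the factor $k-1=0$ (equivalently $1/(k-2)!=0$ by convention).
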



Interestingly, a similar formula for unrooted trees was obtained by~\citet{hendy1982branch}, that is, the probability that a random tree generated by the PDA model contains exactly $k$ cherries  is 
\begin{equation}
\label{eq:chDistr:unrooted:pda}
\frac{n!(n-2)!(n-4)!2^{n-2k}}{(n-2k)!(2n-4)!k!(k-2)!}
\end{equation}
for $2\leq k \leq n/2$~\citep[see, also][Theorem 4]{McKenzie2000}.
A direct consequence of Theorem \ref{ch:pda} is that the cherry distribution  under the PDA model is log-concave, and hence also unimodal. 

\begin{theorem}
\label{thm:lc:pda}
For $n\geq 2$ and $1<k<n$ we have
\begin{equation}
\label{eq:chDistr:pda:convex}
\pp _{\PDA}(\ch_n=k)^2\geq \pp _{\PDA}(\ch_n=k+1)\pp _{\PDA}(\ch_n=k-1).
\end{equation}
Moreover, let 
$
\Delta(n)=\frac{(n+1)(n+2)}{2(2n+1)}.
$
Then 
$$
\pp _{\PDA}(\ch_n=k-1) < \pp _{\PDA}(\ch_n=k)
~\mbox{for $1<k <  \Delta(n)$, and }
\pp _{\PDA}(\ch_n=k) > \pp _{\PDA}(\ch_n=k+1)
~\mbox{for $\Delta(n)\leq k <n/2$}.
$$
\end{theorem}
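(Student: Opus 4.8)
The plan is to argue directly from the closed-form expression for $\pp_{\PDA}(\ch_n=k)$ provided by Theorem~\ref{ch:pda}, since both assertions reduce to elementary facts about a single rational function of $k$. Write $p_k:=\pp_{\PDA}(\ch_n=k)$. The first step is to observe that the consecutive ratio telescopes: almost every factorial in Eq.~(\ref{eq:chDistr:pda}) cancels, and one is left with
\begin{equation*}
\frac{p_{k+1}}{p_k}=\frac{(n-2k)(n-2k-1)}{4k(k+1)},
\end{equation*}
valid for $1\le k$ with $k+1\le n/2$, and consistent with the convention $p_j=0$ for $j>n/2$ (the factor $n-2k-1$ vanishing precisely at the boundary). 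I would verify this identity first, as everything else hangs on it.

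For the log-concavity inequality Eq.~(\ref{eq:chDistr:pda:convex}), first isolate the trivial indices: if $k+1>n/2$ then $p_{k+1}=0$ and the inequality reads $p_k^2\ge 0$; and $p_{k-1}=0$ cannot occur since $k>1$. For every remaining $k$ the three quantities $p_{k-1},p_k,p_{k+1}$ are positive, and $p_k^2\ge p_{k-1}p_{k+1}$ is equivalent to $p_k/p_{k-1}\ge p_{k+1}/p_k$, i.e.\ to
\begin{equation*}
(n-2k+2)(n-2k+1)(k+1)\ \ge\ (n-2k)(n-2k-1)(k-1).
\end{equation*}
Here $k+1\le n/2$ makes every factor positive, and each factor on the left exceeds the corresponding factor on the right, so the inequality holds term by term. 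This in fact shows $p_{k+1}/p_k$ is strictly decreasing on the support, the monotone-ratio form of log-concavity.

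For the change point, I would solve the two monotonicity inequalities explicitly using the ratio formula. The inequality $p_{k-1}<p_k$ is equivalent to $(n-2k+2)(n-2k+1)>4(k-1)k$, which after expanding and collecting terms becomes $(n+1)(n+2)>2k(2n+1)$, that is $k<\Delta(n)$; this is the first displayed claim. Similarly $p_k>p_{k+1}$ is equivalent to $(n-2k)(n-2k-1)<4k(k+1)$, which simplifies to $k>\tfrac{n(n-1)}{2(2n+1)}$. Since $(n+1)(n+2)>n(n-1)$ we have $\Delta(n)>\tfrac{n(n-1)}{2(2n+1)}$, so the hypothesis $k\ge\Delta(n)$ already forces $k>\tfrac{n(n-1)}{2(2n+1)}$ and hence $p_k>p_{k+1}$; this is the second displayed claim.

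All the computations are routine. The only points needing care are keeping track of which indices lie in the support $\{1,\dots,\lfloor n/2\rfloor\}$, so that the factor-by-factor comparison is legitimate and the genuinely trivial cases are correctly set aside, and not dropping a sign when clearing denominators in the two quadratic rearrangements; I do not expect any real obstacle beyond this bookkeeping.
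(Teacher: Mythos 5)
Your proposal is correct and follows essentially the same route as the paper: both arguments start from the closed formula of Theorem~\ref{ch:pda}, reduce everything to the consecutive ratio (the paper's $g(k,n)=\pp_{\PDA}(\ch_n=k-1)/\pp_{\PDA}(\ch_n=k)=\tfrac{4k(k-1)}{(n-2k+1)(n-2k+2)}$, the reciprocal of your $p_{k+1}/p_k$), and obtain log-concavity from the monotonicity of that ratio and the change point from comparing it with $1$, which yields the threshold $\Delta(n)$. Your handling of the boundary indices and the explicit second threshold $n(n-1)/(2(2n+1))$ only make explicit what the paper leaves implicit; no substantive difference.
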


\begin{proof}
Since $\pp _{\PDA}(\ch_n=k)=0$ for $k>n/2$, the theorem clearly holds for $k\geq n/2-1$. Hence in the remainder of the proof we may assume  $k< (n-2)/2$. Now by Theorem~\ref{ch:pda} we have
\begin{align}
\frac{\pp _{\PDA}(\ch_n=k-1)}{\pp _{\PDA}(\ch_n=k)}
=\frac{4k(k-1)}{(n-2k+1)(n-2k+2)} :=g(k,n). \label{eq:ch:pda:inc}
\end{align}
Considering the function 
$g(k,n)$ defined in Eq.~(\ref{eq:ch:pda:inc}),
 then $g(k+1,n)> g(k,n)$ holds for $1<k<(n-2)/2$. This, together with Eq.~(\ref{eq:ch:pda:inc}), completes the proof of Eq.~(\ref{eq:chDistr:pda:convex}).

The second part of the theorem follows from the observation that $g(k,n)> 1$ if and only if $k\geq \Delta(n)$. 
\epf
\end{proof}

Now we apply Theorem~\ref{thm:pda:fun} to study pitchfork distribution, and the joint distribution between pitchforks and cherries under the PDA model.
Note that the mean and the variance of pitchfork distributions under this model were also derived by~\citet[Proposition 5]{chang2010limit}.
Since the proof is similar to that in Corollary~\ref{cor:pda:ch}, we only outline the main steps used here.

\begin{proposition}
\label{prop:pda:distr}
For $n\geq 3$ we have 
\begin{align}
\ee _{\PDA}(\pf_n)&=\frac{n(n-1)(n-2)}{2(2n-3)(2n-5)} \sim\frac{n}{8}, \label{eq:pda:ch:mean}\\
{\cov_{\PDA}}(\pf_n,\ch_n)&=\frac{-n(n-1)(n-2)(n-3)}{2(2n-3)^2(2n-5)(2n-7)}
\sim -\frac{n}{32},\label{eq:pda:ch:pf:cov}\\
\var_{\PDA}(\pf_n)&=\frac{3n(n-1)(n-2)(n-3)(4n^3-40n^2+123n-110)}{4(2n-3)^2(2n-5)^2(2n-7)(2n-9)}
\sim
\frac{3n}{64}.
\label{eq:pda:pf:mean}
\end{align}

\end{proposition}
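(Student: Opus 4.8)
The plan is to obtain all three moments by applying the functional recursion in Theorem~\ref{thm:pda:fun} to the monomials $\varphi(x,y)=x$, $\varphi(x,y)=xy$ and $\varphi(x,y)=x^2$, in that order, reducing each time to a first-order linear difference equation that can be solved by the same ``integrating factor'' trick used in the proof of Corollary~\ref{cor:pda:ch}. Throughout I will treat the product $(2n-3)(2n-5)\cdots$ appearing in the denominators as the natural summation factor: multiplying a recursion of the shape $h(n+1)=\frac{2n-2j+1}{2n-1}h(n)+r(n)$ through by a suitable falling product of odd numbers telescopes it into $\widetilde h(n)-\widetilde h(n-1)=\widetilde r(n)$, whence $\widetilde h(n)=\sum_{k} \widetilde r(k)$ is evaluated by Faulhaber's formulae.

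\medskip
\noindent\textbf{Step 1 (mean of $\pf_n$).} Apply Theorem~\ref{thm:pda:fun} with $\varphi(x,y)=x$. The four terms contribute $\ee_{\PDA}$ of $(n+3\pf_n-\ch_n-1)\pf_n$, $\pf_n(\pf_n-1)$, $3(\ch_n-\pf_n)(\pf_n+1)$ and $(n-\pf_n-2\ch_n)\pf_n$; after cancellation the $\pf_n^2$ and $\pf_n\ch_n$ terms vanish and one is left with
\begin{equation*}
\ee_{\PDA}(\pf_{n+1})=\frac{2n-7}{2n-1}\ee_{\PDA}(\pf_n)+\frac{3}{2n-1}\ee_{\PDA}(\ch_n),
\end{equation*}
so substituting $\ee_{\PDA}(\ch_n)=\frac{n(n-1)}{2(2n-3)}$ from Corollary~\ref{cor:pda:ch}, multiplying by $(2n-3)(2n-5)(2n-7)$ and telescoping gives \eqref{eq:pda:ch:mean}. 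The asymptotics $\sim n/8$ is then immediate.

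\medskip
\noindent\textbf{Step 2 (covariance).} Apply Theorem~\ref{thm:pda:fun} with $\varphi(x,y)=xy$. Expanding $(\pf_n-1)(\ch_n+1)$, $(\pf_n+1)\ch_n$ and $\pf_n(\ch_n+1)$ and collecting terms, the second moments $\pf_n^2$ and $\ch_n^2$ survive with controllable coefficients while $\pf_n\ch_n$ appears with coefficient $\tfrac{2n-9}{2n-1}$; using $\ee_{\PDA}(\ch_n^2)$ from \eqref{eq:ch:u:sec}, $\ee_{\PDA}(\ch_n)$ from Corollary~\ref{cor:pda:ch}, $\ee_{\PDA}(\pf_n)$ from Step~1, and an expression for $\ee_{\PDA}(\pf_n^2)$ (obtained in Step~3, or one may run Steps~2 and~3 simultaneously), one gets a first-order recursion for $\ee_{\PDA}(\pf_n\ch_n)$, hence for $\cov_{\PDA}(\pf_n,\ch_n)=\ee_{\PDA}(\pf_n\ch_n)-\ee_{\PDA}(\pf_n)\ee_{\PDA}(\ch_n)$. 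The integrating factor is now $(2n-3)^2(2n-5)(2n-7)$ (the square reflecting that both $\ee_{\PDA}(\ch_n)$ and the recursion coefficient contribute factors of $2n-3$); telescoping and simplifying yields \eqref{eq:pda:ch:pf:cov}.

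\medskip
\noindent\textbf{Step 3 (variance of $\pf_n$).} Apply Theorem~\ref{thm:pda:fun} with $\varphi(x,y)=x^2$; expanding $(\pf_n-1)^2$ and $(\pf_n+1)^2$ leaves $\ee_{\PDA}(\pf_n^2)$ with coefficient $\tfrac{2n-11}{2n-1}$ and lower-order terms involving $\ee_{\PDA}(\pf_n)$, $\ee_{\PDA}(\pf_n\ch_n)$ and $\ee_{\PDA}(\ch_n)$, all already known from the earlier steps and Corollary~\ref{cor:pda:ch}. Multiplying through by $(2n-3)^2(2n-5)^2(2n-7)(2n-9)$ telescopes the recursion; after evaluating the resulting sum with Faulhaber's formulae and subtracting $\ee_{\PDA}(\pf_n)^2$ one obtains $\var_{\PDA}(\pf_n)$ in the stated form \eqref{eq:pda:pf:mean}, with leading term $3n/64$. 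Since the proof is stated to be only an outline, I would present the recursions and the summation factors and leave the polynomial bookkeeping to the reader, exactly as the paper does for Proposition~\ref{prop:yule:distr}.

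\medskip
\noindent\textbf{Main obstacle.} The conceptual content is light — everything reduces to solving three nested linear difference equations — but the genuine difficulty is the algebraic bookkeeping: the numerators in \eqref{eq:pda:ch:pf:cov} and especially \eqref{eq:pda:pf:mean} are degree-four and degree-seven polynomials, and an arithmetic slip in expanding $(n+3\pf_n-\ch_n-1)\pf_n$ or in the Faulhaber summations propagates silently. The safeguards are (i) checking that the $\pf_n^2$/$\pf_n\ch_n$ cross-terms cancel as they must for the recursion to stay first-order, (ii) verifying the closed forms against the small-$n$ values computed directly from the dynamic-programming matrix after Theorem~\ref{thm:pda:pf}, and (iii) confirming the claimed asymptotic equivalents, which pin down the leading coefficients independently. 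One should also note the range restriction $n\geq 3$ (the recursion in Theorem~\ref{thm:pda:fun} requires $n>3$, so the base cases $n=3$ and the relevant small cases must be verified by hand, just as the variance formulae in the YHK section only hold from $n\geq 5$ or $n\geq 7$).
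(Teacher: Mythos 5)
Your overall strategy is exactly the paper's: apply Theorem~\ref{thm:pda:fun} to $\varphi(x,y)=x$, $xy$ and $x^2$, reduce to first-order linear recursions, and solve them by summation factors as in Corollary~\ref{cor:pda:ch}. However, the three recursions you actually write down are incorrect, and a computation that follows them literally would not reproduce the stated formulae. For $\varphi(x,y)=x$ the four contributions sum to
\[
(n+3a-b-1)a+a(a-1)+3(b-a)(a+1)+(n-a-2b)a=(2n-5)a+3b,
\]
so the correct recursion is $\ee_{\PDA}(\pf_{n+1})=\frac{2n-5}{2n-1}\,\ee_{\PDA}(\pf_n)+\frac{3}{2n-1}\,\ee_{\PDA}(\ch_n)$, with coefficient $\frac{2n-5}{2n-1}$, not $\frac{2n-7}{2n-1}$; likewise the surviving coefficients for $\varphi=xy$ and $\varphi=x^2$ are $\frac{2n-7}{2n-1}$ and $\frac{2n-9}{2n-1}$, not $\frac{2n-9}{2n-1}$ and $\frac{2n-11}{2n-1}$. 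These are not harmless bookkeeping slips: your Step-1 recursion forces $\ee_{\PDA}(\pf_n)\sim 3n/32$ rather than $n/8$, so the telescoping contradicts \eqref{eq:pda:ch:mean}, and the same shift corrupts Steps 2 and 3; your own safeguards (small-$n$ checks, asymptotics) would flag the discrepancy, but the recursions themselves would then have to be rederived.

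There is also a structural error in Step 2: for $\varphi(x,y)=xy$ the terms in $\pf_n^2$ (together with the cubic terms $\pf_n^2\ch_n$ and $\pf_n\ch_n^2$) cancel completely, leaving
\[
\ee_{\PDA}(\pf_{n+1}\ch_{n+1})=\frac{1}{2n-1}\Big[(2n-7)\,\ee_{\PDA}(\pf_n\ch_n)+3\,\ee_{\PDA}(\ch_n^2)+(n-1)\,\ee_{\PDA}(\pf_n)\Big],
\]
so $\ee_{\PDA}(\pf_n^2)$ is not needed for the covariance, and Steps 2 and 3 need not (and should not) be run simultaneously: the moment computations are triangular, exactly as in the paper, where this recursion (inhomogeneous term $n(n-1)(5n^2-9n-8)/\big(4(2n-1)(2n-3)(2n-5)\big)$) is solved first to give $\ee_{\PDA}(\pf_n\ch_n)=\frac{n(n-1)(n-2)(n^2-3n-2)}{4(2n-3)(2n-5)(2n-7)}$, and only then is the $x^2$-recursion with coefficient $\frac{2n-9}{2n-1}$ solved for $\ee_{\PDA}(\pf_n^2)$. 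A smaller inaccuracy: the summation factor for the $xy$-recursion is $(2n-3)(2n-5)(2n-7)$, matching the denominator of $\ee_{\PDA}(\pf_n\ch_n)$; the factor $(2n-3)^2$ in \eqref{eq:pda:ch:pf:cov} appears only after subtracting $\ee_{\PDA}(\pf_n)\ee_{\PDA}(\ch_n)$, and similarly the squared factors in \eqref{eq:pda:pf:mean} come from subtracting $\ee_{\PDA}(\pf_n)^2$, not from the recursion itself.
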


\begin{proof}
Applying Theorem~\ref{thm:pda:fun} to the function $\varphi(x,y)=x$ and using Corollary~\ref{cor:pda:ch}, we have
\begin{align*}
 \ee _{\PDA}(\pf_{n+1}) 
=\frac{3n(n-1)}{2(2n-1)(2n-3)}+\frac{2n-5}{2n-1}\ee _{\PDA}(A_n)
\end{align*}
for $n>2$. Now Eq.~(\ref{eq:pda:ch:mean}) follows by solving the last recursion with an approach similar to that in Corollary~\ref{cor:pda:ch}.\\

To this end, applying Theorem~\ref{thm:pda:fun} to the function $\varphi(x,y)=xy$ implies that
\begin{align*}
\ee _{\PDA}(\pf_{n+1}\ch_{n+1}) =\frac{2n-7}{2n-1} \ee _{\PDA} (\pf_n\ch_n)+\frac{n(n-1)(5n^2-9n-8)}{4(2n-1)(2n-3)(2n-5)} 
\end{align*}
holds for $n>2$. Solving this recursion we have\begin{align}
\label{eq:u:mix}
\ee _{\PDA}(\pf_n\ch_n)=\frac{n(n-1)(n-2)(n^2-3n-2)}{4(2n-3)(2n-5)(2n-7)},
\end{align}
from which Eq.~(\ref{eq:pda:ch:pf:cov}) follows. 

Finally, applying Theorem~\ref{thm:pda:fun} to the function $\varphi(x,y)=x^2$ shows that
\begin{align*}
\ee _{\PDA} (\pf^2_{n+1}) =\frac{2n-9}{2n-1} \ee _{\PDA} (\pf^2_n)+\frac{g(n+1)}{4(2n-1)(2n-3)(2n-5)(2n-7)} \\
\end{align*}
holds for $n>2$. 
Solving the above recursion leads to 
$$
\ee _{\PDA}\pf_{n}^2 =  \frac{n(n-1)(n-2)(n^3-4n^2-17n+66)}{4(2n-3)(2n-5)(2n-7)(2n-9)}, 
$$
from which Eq.~(\ref{eq:pda:pf:mean}) follows.
\epf
\end{proof}

We end this section with the following correlation result for the PDA model. 

\begin{corollary}
\label{cor:pda:cor}
For $n \geq 4$ we have
\begin{align}
\label{eq:corr:yule}
\corr _{\PDA}(\pf_n,\ch_n)=-\sqrt{\frac{2(2n-5)(2n-9)}{3(2n-7)(4n^3-40n^2+123n-110)}}
\sim -\frac{1}{\sqrt{3}\,n}. 
\end{align}
 In addition, $\{|\corr _{\PDA}(\pf_n,\ch_n)|\}_{n\geq 4}$ is a decreasing sequence converging to $0$.
\end{corollary}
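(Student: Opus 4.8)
The plan is to prove Corollary~\ref{cor:pda:cor} as a direct computational consequence of Proposition~\ref{prop:pda:distr}. First I would recall the definition of the Pearson correlation coefficient,
\[
\corr_{\PDA}(\pf_n,\ch_n) = \frac{\cov_{\PDA}(\pf_n,\ch_n)}{\sqrt{\var_{\PDA}(\pf_n)\,\var_{\PDA}(\ch_n)}},
\]
which is well defined for $n\ge 4$ since both variances are strictly positive there (the numerator and denominator of the closed-form expressions in Proposition~\ref{prop:pda:distr} and Corollary~\ref{cor:pda:ch} are nonzero). Then I would substitute the three closed forms: $\cov_{\PDA}(\pf_n,\ch_n)$ from Eq.~(\ref{eq:pda:ch:pf:cov}), $\var_{\PDA}(\pf_n)$ from Eq.~(\ref{eq:pda:pf:mean}), and $\var_{\PDA}(\ch_n)$ from Corollary~\ref{cor:pda:ch}, namely $\var_{\PDA}(\ch_n)=\frac{n(n-1)(n-2)(n-3)}{2(2n-3)^2(2n-5)}$. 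The bulk of the work is algebraic simplification: the common factor $n(n-1)(n-2)(n-3)$ appears in all three expressions, and after forming the ratio the polynomial and power-of-two factors cancel extensively, leaving the claimed expression $-\sqrt{\tfrac{2(2n-5)(2n-9)}{3(2n-7)(4n^3-40n^2+123n-110)}}$. I would carry out this cancellation carefully, tracking the sign (negative, inherited from the covariance) and the fact that the denominator factors $(2n-7),(2n-9)$ and the cubic $4n^3-40n^2+123n-110$ are all positive for $n\ge 4$ — a quick check shows the cubic is positive at $n=4$ and has positive derivative there, hence stays positive, so the square root is real.

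For the asymptotic equivalence $\corr_{\PDA}(\pf_n,\ch_n)\sim -\frac{1}{\sqrt 3\,n}$, I would simply note that inside the radical the numerator is $\Theta(n^2)$ (leading term $2\cdot 2n\cdot 2n = 8n^2$) while the denominator is $\Theta(n^4)$ (leading term $3\cdot 2n\cdot 4n^3 = 24n^4$), so the radical is asymptotic to $\sqrt{8n^2/(24n^4)} = 1/(\sqrt 3\,n)$, giving the stated equivalence with the negative sign.

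For the final claim that $\{|\corr_{\PDA}(\pf_n,\ch_n)|\}_{n\ge 4}$ is decreasing and converges to $0$, convergence to $0$ is immediate from the asymptotics. For monotonicity, it suffices to show the function under the radical, $h(n):=\frac{2(2n-5)(2n-9)}{3(2n-7)(4n^3-40n^2+123n-110)}$, is decreasing in $n\ge 4$, since $x\mapsto\sqrt x$ is increasing. The cleanest route is to substitute $m=2n$ (or work directly) and compare $h(n)$ with $h(n+1)$: cross-multiplying reduces the inequality $h(n+1)<h(n)$ to a polynomial inequality in $n$, and I would verify that the resulting polynomial (after expansion) has all positive coefficients for $n\ge 4$, or alternatively bound the degree-leading behaviour and check the finitely many low-order terms. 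Since all the factors involved are positive on $n\ge 4$, the cross-multiplication is sign-preserving and the argument is routine.

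The main obstacle is purely bookkeeping: the variance of the pitchfork distribution has a degree-three polynomial factor $4n^3-40n^2+123n-110$ in its numerator, so verifying both the exact simplification of the correlation and the monotonicity of $h(n)$ requires careful expansion of products of several shifted polynomials. There is no conceptual difficulty — no new probabilistic idea is needed beyond Proposition~\ref{prop:pda:distr} — but the risk of an arithmetic slip is real, so I would double-check the final formula by comparing its small-$n$ values (e.g.\ $n=4,5$) against the ratio of the explicit covariance and variances computed directly from Proposition~\ref{prop:pda:distr}.
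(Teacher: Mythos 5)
Your overall route is the same as the paper's: substitute $\cov_{\PDA}(\pf_n,\ch_n)$, $\var_{\PDA}(\pf_n)$ and $\var_{\PDA}(\ch_n)$ from Proposition~\ref{prop:pda:distr} and Corollary~\ref{cor:pda:ch} into the definition of the correlation, read off the asymptotics from the leading terms, and prove monotonicity by comparing consecutive terms, cross-multiplying down to a polynomial inequality. That is exactly what the paper does (it works with $R(n)=\corr_{\PDA}(\pf_n,\ch_n)^2/\corr_{\PDA}(\pf_{n+1},\ch_{n+1})^2$ and shows $R(n)>1$).

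Two of your supporting claims, however, are false as stated and need patching. First, at $n=4$ the cubic $4n^3-40n^2+123n-110$ equals $-2$ and $2n-9=-1$, so your assertion that these factors are positive for all $n\ge 4$ fails; the quantity under the radical is still positive at $n=4$ only because the two negative factors occur one in the numerator and one in the denominator (indeed $|\corr_{\PDA}(\pf_4,\ch_4)|=1$, as it must be since for $n=4$ there are only two shape classes). Consequently your cross-multiplication step for monotonicity is not sign-preserving at $n=4$; you should treat $n=4$ (or the step from $n=4$ to $n=5$) separately, which is trivial since $|\corr_{\PDA}(\pf_4,\ch_4)|=1$ is the maximal possible value. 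Second, the expanded polynomial you hope has all positive coefficients does not: the paper's difference is $64n^5-944n^4+5408n^3-15048n^2+20436n-10995$, with mixed signs, and the paper handles it by checking $R(n)>1$ numerically for $4\le n\le 15$ and then grouping terms to show positivity for $n>15$. Your fallback option (bound the leading behaviour and check finitely many small cases) is essentially this argument and does close the gap, but the primary plan as written would stall at that point.
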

\begin{proof}
Note first that Eq.~(\ref{eq:corr:yule}) follows from  Corollary~\ref{cor:pda:ch} and Proposition~\ref{prop:pda:distr}. Since the sequence $\{|\corr _{\PDA}(\pf_n,\ch_n)|\}_{n\geq 4}$ clearly approaches $0$, it remains to show that this sequence is decreasing. To this end, it suffices to show that the ratio
$$
R(n)=\frac{\corr _{\PDA}(\pf_n,\ch_n)^2}{\corr _{\PDA}(\pf_{n+1},\ch_{n+1})^2}
$$
is greater than 1 for $n\geq 4$. 
Using Eq.~(\ref{eq:corr:yule}), we have
$$
R(n)=\frac{(2n-5)^2(2n-9)(4(n+1)^3-40(n+1)^2+123(n+1)-110}{(2n-7)^2(2n-3)(4n^3-40n^2+123n-110)}.
$$
By numerical computation, we can check that $R(n)>1$ for $4\leq n \leq 15$, therefore we may assume that $n>15$ in the remainder of the proof. Now denoting the numerator and denominator of $R(n)$ by $R_1(n)$ and $R_2(n)$, respectively, then we have 
\begin{align*}
R_1(n)-R_2(n)&=64n^5-944n^4+5408n^3-15048n^2+20436n-10995 \\
&> 64n^4(n-15)+5408n^3(n-15)+20436(n-15) 
> 0
\end{align*}
for $n>15$. This implies $R(n)=R_1(n)/R_2(n)>1$ for $n>15$, as required. 
\epf
\end{proof}

\section{A comparative study of two models}
\label{sec:comp}

In this section, we compare and contrast the distributional properties of the number of cherries and the pitchforks in random trees generated under the YHK and the PDA models. 

To begin with, note that the recursions in Theorems~\ref{thm:yule:pf} and~\ref{thm:pda:pf} provide us exact computation of the joint distribution, and hence also the marginal distributions, of $\pf_n$ and $\ch_n$ under the two models.  For example, the joint distributions with $n=200$ for the two models are depicted in Fig.~\ref{fig:jd}, which suggests the joint distributions under both models can be well-approximated by bivariate normal distributions.
In addition, when $n=200$, the maximum for the joint distribution under the YHK model is $0.177...$, which is attained at $(\pf_n,\ch_n)=(33,67)$. On the other hand, under the PDA model the maximum is $0.145...$, which is attained at $(\pf_n,\ch_n)=(25,50)$. 
On average, the result below shows that trees generated by the YHK model contain more cherries and more pitchforks than trees of the same size generated by the PDA model.


\begin{proposition}
\label{prop:ch:comp}
For $n> 3$, we have
\begin{align}
\label{eq:ch:mean:comp}
\ee_{\PDA}(C_n)<\ee_{\YHK}(C_n) <\frac{4}{3}\ee_{\PDA}(C_n)
\end{align}
and
\begin{align}
\label{eq:pf:mean:comp}
\ee _{\PDA}(A_n)<\ee _{\YHK}(A_n)<\frac{4}{3}\ee _{\PDA}(A_n).
\end{align}
\end{proposition}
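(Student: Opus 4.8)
The plan is to turn all four inequalities into elementary comparisons of rational functions of $n$, using the closed forms for the means established earlier: $\ee_{\YHK}(\ch_n)=n/3$ (Corollary~\ref{cor:yule:ch}), $\ee_{\YHK}(\pf_n)=n/6$ for $n\ge 4$ (Proposition~\ref{prop:yule:distr}), $\ee_{\PDA}(\ch_n)=\frac{n(n-1)}{2(2n-3)}$ (Corollary~\ref{cor:pda:ch}), and $\ee_{\PDA}(\pf_n)=\frac{n(n-1)(n-2)}{2(2n-3)(2n-5)}$ (Proposition~\ref{prop:pda:distr}). For $n>3$ every one of these quantities and every denominator occurring is strictly positive, so each inequality may be cleared of denominators and reduced to a polynomial inequality in $n$.

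First I would handle the cherry bounds \eqref{eq:ch:mean:comp}. Substituting the two formulas and cancelling the common factor $n$, the left inequality $\ee_{\PDA}(\ch_n)<\ee_{\YHK}(\ch_n)$ becomes $3(n-1)<2(2n-3)$, i.e. $n>3$, and the right inequality $\ee_{\YHK}(\ch_n)<\tfrac43\,\ee_{\PDA}(\ch_n)$ becomes $2n-3<2(n-1)$, which holds for all $n$. Equivalently one can observe that $\ee_{\YHK}(\ch_n)/\ee_{\PDA}(\ch_n)=\frac{2(2n-3)}{3(n-1)}$ is strictly increasing in $n$ and lies strictly between $1$ and $\tfrac43$ for $n>3$. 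This disposes of \eqref{eq:ch:mean:comp}.

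Next I would treat the pitchfork bounds \eqref{eq:pf:mean:comp} in the same style. After cancelling $n$ and clearing the positive factor $(2n-3)(2n-5)$, the right inequality $\ee_{\YHK}(\pf_n)<\tfrac43\,\ee_{\PDA}(\pf_n)$ reduces to $(2n-3)(2n-5)<4(n-1)(n-2)$, i.e. $4n>7$, true for all $n\ge 2$. The left inequality $\ee_{\PDA}(\pf_n)<\ee_{\YHK}(\pf_n)$ reduces to $3(n-1)(n-2)<(2n-3)(2n-5)$, i.e. $n^2-7n+9>0$; since the larger root of $n^2-7n+9$ is $\tfrac{7+\sqrt{13}}{2}\approx 5.30$, this holds for all $n\ge 6$, and the remaining values $n=4,5$ are examined directly from the two closed forms.

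Once the closed-form means are in hand there is no analytic content left and the computation is purely algebraic, so I do not expect a genuine obstacle. The two points that require care are: using each of the four formulas only on the range of $n$ for which the cited result guarantees it (in particular $\ee_{\YHK}(\pf_n)=n/6$ is asserted only for $n\ge 4$), and the fact that the polynomial reduction of the pitchfork lower bound, $n^2-7n+9>0$, is not automatically positive at the smallest admissible values of $n$, so the dominance $\ee_{\YHK}(\pf_n)>\ee_{\PDA}(\pf_n)$ there has to be checked separately rather than folded into the general argument. I expect this small-$n$ bookkeeping for the pitchfork lower bound to be the only subtle step.
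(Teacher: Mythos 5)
Your reduction is exactly the paper's argument in disguise: the paper writes $\ee_{\YHK}(C_n)=\bigl[1+\frac{n-3}{3(n-1)}\bigr]\ee_{\PDA}(C_n)$ and $\ee_{\YHK}(A_n)=\bigl[1+\frac{n^2-7n+9}{3(n-1)(n-2)}\bigr]\ee_{\PDA}(A_n)$ and reads off the bounds, which is the same ratio computation you perform after clearing denominators. Your treatment of \eqref{eq:ch:mean:comp} is correct and complete.

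The genuine problem is the final step of your pitchfork lower bound. You correctly reduce $\ee_{\PDA}(A_n)<\ee_{\YHK}(A_n)$ to $n^2-7n+9>0$, valid only for $n\ge 6$, and then assert that $n=4,5$ "are examined directly from the two closed forms." That examination does not succeed: $\ee_{\YHK}(A_4)=\tfrac{2}{3}<\tfrac{4}{5}=\ee_{\PDA}(A_4)$ and $\ee_{\YHK}(A_5)=\tfrac{5}{6}<\tfrac{6}{7}=\ee_{\PDA}(A_5)$, so the left inequality in \eqref{eq:pf:mean:comp} is reversed at $n=4$ and $n=5$, and no bookkeeping can rescue it there. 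In other words, your calculation has actually uncovered a defect in the statement itself (and in the paper's one-line deduction, whose correction term $\frac{n^2-7n+9}{3(n-1)(n-2)}$ is negative at $n=4,5$): the pitchfork comparison \eqref{eq:pf:mean:comp} holds only for $n\ge 6$, consistent with the fact that Proposition~\ref{prop:yule:distr} is stated for $n\ge 7$ anyway. You should either restrict \eqref{eq:pf:mean:comp} to $n\ge 6$ (or $n\ge 7$) or flag the failure at $n=4,5$ explicitly, rather than claiming the small cases check out.
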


\begin{proof}
By Corollaries~\ref{cor:yule:ch} and~\ref{cor:pda:ch} we have
$$
\ee_{\YHK}(C_n) = \left[1 + \frac{n-3}{3(n-1)} \right ] \ee_{\PDA}(C_n), 
$$
from which Eq.~(\ref{eq:ch:mean:comp}) follows.
Similarly, by Propositions~\ref{prop:yule:distr} and~\ref{prop:pda:distr} we have 
$$
\ee _{\YHK}(A_n) =\left[1+ \frac{n^2-7n +9}{3(n-1)(n-2)}\right] \ee _{\PDA}(A_n),
$$
from which Eq.~(\ref{eq:pf:mean:comp}) follows. 
\epf
\end{proof}

Next, we study the variances of cherry and pitchfork distributions under the two models. 

\begin{proposition}
For $n> 5$, we have 
$$\frac{32}{45}\var _{\YHK}(C_n)<\var _{\PDA}(C_n)< \frac{49}{54} \var _{\YHK}(C_n).$$
\end{proposition}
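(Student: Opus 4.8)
The plan is to obtain an exact asymptotic-free expression relating $\var_{\YHK}(C_n)$ and $\var_{\PDA}(C_n)$, just as was done for the means in Proposition~\ref{prop:ch:comp}, and then to bound the ratio between the two constants $32/45$ and $49/54$ for all $n>5$. First I would recall from Corollary~\ref{cor:yule:ch} that $\var_{\YHK}(C_n)=2n/45$ for $n\geq 5$, and from Corollary~\ref{cor:pda:ch} that
\begin{equation*}
\var_{\PDA}(C_n)=\frac{n(n-1)(n-2)(n-3)}{2(2n-3)^2(2n-5)}.
\end{equation*}
Dividing, the ratio $\var_{\PDA}(C_n)/\var_{\YHK}(C_n)$ equals $\frac{45(n-1)(n-2)(n-3)}{4(2n-3)^2(2n-5)}$, which is a ratio of two explicit cubics in $n$. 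So the proposition reduces to the pair of polynomial inequalities
\begin{equation*}
\frac{32}{45}<\frac{45(n-1)(n-2)(n-3)}{4(2n-3)^2(2n-5)}<\frac{49}{54}\qquad(n>5).
\end{equation*}

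Next I would clear denominators in each of the two inequalities to reduce them to "a certain integer-coefficient cubic in $n$ is positive for $n\geq 6$." For the lower bound, cross-multiplying gives $45\cdot 45\cdot(n-1)(n-2)(n-3)>32\cdot 4\,(2n-3)^2(2n-5)$, i.e. $2025(n-1)(n-2)(n-3)-128(2n-3)^2(2n-5)>0$; for the upper bound, $54\cdot 45\,(n-1)(n-2)(n-3)<49\cdot 4\,(2n-3)^2(2n-5)$, i.e. $196(2n-3)^2(2n-5)-2430(n-1)(n-2)(n-3)>0$. Each of these is, after expansion, a cubic $p(n)=\alpha n^3+\beta n^2+\gamma n+\delta$; I would expand both, check directly that $p(6)>0$, and then argue positivity for all $n\geq 6$ either by observing the leading coefficient is positive and factoring out $(n-6)$-type terms (writing $p(n)=p(6)+(n-6)q(n)$ with $q$ having nonnegative coefficients after a further shift, as in the $R_1(n)-R_2(n)$ estimate in Corollary~\ref{cor:pda:cor}), or by a crude termwise bound of the form $\alpha n^3+\gamma n\geq |\beta|n^2+|\delta|$ valid once $n$ exceeds an explicit threshold, with the finitely many remaining cases $6\leq n\leq n_0$ checked numerically.

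The constants are in fact sharp in the limit: since $\var_{\PDA}(C_n)/\var_{\YHK}(C_n)\to \frac{45}{4}\cdot\frac{1}{8}=\frac{45}{32}$, the lower bound $32/45$ is the exact limiting value of the reciprocal ratio, so no slack is available there and the inequality must be strict but tight — this tells me the lower-bound cubic will have its smallest positive value approached as $n\to\infty$ rather than at $n=6$, so I should check the sign of its leading coefficient carefully and confirm the whole cubic stays positive rather than merely the boundary value. The main obstacle is therefore purely bookkeeping: making sure the expansions of $(n-1)(n-2)(n-3)$ and $(2n-3)^2(2n-5)$ and their integer multiples are done without arithmetic slips, and presenting the positivity of the two resulting cubics cleanly; there is no conceptual difficulty beyond the exact variance formulas already established in Corollaries~\ref{cor:yule:ch} and~\ref{cor:pda:ch}.
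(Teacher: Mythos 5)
Your reduction to explicit polynomial inequalities founders because the upper-bound inequality you propose to verify is false. From Corollaries~\ref{cor:yule:ch} and~\ref{cor:pda:ch} the ratio you form is $\var_{\PDA}(C_n)/\var_{\YHK}(C_n)=\frac{45(n-1)(n-2)(n-3)}{4(2n-3)^2(2n-5)}$, and, as you yourself compute, it tends to $45/32>1$; at $n=6$ it already equals $2700/2268>1$. Hence $\var_{\PDA}(C_n)<\frac{49}{54}\var_{\YHK}(C_n)$, whose constant is below $1$, cannot hold for any $n\geq 6$: your upper-bound cubic $196(2n-3)^2(2n-5)-2430(n-1)(n-2)(n-3)$ has leading coefficient $1568-2430=-862<0$ and is already negative at $n=6$ (it equals $111132-145800=-34668$). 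Your closing remark about sharpness is precisely where the alarm should have sounded: since the ratio tends to $45/32$, the constants $32/45$ and $49/54$ (both less than $1$) can only bracket the \emph{reciprocal} ratio $R_n:=\var_{\YHK}(C_n)/\var_{\PDA}(C_n)$. The displayed statement has the two models transposed relative to the variance formulas; what is actually true (and what the paper proves, consistent with its remark that YHK trees have smaller cherry variance than PDA trees) is $\frac{32}{45}\var_{\PDA}(C_n)<\var_{\YHK}(C_n)<\frac{49}{54}\var_{\PDA}(C_n)$, i.e.\ $32/45<R_n<49/54$.

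The paper's own argument also explains where the two constants come from, which your plan cannot. It writes $R_n=\frac{4(2n-3)^2(2n-5)}{45(n-1)(n-2)(n-3)}$ and shows $R_n/R_{n+1}=\frac{n(2n-3)(2n-5)}{(n-3)(2n-1)^2}=1+\frac{2n+3}{(n-3)(2n-1)^2}>1$, so $R_n$ is strictly decreasing; combined with $\lim_{m\to\infty} R_m=32/45$ and $R_5=49/54$, this yields $32/45<R_n<R_5=49/54$ for $n>5$, one constant being the value at $n=5$ and the other the limit. If you want to keep your cross-multiplication strategy, it can be salvaged only after reorienting it to $R_n$: the two cubics become $2025(n-1)(n-2)(n-3)-128(2n-3)^2(2n-5)$ and $2205(n-1)(n-2)(n-3)-216(2n-3)^2(2n-5)=477n^3-3726n^2+7407n-3510$, both with positive leading coefficient and positive for $n\geq 6$; but the monotonicity argument is shorter, avoids the expansions entirely, and makes the choice of constants transparent.
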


\begin{proof}

Let $R_n = \var _{\YHK}(C_n)/\var _{\PDA}(C_n)$. Then by Corollaries~\ref{cor:yule:ch} and~\ref{cor:pda:ch} we have
$$
R_n = \frac{4(2n-3)^2(2n-5)}{45(n-1)(n-2)(n-3)}. $$
This implies
$$\frac{R_n}{R_{n+1}} = \frac{n(2n-3)(2n-5)}{(n-3)(2n-1)^2} = 1 + \frac{2n+3}{(n-3)(2n-1)^2}>1,$$ 
and hence that $R_n$ is decreasing in $n$. Noting that 
$\lim_{n\to \infty} R_n = \frac{32}{45}$, we have 
$$ \frac{32}{45} < R_n < R_5 = \frac{49}{54} <1,
$$
from which the proposition follows. 
\epf
\end{proof}


\begin{proposition}
\label{prop:pf:comp}
For $n\geq 7$, we have
$$
1.168\,{\var _{\PDA}(A_n)}<\frac{368}{315}{\var _{\PDA}(A_n)}<\var _{\YHK}(A_n)<\frac{8349}{6520}{\var _{\PDA}(A_n)}<1.281\, {\var _{\PDA}(A_n)} .
$$
\end{proposition}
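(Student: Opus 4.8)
The plan is to mimic the proof of the preceding proposition about $\var(C_n)$, setting $R_n = \var_{\YHK}(A_n)/\var_{\PDA}(A_n)$ and showing this ratio is monotone and bounded. First I would write $R_n$ explicitly by dividing the closed form $\var_{\YHK}(A_n) = 23n/420$ from Proposition~\ref{prop:yule:distr} by the rational expression for $\var_{\PDA}(A_n)$ in Eq.~(\ref{eq:pda:pf:mean}), obtaining
\begin{equation*}
R_n = \frac{23n \cdot (2n-3)^2(2n-5)^2(2n-7)(2n-9)}{315\, n(n-1)(n-2)(n-3)(4n^3-40n^2+123n-110)}.
\end{equation*}
Cancelling the factor $n$ and taking the limit as $n\to\infty$ gives $R_\infty = 23\cdot 16\cdot 4 / (315 \cdot 4) = 368/315$, which pins down the lower bound. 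So the strategy is to prove that $R_n$ is strictly decreasing in $n$ for $n \ge 7$ and converges to $368/315$ from above, so that $368/315 < R_n \le R_7$, and then check that $R_7 = 8349/6520$ by direct substitution, and verify the two decimal bounds $368/315 > 1.168$ and $8349/6520 < 1.281$ numerically.

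The key step is the monotonicity. As in the previous proof, I would consider the ratio $R_n / R_{n+1}$, which after the cancellations becomes a ratio of two explicit polynomials in $n$ (each of degree $7$ in the numerator and denominator after clearing the common $n$). I expect $R_n / R_{n+1} > 1$ for all $n \ge 7$; to establish this rigorously I would form the difference of numerator and denominator polynomials, $P(n) := \mathrm{num}(R_n/R_{n+1}) - \mathrm{den}(R_n/R_{n+1})$, and show $P(n) > 0$ for $n \ge 7$. This is a single univariate polynomial inequality; I would handle it exactly the way Corollary~\ref{cor:pda:cor} handles $R_1(n) - R_2(n)$: check the small cases (here $n = 7, \dots, N$ for some modest $N$) by direct numerical computation, and for $n > N$ bound $P(n)$ below by grouping its terms as $\sum c_i n^{a_i}(n - N)$ with $c_i > 0$, which is manifestly positive. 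One should be slightly careful that the relevant range for $\var_{\PDA}(A_n)$ starts at $n \ge 7$ (the hypothesis of Proposition~\ref{prop:pda:distr}), which is exactly why the statement is for $n \ge 7$; I would note that $\var_{\YHK}(A_n) = 23n/420$ also holds from $n \ge 7$ by Proposition~\ref{prop:yule:distr}.

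The main obstacle is purely bookkeeping: expanding the degree-$7$ polynomials cleanly and choosing the cutoff $N$ and the grouping so that the positivity of $P(n)$ for large $n$ is transparent without an enormous hand computation. There is no conceptual difficulty — the argument is a carbon copy of the $\var(C_n)$ comparison and of the monotonicity argument in Corollary~\ref{cor:pda:cor} — but the arithmetic is heavier because the numerator of $\var_{\PDA}(A_n)$ already carries the cubic factor $4n^3 - 40n^2 + 123n - 110$. I would therefore keep the computation organized by working with $R_n/R_{n+1}$ in fully factored form for as long as possible, only expanding at the last moment to extract $P(n)$, and I would relegate the explicit expansion and the small-case table to the verification, stating in the text only that the resulting polynomial inequality holds for $n \ge 7$ by the split-and-group method. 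Finally I would assemble the chain
\begin{equation*}
1.168 < \frac{368}{315} = \lim_{m\to\infty} R_m < R_n \le R_7 = \frac{8349}{6520} < 1.281,
\end{equation*}
which, after dividing through by $\var_{\PDA}(A_n) > 0$, is exactly the claimed inequality.
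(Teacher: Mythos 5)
Your proposal is correct and follows essentially the same route as the paper: both define $R_n=\var_{\YHK}(A_n)/\var_{\PDA}(A_n)$, prove $R_n/R_{n+1}>1$ by an explicit polynomial comparison (the paper writes the ratio as $1+\tfrac{2(24n^3-180n^2+382n-165)}{(n-3)(2n-1)^2(4n^3-40n^2+123n-110)}$, which is positive for $n\geq 7$, rather than your split-and-group scheme, and the quotient is in fact degree $6$ over degree $6$, not $7$), and then conclude via $\lim_m R_m=368/315$ and the value $R_7=8349/6520$. The only caveat, which your write-up shares with the paper, is that at $n=7$ one gets equality $R_7=8349/6520$ rather than the strict middle inequality, so the chain should be stated with $R_n\leq R_7$ as you do.
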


\begin{proof}
Let $R_n = \var _{\YHK}(A_n)/\var _{\PDA}(A_n) $ for $n\geq 7$. Then by Proposition~\ref{prop:yule:distr} and~\ref{prop:pda:distr} we have 
$$ R_n = \frac{23(2n-3)^2(2n-5)^2(2n-7)(2n-9)}{315(n-1)(n-2)(n-3)(4n^3-40n^2+123n -110)}, $$
and hence
\begin{align*}
\frac{R_n}{R_{n+1}} &= \frac{n(2n-5)(2n-9)(4n^3-28n^2+55n-23)}{(n-3)(2n-1)^2(4n^3-40n^2+123n-110)} \\
&= 1+  \frac{2(24n^3-180n^2+382n-165)}{(n-3)(2n-1)^2(4n^3-40n^2+123n-110)}>1.
\end{align*}
Therefore,  $R_n$ is strictly decreasing in $n$ and we have
$$
\frac{\var _{\YHK}(A_n)}{\var _{\PDA}(A_n)} 
=R_n > \lim_{m \to \infty} R_m 
= \frac{23 \cdot 4 \cdot 4\cdot 4 }{315 \cdot 4} =\frac{368}{315}>1.168\,.
$$
This, together with $R_7=\frac{8349}{6520}<1.281$, completes the proof.
\epf
\end{proof}

Proposition~\ref{prop:ch:comp} shows that trees generated by the YHK model have smaller variation in the number of cherries than trees of the same size generated by the PDA model. On the contrary, Proposition~\ref{prop:pf:comp} shows that YHK model generates trees with larger variation in the number of pitchforks than the PDA model does. This is not unexpected as the covariances of cherries and pitchforks are found to be negative by Propositions~\ref{prop:yule:distr} and~\ref{prop:pda:distr}.

Now we present a result concerning the correlation coefficients  between the cherry and the pitchfork distributions under the two models. Since these two distributions are negatively correlated, we will focus on their absolute values. 

\begin{proposition}
For $n\geq 7$, we have
$|\corr _{\YHK}(A_n, C_n)|\geq {|\corr _{\PDA}(A_n, C_n)|}.
$ 
Moreover, $\big\{\frac{\corr _{\PDA}(A_n, C_n)}{\corr _{\YHK}(A_n, C_n)} \big\}_{n\ge 7}$ is a monotonically decreasing sequence with limit $0$.
\end{proposition}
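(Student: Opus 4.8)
The plan is to derive both claims directly from the two correlation formulae already in hand: Corollary~\ref{cor:yule:cor}, which states that $\corr_{\YHK}(A_n,C_n)$ is the constant $-\sqrt{14/69}$ for $n \ge 7$, and Corollary~\ref{cor:pda:cor}, which gives the closed form (\ref{eq:corr:yule}) for $\corr_{\PDA}(A_n,C_n)$ together with the fact that $\{|\corr_{\PDA}(A_n,C_n)|\}_{n\ge 4}$ is decreasing with limit $0$. No fresh recursion or generating-function argument is needed.

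For the inequality $|\corr_{\YHK}(A_n,C_n)| \ge |\corr_{\PDA}(A_n,C_n)|$, I would argue as follows. The left-hand side is independent of $n$, and the right-hand side is nonincreasing in $n$ by Corollary~\ref{cor:pda:cor}; hence it suffices to check the single base case $n=7$. Substituting $n=7$ into (\ref{eq:corr:yule}) gives $\corr_{\PDA}(A_7,C_7)^2 = \tfrac{2\cdot 9\cdot 5}{3\cdot 7\cdot 163} = \tfrac{30}{1141}$, and since $\tfrac{30}{1141} < \tfrac{14}{69}$ and $x \mapsto \sqrt{x}$ is increasing on $[0,\infty)$, we obtain $|\corr_{\PDA}(A_7,C_7)| < \sqrt{14/69} = |\corr_{\YHK}(A_7,C_7)|$. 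Monotonicity in $n$ then propagates the inequality to all $n \ge 7$.

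For the statement about the ratio, the key observation is a sign one: by Propositions~\ref{prop:yule:distr} and~\ref{prop:pda:distr} both covariances are negative, so both correlations are negative, and therefore $\corr_{\PDA}(A_n,C_n)/\corr_{\YHK}(A_n,C_n) = |\corr_{\PDA}(A_n,C_n)| / \sqrt{14/69}$, i.e. a fixed positive scalar multiple of $|\corr_{\PDA}(A_n,C_n)|$. The monotone decrease and the convergence to $0$ of this ratio then follow immediately from the corresponding assertions for $\{|\corr_{\PDA}(A_n,C_n)|\}_{n\ge 4}$ established in Corollary~\ref{cor:pda:cor}.

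Because the substantive work — proving that $|\corr_{\PDA}(A_n,C_n)|$ is decreasing via the polynomial inequality $R_1(n) > R_2(n)$ — was already carried out in the proof of Corollary~\ref{cor:pda:cor}, there is no genuine obstacle remaining. The only points demanding a little care are the sign bookkeeping when forming the quotient of two negative correlations and confirming that the numerics at the base case $n=7$ fall in the claimed direction; everything else is a verbatim appeal to the earlier corollaries.
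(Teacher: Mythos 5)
Your proposal is correct and follows the paper's own route: combine Corollary~\ref{cor:yule:cor} (the YHK correlation is the constant $-\sqrt{14/69}$) with Corollary~\ref{cor:pda:cor} (the PDA correlation's absolute value decreases to $0$), and verify the base case $|\corr_{\PDA}(A_7,C_7)| = \sqrt{30/(7\cdot 163)} < \sqrt{14/69}$; your value $30/1141$ agrees with the paper's. The extra sign bookkeeping you spell out for the ratio is exactly what the paper leaves implicit, so there is nothing further to add.
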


\begin{proof}
This follows from Corollary~\ref{cor:yule:cor} and~\ref{cor:pda:cor} and the observation that \\
$$
|{\corr _{\PDA}}(A_7,C_7)|=\sqrt{\frac{30}{7\times 163}}\leq {|\corr _{\YHK}(A_7,C_7)|}= 
\sqrt{\frac{14}{69}}.
$$
\epf
\end{proof}

\begin{figure}[h]
\begin{center}
{\includegraphics[scale=0.5]{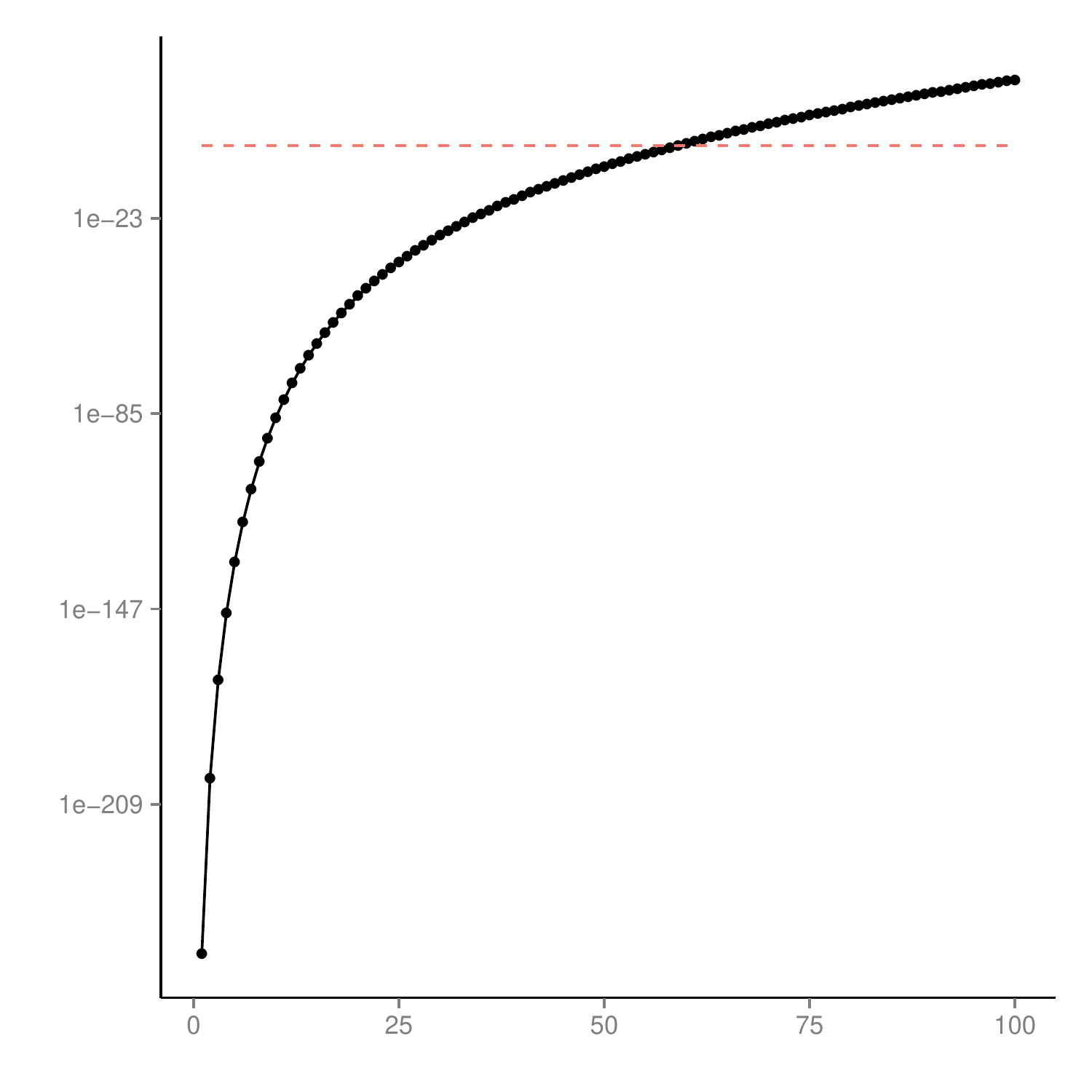}}
\end{center}
\caption{Plot of the ratio ${\pyule(C_n=k)}/{\puni(C_n=k)}$ for $n=200$ and $1\le k \le 100$. Here the $y$-axis is in the logarithmic scale; the dotted line $y=0$ indicates the ratio being $1$.
}
\label{fig:ratio}
\end{figure} 

\bigskip

Proposition~\ref{prop:ch:comp} states that the mean of $\ch_n$ is greater under the YHK model  than the PDA model. 
In the remainder of this section, we shall present a more detailed study on $\ch_n$. Intuitively, it is easy to see that the number of cherries contained in a random tree generated by the YHK model is likely to be greater than that  by the PDA model: firstly, 
by Proposition~\ref{prop:type} we know that the number of cherries in $T[e]$, the phylogenetic tree obtained from $T$ by attaching a new leaf to edge $e$ in $T$, is strictly greater than that in $T$ precisely when $e$ is a pendant edge of $T$;
 secondly,  in the YHK process the edge to which the new leaf is attached is   sampled only from  the pendant edges while in the PDA model that edge is sampled from all possible edges. Indeed, this intuition can also be corroborated  by numerical results.  As an example, considering the ratio of 
$\pyule(C_n=k)/\puni(C_n=k)$ with $n=200$ as depicted in Fig.~\ref{fig:ratio} using a logarithmic scale, then it is clear that the ratio is strictly increasing and is greater than 1 when $k$ is greater than a certain value. 
The following theorem establishes the existence of a unique change point between the two models for $n\geq 4$. Note that a similar phenomenon is shown to hold for clade sizes by~\citet[Theorem 5]{zhu2014clades}.

\begin{theorem}
\label{thm:change}
Suppose $n\ge 3$. The ratio  $\pyule(C_n=k)/\puni(C_n=k)$ is strictly increasing  for $1\leq k \leq n/2$. In particular, there exists a number $\tau_n$ with $1\leq \tau_n\leq n/2$ such that 
$$
\pyule(C_n=k)< \puni(C_n=k)
~~~\mbox{for $1\le k< \tau_n$, and~~}
\pyule(C_n=k)> \puni(C_n=k)
~~~\mbox{for $\tau_n<k\le n/2$.}
$$
\end{theorem}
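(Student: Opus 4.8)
The plan is to establish the monotonicity of the ratio $r_n(k) := \pyule(C_n=k)/\puni(C_n=k)$ in $k$ by induction on $n$, and then deduce the existence of $\tau_n$ from this monotonicity together with the fact that both sequences sum to $1$. The base case $n=3$ is immediate since both distributions are concentrated at $k=1$. For the inductive step, I would write $a_{n,k} := \pyule(C_n=k)$ and $b_{n,k} := \puni(C_n=k)$ and use the two cherry recursions, Eq.~(\ref{eq:cherry:yule}) and Eq.~(\ref{eq:cherry:pda}), to express $a_{n+1,k}$ and $b_{n+1,k}$ as convex-type combinations of $\{a_{n,k-1},a_{n,k}\}$ and $\{b_{n,k-1},b_{n,k}\}$ respectively. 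The goal is then to show $a_{n+1,k}\,b_{n+1,k+1} \ge a_{n+1,k+1}\,b_{n+1,k}$, i.e.\ that $r_{n+1}$ is nondecreasing (and strictly increasing on the relevant range).

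The key algebraic step is to expand the cross-product inequality. Using the recursions, $a_{n+1,k} = \frac{2k}{n}a_{n,k} + \frac{n-2k+2}{n}a_{n,k-1}$ and similarly $b_{n+1,k} = \frac{n+2k-1}{2n-1}b_{n,k} + \frac{n-2k+2}{2n-1}b_{n,k-1}$. Forming $a_{n+1,k}\,b_{n+1,k+1} - a_{n+1,k+1}\,b_{n+1,k}$ and clearing the common denominator $n(2n-1)$, the expression becomes a bilinear form in the pairs $(a_{n,k-1},a_{n,k},a_{n,k+1})$ and $(b_{n,k-1},b_{n,k},b_{n,k+1})$. I would collect terms of the shape $a_{n,i}b_{n,j} - a_{n,j}b_{n,i}$ for $i<j$; each such difference is nonnegative by the inductive hypothesis (since it equals $b_{n,i}b_{n,j}(r_n(i)-r_n(j))$ up to sign, wait — more precisely $a_{n,i}b_{n,j}-a_{n,j}b_{n,i} = b_{n,i}b_{n,j}\bigl(r_n(i)-r_n(j)\bigr) \le 0$ for $i<j$, so I should orient the differences as $a_{n,j}b_{n,i}-a_{n,i}b_{n,j}\ge 0$). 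The coefficients multiplying these differences involve $k$ and $n$ linearly/quadratically; the crux is checking that, after the dust settles, every surviving coefficient has the correct sign on the range $1\le k\le n/2$. I expect this sign-bookkeeping — showing the combination of the three inductive inequalities $a_{n,k}b_{n,k-1}\ge a_{n,k-1}b_{n,k}$, $a_{n,k+1}b_{n,k}\ge a_{n,k}b_{n,k+1}$, $a_{n,k+1}b_{n,k-1}\ge a_{n,k-1}b_{n,k+1}$ with nonnegative coefficients dominates the remaining terms — to be the main obstacle, and I would handle it by carefully grouping so that the PDA-specific term $\frac{2k-1}{2n-1}b_{n,k}$ (the piece that distinguishes the PDA recursion from a rescaled YHK recursion) contributes a manifestly nonnegative amount.

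Once monotonicity of $r_n(\cdot)$ on $\{1,\dots,\lfloor n/2\rfloor\}$ is in hand, the change-point conclusion is routine: since $\sum_{k} a_{n,k} = \sum_k b_{n,k} = 1$ and $a_{n,k}=b_{n,k}=0$ for $k>n/2$, the two probability vectors cannot satisfy $a_{n,k}\le b_{n,k}$ for all $k$ unless they are equal (which fails for $n\ge 4$, e.g.\ by comparing means via Corollary~\ref{cor:yule:ch} and Corollary~\ref{cor:pda:ch}), and likewise cannot satisfy $a_{n,k}\ge b_{n,k}$ for all $k$. Hence $r_n$ takes values both $<1$ and $>1$; being strictly increasing, it crosses the value $1$ exactly once, which defines $\tau_n\in[1,n/2]$ and gives the stated strict inequalities on either side. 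I would also remark that the strictness of the monotonicity (needed so the crossing is unique) follows because at least one of the inductive inequalities is strict, tracing back to the base case and the positivity of the relevant $a_{n,k},b_{n,k}$ on $1\le k\le n/2$.
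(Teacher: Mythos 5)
Your overall strategy (induction on $n$ via the two cherry recursions, then the routine change-point deduction at the end) is the right one, and your final paragraph deducing $\tau_n$ from monotonicity plus the fact that both vectors are probability distributions is fine. But the proof has a genuine gap exactly at the step you defer as ``sign-bookkeeping.'' Writing $a_{n,k}=\pyule(C_n=k)$, $b_{n,k}=\puni(C_n=k)$, the quantity you must show nonnegative is $a_{n+1,k+1}b_{n+1,k}-a_{n+1,k}b_{n+1,k+1}$ (note your stated goal has the opposite orientation, i.e.\ it asserts $r_{n+1}$ is nonincreasing --- a slip, though the parenthetical shows you caught the orientation for the hypothesis). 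Multiplying by $n(2n-1)$ and expanding via the two recursions, the $a_{n,k}b_{n,k-1}$ and $a_{n,k-1}b_{n,k}$ terms pair perfectly, and the $a_{n,k+1}b_{n,k}$ versus $a_{n,k}b_{n,k+1}$ pair leaves a nonnegative surplus $2(n-1)a_{n,k+1}b_{n,k}$; however the pair $a_{n,k+1}b_{n,k-1}$ versus $a_{n,k-1}b_{n,k+1}$ has coefficients $2(k+1)(n-2k+2)$ and $-(n+2k+1)(n-2k+2)$, whose difference is $-(n-1)(n-2k+2)<0$. After extracting the three inductive cross-differences with the largest admissible nonnegative coefficients, the residue is $(n-1)\bigl[(n-2k)a_{n,k}b_{n,k}+2a_{n,k+1}b_{n,k}-(n-2k+2)a_{n,k-1}b_{n,k+1}\bigr]$, and the negative term here is not controlled by the inductive hypothesis alone: the hypothesis only gives $a_{n,k-1}b_{n,k+1}\le a_{n,k+1}b_{n,k-1}$, and even adding log-concavity of the PDA law only yields $a_{n,k-1}b_{n,k+1}\le a_{n,k}b_{n,k}$, which still fails to settle the sign when $k$ is beyond the mode (where $a_{n,k+1}<a_{n,k}$). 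So the expansion does not resolve into nonnegative multiples of the inductive inequalities plus nonnegative leftovers, and your plan to make ``the PDA-specific term contribute a manifestly nonnegative amount'' is exactly the point that needs a new ingredient, not bookkeeping.

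The missing ingredient is quantitative information about the PDA marginal, and this is how the paper proceeds: instead of cross-multiplying both recursions, it compares consecutive ratios, proving by induction on $n$ that $f(k,n)=\pyule(C_n=k-1)/\pyule(C_n=k)$ is at most $g(k,n)=\puni(C_n=k-1)/\puni(C_n=k)=4k(k-1)/\bigl((n-2k+1)(n-2k+2)\bigr)$, where $g$ is known in closed form from Theorem~\ref{ch:pda}. Only the YHK recursion is used in the induction; the PDA side enters through the explicit identity $g(k,m+1)/g(k,m)=(m-2k+1)/(m-2k+3)$, and after substituting the inductive bounds the whole step collapses to the elementary inequality $2+4(k-2)/(m-2k+3)>0$. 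If you want to rescue your version, you would similarly have to inject the explicit ratio of the PDA distribution (for instance, bounding $a_{n,k-1}b_{n,k+1}\le a_{n,k}b_{n,k}\,g(k,n)/g(k+1,n)$ does make the residue positive), at which point you have essentially rederived the paper's argument in a heavier form. As written, the proposal is a plausible plan with its central inequality unproven and, in its naive form, false as a purely coefficientwise statement.
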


\begin{proof}
For simplicity, put $a^k_n=\pyule(C_n=k)$. By Eq.~(\ref{eq:ch:pda:inc}), it suffices to show that 
\begin{align}
\label{eq:ch:ratio}
f(k,n)=:\frac{a^{k-1}_n}{a^k_n} \leq \frac{\puni(C_n=k-1)}{\puni(C_n=k)}= \frac{4k(k-1)}{(n-2k+1)(n-2k+2)}:=g(k,n)
\end{align}
holds for $1<k\leq n/2$. To this end, we shall use induction on $n$. The base case $n=3$ is clear because  $\pyule(C_3=1)=\puni(C_3=1)=1$. 
For induction step, assuming that $f(k,m)< g(k,m)$ holds for a given $m>3$ and all $1\leq k \leq m/2$, it remains to show that $f(k,m+1)\leq g(k,m+1)$ for all $1< k\leq (m+1)/2$. 

Note first that $f(1,m+1)=g(1,m+1)=0$. Now fix $2\leq k\leq (m+1)/2$, then $a_m^{k-1}>0$. Since Proposition~\ref{prop:yule:ch:fun} implies
\begin{align*}
a_{m+1}^{k-1}=\frac{2k-2}{m}a_m^{k-1}+\frac{m-2k+4}{m}a_m^{k-2}
~~\mbox{and}~~
a_{m+1}^{k}=\frac{2k}{m}a_m^{k}+\frac{m-2k+2}{m}a_m^{k-1},
\end{align*}
it follows that Eq.~(\ref{eq:ch:ratio}) is equivalent to  
\begin{align*}
(2k-2)a_m^{k-1}+(m-2k+4)a_m^{k-2} < \big(2ka_m^k+(m-2k+2)a_m^{k-1}\big)\,g(k,m+1).
\end{align*}
Since $a_m^{k-1}>0$, dividing both sides of the last inequality by $a_m^{k-1}$ leads to 
\begin{align}
\label{eq:ch:ratio:alt}
(2k-2)+(m-2k+4)f(k-1,m) <  2k\frac{g(k,m+1)}{f(k,m)}+(m-2k+2)g(k,m+1).
\end{align}
By induction assumption we have $f(k-1,m)\leq g(k-1,m)$ and $0<f(k,m)< g(k,m)$, hence it remains to show that 
\begin{align}
\label{eq:ch:ratio:mid}
(2k-2)+(m-2k+4)g(k-1,m) <  2k\frac{g(k,m+1)}{g(k,m)}+(m-2k+2)g(k,m+1).
\end{align}
To this end, denoting the left  and the right side of Inequality~(\ref{eq:ch:ratio:mid}) by $L(k,m)$ and $R(k,m)$, respectively, then we need to show that $R(k,m)-L(k,m)>0$. Note first that
$$
L(k,m)=(2k-2)+\frac{4(k-1)(k-2)}{m-2k+3}.
$$
On the other hand, since 
$$
\frac{g(k,m+1)}{g(k,m)}=\frac{m-2k+1}{m-2k+3}=1-\frac{2}{m-2k+3},
$$
we have
$$
R(k,m)=2k-\frac{4k}{m-2k+3}+\frac{4k(k-1)}{m-2k+3}.
$$
Therefore, we have
\begin{align*}
R(k,m)-L(k,m) &=2+\frac{4k(k-1)-4k-4(k-1)(k-2)}{m-2k+3} \\
&=2+\frac{4(k-2)}{m-2k+3}>0,
\end{align*}
as required. Here the last inequality follows from $m>3$ and $2\leq k \leq (m+1)/2$.
\epf
\end{proof}

\section{Discussion and Conclusion}
\label{sec:discussion}

Tree shape indices are summary statistics of some aspect of the shape of a phylogenetic tree, particularly the `balance' of a tree. Since the introduction of the first tree shape index by~\citet{Sackin1972}, many such indices have been proposed (see~\citet{mooers97a} for an excellent review and ~\citet{mir2013new} for some recent development).

In this paper we present several results concerning the distributions of cherries and pitchforks under the YHK and PDA models. Our main results include two novel recursive formulae on the joint distributions of cherries and pitchforks under these two models, which enable us to numerically compute their joint probability density functions (and hence also the marginal distributions) for trees of any size numerically. This is relevant because one of the main applications of tree indices  is their use as test statistics to discriminate stochastic models of evolution~\citep[see e.g.][and the references therein]{blum2006random}, 
and we will pursue this application more thoroughly elsewhere.

\begin{figure}[h]
\begin{center}
{\includegraphics[scale=0.5]{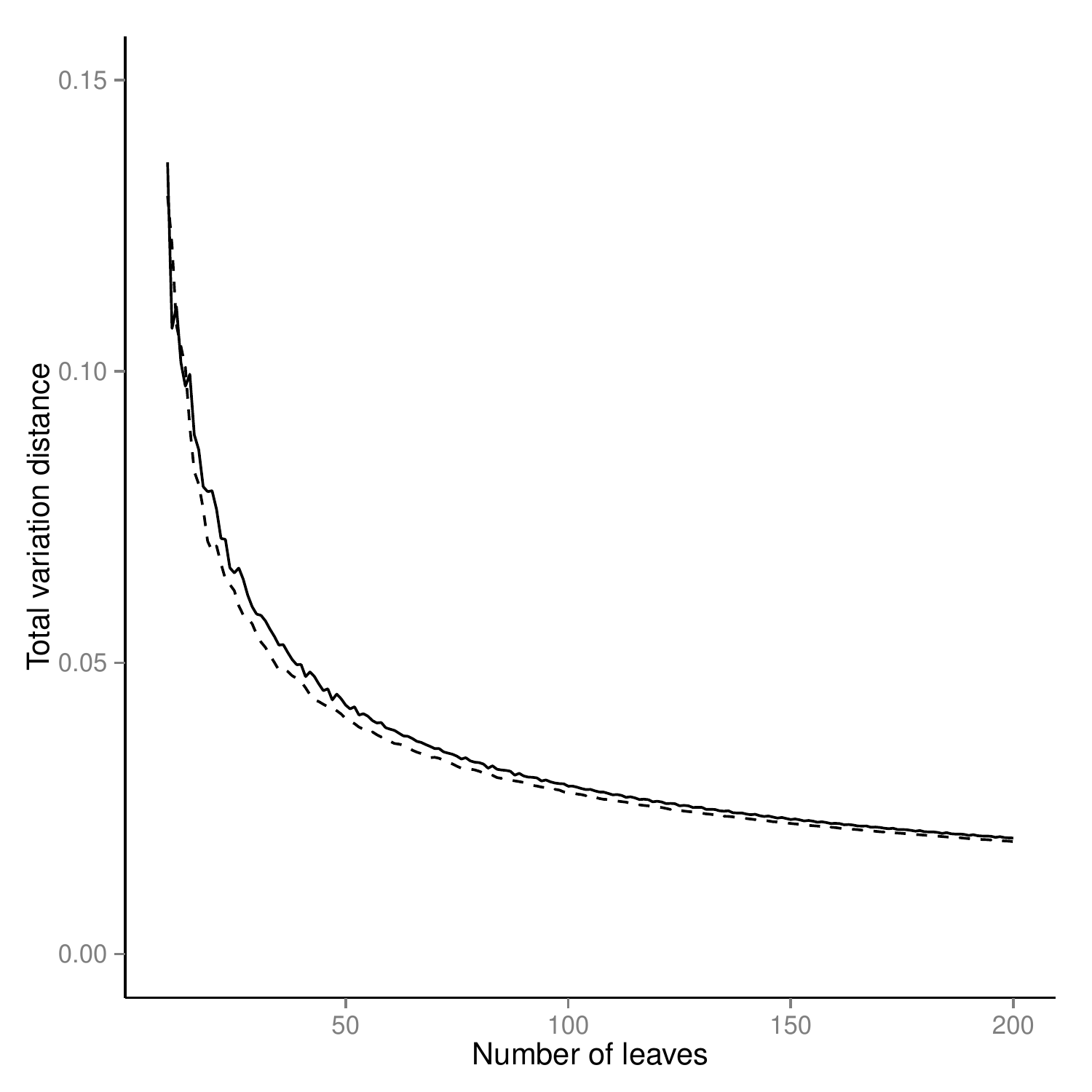}}
\end{center}
\caption{Plot of the total variation distance between the joint distributions of cherry and pitchfork and discretised bivariate normal distributions for $10\leq n \leq 200$ under the YHK model (solid line) and PDA model (dotted line). 
}
\label{fig:app}
\end{figure}

Our numerical results (e.g. Fig.~\ref{fig:app}) indicate 
 that the limiting joint distributions of cherries and pitchforks can be well approximated by bivariate normal distributions. For the YHK model, this was recently confirmed by~\citet{Janson2014}  and it remains open to establish the analogous result for the PDA model. 
In addition, it would be interesting to study the rate of the convergence. 

In this paper we concentrate on rooted trees, but it is of interest to investigate to what extent the results obtained in this paper for rooted trees can be carried over to unrooted trees. 
For example, using Eq.~(\ref{eq:chDistr:unrooted:pda}) and an argument similar to the proof in Theorem~\ref{thm:lc:pda}, it follows that the cherry distribution of unrooted trees under the PDA model is also log-concave, and hence unimodal. However, whether the same property holds for the cherry distribution of unrooted trees  under the YHK model remains open. One challenge is to derive the recursions for unrooted trees as in
Theorem~\ref{thm:yule:pf} or an exact formula as in Theorem~\ref{ch:pda}.

To end this article, we mention several additional questions. For instance, is pitchfork distribution, or other subtree distribution, log-concave? Our numerical calculation suggests the pitchfork distribution is log-concave. 
A related question is whether there also exists a unique change point
for other subtree distribution. Finally,  cherry pattern and pitchfork pattern are closely related to instances of recursive shape index (in the sense of~\citet{matsen2007optimization}), therefore it would also be of interest to see whether some of the properties obtained here can be carried over to some other tree indices as well.

\bigskip
\noindent
{\bf Acknowledgements} 
K.P. Choi acknowledges the support of Singapore Ministry of Education Academic Research Fund R-155-000-147-112.


\begin{thebibliography}{25}
\providecommand{\natexlab}[1]{#1}
\providecommand{\url}[1]{{#1}}
\providecommand{\urlprefix}{URL }
\expandafter\ifx\csname urlstyle\endcsname\relax
  \providecommand{\doi}[1]{DOI~\discretionary{}{}{}#1}\else
  \providecommand{\doi}{DOI~\discretionary{}{}{}\begingroup
  \urlstyle{rm}\Url}\fi
\providecommand{\eprint}[2][]{\url{#2}}

\bibitem[{Aldous(1991)}]{aldous1991asymptotic}
Aldous D (1991) Asymptotic fringe distributions for general families of random
  trees. The Annals of Applied Probability 1:228--266

\bibitem[{Aldous(1996)}]{aldous96a}
Aldous D (1996) Probability distributions on cladograms. In: Aldous D, Pemantle
  R (eds) Random Discrete Structures, The IMA Volumes in Mathematics and its
  Applications, vol~76, Springer-Verlag, pp 1--18

\bibitem[{Aldous(2001)}]{Aldous2001}
Aldous D (2001) Stochastic models and descriptive statistics for phylogenetic
  trees, from {Yule} to today. Statistical Science 16(1):23--34

\bibitem[{Baroni et~al(2005)Baroni, Gr{\"u}newald, Moulton, and
  Semple}]{baroni2005bounding}
Baroni M, Gr{\"u}newald S, Moulton V, Semple C (2005) Bounding the number of
  hybridisation events for a consistent evolutionary history. Journal of
  mathematical biology 51(2):171--182

\bibitem[{Blum and Fran{\c{c}}ois(2006)}]{blum2006random}
Blum MGB, Fran{\c{c}}ois O (2006) Which random processes describe the tree of
  life? {A} large-scale study of phylogenetic tree imbalance. Systematic
  Biology 55(4):685--691

\bibitem[{Chang and Fuchs(2010)}]{chang2010limit}
Chang H, Fuchs M (2010) Limit theorems for patterns in phylogenetic trees.
  Journal of Mathematical Biology 60(4):481--512

\bibitem[{Conway and Guy(1996)}]{con96}
Conway JH, Guy R (1996) The Book of Numbers. Springer

\bibitem[{Disanto and Wiehe(2013)}]{dw13}
Filippo D, Wiehe T (2013) Exact enumeration of cherries and pitchforks in ranked trees under the coalescent model. Mathematical Biosciences 242(2):195--200.

\bibitem[{Hagen et~al(2015)Hagen, Hartmann, Steel, and Stadler}]{hagen2015age}
Hagen O, Hartmann K, Steel M, Stadler T (2015) Age-dependent speciation can
  explain the shape of empirical phylogenies. Systematic Biology 64(3):432--440

\bibitem[{Harding(1971)}]{harding71a}
Harding EF (1971) The probabilities of rooted tree-shapes generated by random
  bifurcation. Advances in Applied Probability 3(1):44--77

\bibitem[{Heard(1992)}]{heard92a}
Heard SB (1992) Patterns in tree balance among cladistic, phenetic, and
  randomly generated phylogenetic trees. Evolution 46(6):1818--1826

\bibitem[{Hendy and Penny(1982)}]{hendy1982branch}
Hendy M, Penny D (1982) Branch and bound algorithms to determine minimal
  evolutionary trees. Mathematical Biosciences 59(2):277--290

\bibitem[{Holmgren and Janson(2015)}]{Janson2014}
Holmgren C, Janson S (2015) Limit laws for functions of fringe trees for binary
  search trees and recursive trees. Electronic Journal of Probability 20:1--51

\bibitem[{Kingman(1982)}]{Kingman1982}
Kingman J (1982) On the genealogy of large populations. Journal of Applied
  Probability 19:27--43

\bibitem[{Matsen(2007)}]{matsen2007optimization}
Matsen F (2007) Optimization over a class of tree shape statistics. IEEE/ACM
  Transactions on Computational Biology and Bioinformatics (TCBB) 4(3):506--512

\bibitem[{McKenzie and Steel(2000)}]{McKenzie2000}
McKenzie A, Steel MA (2000) Distributions of cherries for two models of trees.
  Mathematical Biosciences 164:81--92

\bibitem[{Mir et~al(2013)Mir, Rossell{\'o}, and Rotger}]{mir2013new}
Mir A, Rossell{\'o} F, Rotger L (2013) A new balance index for phylogenetic
  trees. Mathematical Biosciences 241(1):125--136

\bibitem[{Mooers and Heard(1997)}]{mooers97a}
Mooers AO, Heard SB (1997) Evolutionary process from phylogenetic tree shape.
  Quarterly Review of Biology 72:31--54

\bibitem[{Nordborg(2001)}]{nordborg01a}
Nordborg M (2001) Coalescent theory. In: Balding DJ, Bishop M, Cannings C (eds)
  Handbook of Statistical Genetics, Wiley, Chichester, UK, chap~7, pp 179--212

\bibitem[{Purvis et~al(2011)Purvis, Fritz, Rodr{\'\i}guez, Harvey, and
  Grenyer}]{purvis2011shape}
Purvis A, Fritz SA, Rodr{\'\i}guez J, Harvey PH, Grenyer R (2011) The shape of
  mammalian phylogeny: patterns, processes and scales. Philosophical
  Transactions of the Royal Society B: Biological Sciences 366(1577):2462--2477

\bibitem[{Rosenberg(2006)}]{rosenberg06a}
Rosenberg NA (2006) The mean and variance of the numbers of r-pronged nodes and
  r-caterpillars in {Yule}-generated genealogical trees. Annals of
  Combinatorics 10:129--146

\bibitem[{Sackin(1972)}]{Sackin1972}
Sackin MJ (1972) ``{Good}" and ``{bad}" phenograms. Systematic Zoology
  21(2):225--226

\bibitem[{Semple and Steel(2003)}]{semple03a}
Semple C, Steel MA (2003) Phylogenetics. Oxford University Press, Oxford, UK

\bibitem[{Yule(1925)}]{yule25a}
Yule GU (1925) A mathematical theory of evolution: based on the conclusions of
  {Dr. J.C. Willis, F.R.S}. In: Philosophical Transactions of the Royal Society
  of London. Series B, Containing Papers of a Biological Character, vol 213,
  The Royal Society, pp 21--87

\bibitem[{Zhu et~al(2011)Zhu, Degnan, and Steel}]{zhu11a}
Zhu S, Degnan JH, Steel MA (2011) Clades, clans and reciprocal monophyly under
  neutral evolutionary models. Theoretical Population Biology 79:220--227

\bibitem[{Zhu et~al(2015)Zhu, Than, and Wu}]{zhu2014clades}
Zhu S, Than C, Wu T (2015) Clades and clans: A comparison study of two
  evolutionary models. Journal of Mathematical Biology 71:99--124

\end{thebibliography}

 \newcommand{\noop}[1]{}

\end{document}